\documentclass{article}
\usepackage{kr}

\usepackage[table]{xcolor}
\usepackage{thmtools}
\usepackage{thm-restate}

\usepackage{times}
\usepackage{soul}
\usepackage{cancel}
\usepackage[small]{caption}
\usepackage{booktabs}
\usepackage{algorithm}
\usepackage{algorithmic}
\usepackage{xspace}

\usepackage{eucal}
\usepackage{multicol}

\usepackage[textwidth=1.5cm]{todonotes}
\usepackage[utf8]{inputenc} %
\usepackage[T1]{fontenc}    %
\usepackage{hyperref}       %
\usepackage{url}            %
\usepackage{booktabs}       %
\usepackage{amsfonts}       %
\usepackage{nicefrac}       %
\usepackage{microtype}      %
\usepackage{lipsum}
\usepackage{fancyhdr}       %
\usepackage{graphicx}       %
\graphicspath{{media/}}     %
\usepackage{wrapfig}        %

\usepackage{complexity}
\mathchardef\mhyphen="2D %
\usepackage{arydshln}
\usepackage{xcolor}
\usepackage[inline]{enumitem}
\usepackage{bigstrut}
\usepackage{comment}
\usepackage{tikz}
\usepackage{adjustbox}
\usetikzlibrary{positioning} 
\usepackage{stmaryrd}
\usepackage{amssymb,amsmath,amsthm,mdsymbol}
\usepackage{tabularx}
\newtheorem{fact}{Fact}

\usepackage[switch,mathlines]{lineno}
\usepackage{etoolbox}

\usepackage{cleveref}

\newcommand*\linenomathpatch[1]{%
  \cspreto{#1}{\linenomath}%
  \cspreto{#1*}{\linenomath}%
  \csappto{end#1}{\endlinenomath}%
  \csappto{end#1*}{\endlinenomath}%
}
\newcommand*\linenomathpatchAMS[1]{%
  \cspreto{#1}{\linenomathAMS}%
  \cspreto{#1*}{\linenomathAMS}%
  \csappto{end#1}{\endlinenomath}%
  \csappto{end#1*}{\endlinenomath}%
}

\expandafter\ifx\linenomath\linenomathWithnumbers
  \let\linenomathAMS\linenomathWithnumbers
  \patchcmd\linenomathAMS{\advance\postdisplaypenalty\linenopenalty}{}{}{}
\else
  \let\linenomathAMS\linenomathNonumbers
\fi

\linenomathpatch{equation}
\linenomathpatchAMS{gather}
\linenomathpatchAMS{multline}
\linenomathpatchAMS{align}
\linenomathpatchAMS{alignat}
\linenomathpatchAMS{flalign}

\makeatletter
\patchcmd{\mmeasure@}{\measuring@true}{
  \measuring@true
  \ifnum-\linenopenaltypar>\interdisplaylinepenalty
    \advance\interdisplaylinepenalty-\linenopenalty
  \fi
  }{}{}
\makeatother

\newtheorem{theorem}{Theorem}
\newtheorem{corollary}[theorem]{Corollary}
\newtheorem{lemma}[theorem]{Lemma}

\newtheorem{proposition}[theorem]{Proposition}
\newtheorem{claim}[theorem]{Claim}

\theoremstyle{definition} %
\newtheorem{definition}[theorem]{Definition}
\newtheorem{example}[theorem]{Example}

\newcommand{\chasenext}[3]{#1 \xrightarrow{\ #3\ } #2}
\newcommand{\Rp}{\ensuremath{\mathbb{R}_{\geq 0}}\xspace}%
\newcommand{\Att}{\mathrm{Att}}

\newcommand{\momodels}[1]{\models_{#1}}
\newcommand{\bmomodels}[1]{\models^{\rm b}_{#1}}
\newcommand{\relvdash}[1]{\vdash_{#1}}
\newcommand{\wsvdash}{\relvdash{\rm ws}}
\newcommand{\svdash}{\relvdash{\rm s}}

\newcommand{\mD}{\mathfrak{D}}

\newcommand{\tup}{\mathrm{Tup}}

\newcommand{\Dom}{\mathrm{Dom}}

\newcommand{\sch}{\mathbf{S}}

\newcommand{\cl}[1]{{#1}^{*,\mathrm{ws}}}
\newcommand{\ordcl}[1]{{#1}^{*}}
\newcommand{\scl}[1]{{#1}^{*,\mathrm{s}}}

\newcommand{\KX}{R} 
\newcommand{\KY}{S} 
\newcommand{\KZ}{T} 
\newcommand{\Const}{\mathrm{Const}}
\newcommand{\Rel}{\mathrm{Rel}}
\newcommand{\supp}{\mathrm{Supp}}

\newcommand{\inv}[1]{{#1}^{-1}}

\newcommand{\marg}[2]{{#1}[{#2}]}

\newcommand{\dfn}{\coloneqq}

\newcommand{\mplus}{\oplus}
\newcommand{\bigmplus}{\bigoplus}

\newcommand{\ar}{\mathrm{ar}}

\newcommand{\inc}[4]{{#1}[#2]\leq {#3}[#4]}
\newcommand{\Mo}{\mathbb{K}}
\newcommand{\DMo}{K}
\newcommand{\B}{\mathbb{B}}
\newcommand{\N}{\mathbb{N}}

\newcommand{\adom}{\mathrm{Adom}}

\newcommand{\chase}[2]{\mathrm{Chase}({#1},{#2})}
\newcommand{\chaseplus}[2]{\mathrm{Chase}^{\oplus}({#1},{#2})}

\newcommand{\rel}[1]{\ensuremath{\mathsf{#1}}}
\newcommand{\attr}[1]{\ensuremath{\mathit{#1}}}

\newcommand{\defeq}{\mathrel{\mathop:}=}
\title{Dichotomy for Axiomatising Inclusion Dependencies on K-Databases}

\author{%
Miika Hannula$^1$\and
Teymur Ismikhanov$^1$\and
Jonni Virtema$^{2}$\and
\affiliations
$^1$Institute of Computer Science, University of Tartu, Estonia\\
$^2$School of Computing Science, University of Glasgow, UK\\
\emails
\{miika.hannula, teymur.ismikhanov\}@ut.ee,\\
jonni.virtema@glasgow.ac.uk.
}

\begin{document}

\maketitle

\begin{abstract}
    A relation consisting of tuples annotated by an element of a monoid $\Mo$ is called a $\Mo$-relation. A $\Mo$-database is a collection of $\Mo$-relations. 
    In this paper, we study entailment of inclusion dependencies over $\Mo$-databases, where $\Mo$ is a positive commutative monoid.
    We establish a dichotomy regarding the axiomatisation of the entailment of inclusion dependencies over $\Mo$-databases, based on whether the monoid $\Mo$ is weakly absorptive or weakly cancellative. We establish that, if the monoid is weakly cancellative then the standard axioms of inclusion dependencies are sound and complete for the implication problem. If the monoid is not weakly cancellative, it is weakly absorptive and the standard axioms of inclusion dependencies together with the weak symmetry axiom are sound and complete for the implication problem. In addition, we establish that the so-called balance axiom is further required, if one stipulates that the joint weights of each $\Mo$-relation of a $\Mo$-database need to be the same; this generalises the notion of a $\Mo$-relation being a distribution. In conjunction with the balance axiom, weak symmetry axiom boils down to symmetry.\looseness=-1
\end{abstract}

\section{Introduction}

In the past two decades, there has been an increasing interest in the study of logical formalisms and foundations of database theory in the settings of semiring or monoid annotations. The initial motivation for this work can be traced to the influential article \cite{DBLP:conf/pods/GreenKT07}, where $\Mo$-relations were introduced as a way to model provenance in databases. For a semiring $\Mo$, a $\Mo$-relation can be seen as a relation whose tuples are annotated by elements of $\Mo$---for example, $\N$-relations correspond to bags while $\B$-relations are just normal relations. 
While the original framework applied only to the provenance of queries expressible in negation-free first-order logic, subsequent investigations extended the study of provenance to richer languages, including full first-order logic \cite{Grädel2025}, Datalog \cite{DBLP:conf/icdt/DeutchMRT14}, and least fixed-point logic \cite{DBLP:conf/csl/DannertGNT21}.

The general setting of $\Mo$-relations has lead to a multitude of logical formalisms to be invented to utilise this abstract framework past its original motivation of provenance analysis.
Most of the works relate to studying properties of queries such as conjunctive query containment \cite{DBLP:journals/mst/Green11,DBLP:journals/tods/KostylevRS14}, 
the algebraic semantics and evaluation of Datalog queries \cite{DBLP:journals/jacm/KhamisNPSW24,DBLP:journals/pacmmod/ZhaoDKRT24}, and ontology-based data access \cite{CalvaneseLOP019,BourgauxOPP20}.

A complementary approach, which is also taken in the current paper, is to focus on database integrity constraints (ICs) in the setting of $\Mo$-databases.
The combination of $\Mo$-databases and ICs has very recently raised interest in multiple fronts. \citeauthor{KR2025-32} \shortcite{KR2025-32} utilise $\Mo$-databases to formalise a general logic based framework for database repairs and consistent query answering, \citeauthor{DBLP:journals/pacmmod/KolaitisPVW25} \shortcite{DBLP:journals/pacmmod/KolaitisPVW25} focus on the complexity of consistent query answering of self-join-free conjunctive queries over $\Mo$-databases, and \citeauthor{Hannula24a} \shortcite{Hannula24a} considers conditional independence and normalization over $\Mo$-relations.
Often monoids or semirings offer a convenient mathematical framework to study how far results that have been proven in the set and bag semantics can be generalised---what are the properties of the Boolean semiring or the semiring of natural numbers that is needed for the results to hold; e.g, \cite{DBLP:journals/jacm/AtseriasK25,BHKPV26} study the interplay between local consistency and global inconsistency, and base their characterisations on $\Mo$-relations over positive commutative monoids. The latter also considers axiomatisability of functional dependencies in the setting of locally consistent families of $\Mo$-relations. This is particularly relevant to the current paper, since the local consistency property can be expressed by using inclusion dependencies.  

The current paper focuses on axiomatising the entailment of inclusion dependencies (INDs) on $\Mo$-databases, where $\Mo$ is a naturally order positive commutative monoid.
An \emph{inclusion dependency} is an expression of the form $\inc{R}{A_1,\dots ,A_n}{S}{B_1,\dots ,B_n}$, where $R$ and $S$ are relation names, and $A_1,\dots ,A_n$ and $B_1,\dots ,B_n$ are pairwise distinct attribute sequences. Intuitively, a relational database satisfies the
above dependency,
if every sequence of data values that occur in $R$ for the attributes $A_1,\dots ,A_n$ also occur in $S$ as data values for the attributes $B_1,\dots ,B_n$. Over bag databases, we further demand that each sequence of data values occur in $S$ with greater or equal multiplicity than they occur in $R$. This is generalised to ordered monoids in the obvious manner. 

\begin{example}[Budgets and expenses]\label{ex:budget}
Consider an annotated database depicted in \Cref{tab:expense}, where the annotations represent monetary amounts. The $\Mo$-relation $\rel{Expense}(\attr{proj,type,year})$ records spending
items, $\rel{Budget}(\attr{proj,year})$ annual budgets, and
$\rel{Grant}(\attr{proj})$ the total grant available per project.

The following INDs express that annual spending stays within the annual budget,
and that the total budgeted amount matches exactly the available grant:
\begin{enumerate}[
  label=(C\arabic*),
  ref=C\arabic*,
  leftmargin=*,
  labelsep=0.6em
]
\item \label{con:exp-bud}
$\rel{Expense}[\attr{proj,year}] \leq \rel{Budget}[\attr{proj,year}]$

\item \label{con:bud-grant}
$\rel{Budget}[\attr{proj}] \leq \rel{Grant}[\attr{proj}]$

\item \label{con:grant-bud}
$\rel{Grant}[\attr{proj}] \leq \rel{Budget}[\attr{proj}]$
\end{enumerate}
\end{example}

\begin{table}[t]
    \footnotesize
  \centering
  \setlength{\tabcolsep}{6pt}
\begin{tabular}{l}

\begin{tabular}{@{}l l l: l@{}}
\multicolumn{4}{l}{$\rel{Expense}$}\\
\toprule
\attr{proj} & \attr{type} & \attr{year} & \# \\
\midrule
P1 & equipment & 2024 & 1200 \\
P1 & hotel     & 2024 & 800 \\
P1 & travel    & 2025 & 600 \\
P2 & travel & 2024 & 1500 \\
\bottomrule
\end{tabular}

\\[10ex]

\begin{tabular}[t]{@{}l l: l@{}}
\multicolumn{3}{l}{$\rel{Budget}$}\\
\toprule
\attr{proj} & \attr{year} & \# \\
\midrule
P1 & 2024 & 2500 \\
P1 & 2025 & 1500 \\
P2 & 2024 & 2000 \\
\bottomrule
\end{tabular}

\quad

\begin{tabular}[t]{@{}l: l@{}}
\multicolumn{2}{l}{$\rel{Grant}$}\\
\toprule
\attr{proj} & \# \\
\midrule
P1 & 4000 \\
P2 & 2000 \\
\bottomrule
\end{tabular}

  \end{tabular}
  \caption{Project expenses and budgets.\label{tab:expense}}
\end{table}

For a finite set of INDs $\Sigma\cup\{\sigma\}$ and a naturally ordered positive commutative monoid $\Mo$, we say that $\Sigma$ entails $\sigma$ on $\Mo$-databases (and write $\Sigma\models_{\Mo} \sigma$), if every $\Mo$ database that satisfies every IND in $\Sigma$ also satisfies $\sigma$.

\vspace{2mm}
\noindent \textbf{Our contributions.}
Our starting points are the sound and complete axiomatisations of the entailment of INDs over standard relational databases
\cite{CasanovaFP84} and over uni-relational $\Rp$-databases \cite{HannulaV22}. The former establishes that three axioms schemas---reflexivity, transitivity, and projection and permutation---constitute a sound and complete axiomatisation for the entailment of INDs over relational databases, while the latter establishes that by adding a fourth axiom---symmetry---one obtains a sound and complete axiomatisation for the entailment of INDs over uni-relational $\Rp$-databases.

In this paper, we show that the above two axiomatisations yield a dichotomy with respect to non-trivial naturally ordered positive commutative monoids $\Mo$. We show that if $\Mo$ is what we call weakly absorptive (see Def.~\ref{def:monoidprops} for a definition) then the three standard axioms of inclusion dependency constitute a sound and complete axiomatisation over $\Mo$-databases.  We then show that if $\Mo$ is not weakly absorptive---equivalently, if it is what we call weakly cancellative---the three standard axioms of inclusion dependency together with the so-called \emph{weak symmetry} axiom constitute a sound and complete axiomatisation over $\Mo$-databases. Finally, we establish that, if we restrict to so-called \emph{balanced} databases, an additional balance axiom is required. A database is balanced if the marginalisation of each relation to the empty set of attributes coincide. This notion generalises the notion of probability distributions, and includes the case of uni-relational databases. See Table~\ref{table:results} for an overview of our results.

In proving our results, we also develop an additive variant of the chase procedure, called the $\mplus$-chase, which operates directly on monoid-annotated databases. In contrast to the classical chase, each chase step increases the weight of a single tuple just enough to repair a violated  inequality. The procedure is not guaranteed to terminate in general, however termination is ensured whenever the set of inclusion dependencies is closed under weak symmetry, yielding a finite canonical model that characterises entailment. This provides a novel algebraic variant of the chase with potential independent interest.

\vspace{2mm}
\noindent \textbf{Related work.}
The contemporary manuscript by \citeauthor{Hirvonen26} \shortcite{Hirvonen26} studies axiomatisations of various dependencies in the setting of $\Mo$-team semantics, i.e., in the setting of uni-relational $\Mo$-databases. Most relevant to the current paper are their axiomatisations of unary inclusion dependencies in conjunction with functional dependencies. Furthermore, \cite[Theorem 6]{Hirvonen26} establishes a weaker uni-relational analogue of our \Cref{thm:sa}.
Lastly, we note that like algebraic Datalog evaluation, our $\mplus$-chase is also a monotone fixpoint process over an annotated database, but rather than computing query answers, it is used for enforcing integrity constraints to construct canonical databases for entailment.

\section{Preliminaries}
\begin{table}
\centering
\scalebox{0.85}{
    \begin{tabular}{@{}l| c c c | l @{}}
\toprule
        \attr{Monoid/Axiom} & \attr{IND} & \attr{WS} & \attr{BL} & \attr{Reference}  \\
\midrule
\attr{\B} & $\checkmark$    &  & & Casanova et al.~\shortcite{CasanovaFP84} \\
\attr{\Rp \,\&\, balanced} & $\checkmark$ & $\checkmark$ & $\checkmark$ & Hannula \& Virtema~\shortcite{HannulaV22}${}^\dagger$\\
\midrule
       \attr{WC} & $\checkmark$ & $\checkmark$ & & \Cref{thm:wc}  \\
       \attr{WC \,\&\, balanced} & $\checkmark$ & $\checkmark$ & $\checkmark$ & Corollary \ref{cor:wc}\\
       \attr{WA} & $\checkmark$    &  & & \Cref{thm:wa} \\
     \attr{WA \,\&\, balanced}    & $\checkmark$ &  & $\checkmark$ & Corollary \ref{cor:wab} \\
\bottomrule
    \end{tabular}
}
    \caption{Overview of our axiomatisation results on entailment of inclusion dependencies. 
    IND refers to the standard rules for inclusion dependencies (see \Cref{tab:inc-rules}), and WS and BL refer to the axioms of weak symmetry and balance (see \Cref{tab:inc-axioms-extra}).
    WC and WA refer to weak cancellativity and weak absorptivity. Balanced means restriction to the class of balanced databases. ${}^\dagger$ The result was proven for the uni-relational case and with the symmetry axiom (in place of weak symmetry and balance).}
    \label{table:results}
\end{table}

For a natural number $n$, we write \( [n] := \{1, \dots, n\} \).
We fix disjoint countably infinite sets $\Att$, $\Const$, $\Rel$ of \emph{attributes}, \emph{constants}, and \emph{relation names}.  
Each relation name $R$ is associated with a finite set of attributes $X\subseteq \Att$, denoted by $\Att(R)$.
A \emph{tuple} over  $X$ %
is a function $t\colon X\to \Const$.
We call $X$ the \emph{domain of} $t$ and denote it by $\Dom(t)$.
We write $\tup(X)$ for the set of all tuples $t \colon X \to \Const$. 
A \emph{relation} over $X$ is a finite set of tuples over $X$.

A \emph{(database) schema} is a finite set of relation names $\sch=\{R_1, \dots ,R_n\}$. A \emph{database} $D$ over $\sch$ is a set of relations instantiating $\sch$, i.e., a set $\{R_1^D, \dots ,R^D_n\}$, where $R_i^D$ is a relation over $\Att(R_i)$, for each $i\in [n]$.
We write $\adom(R^D_i)$ for the \emph{active domain} of the relation $R^D_i$, i.e., $\adom(R^D_i) \dfn \{t(x)\mid t\in R^D_i, x \in \Att(R_i)\}$.
The active domain of a database $D$, $\adom(D)$, is 
the set $\bigcup\{\adom(R_i^D) \mid R_i \in \sch\}$.

A \emph{monoid} is an algebraic structure $\Mo=(\DMo,\mplus,0)$, where $\mplus$ is associative and $0$ is the identity element of $\mplus$.
$\Mo$ is \emph{positive} if $a\mplus b=0$ entails $a=0=b$, for all $a,b\in K$, \emph{(left) cancellative} if $a\mplus b=a\mplus c$ entails $b=c$, and \emph{non-trivial} if $|K| > 1$. We associate each monoid $\Mo$ with its \emph{natural order} $\leq^\Mo$, defined by $a \leq^\Mo b \iff \exists c: a\mplus c = b$.
When the monoid is clear from the context or irrelevant, we drop the index and write $\leq$ instead of $\leq^\Mo$.
The natural order of a monoid is reflexive and transitive, meaning that it is a \emph{preorder}.
It need not in general satisfy {antisymmetry} (i.e., $a\leq b$ and $b\leq a$ imply $a=b$) nor totality (i.e., $a\leq b$ or $b \leq a$). Recall that a preorder equipped with antisymmetry is a partial order, and if it further satisfies totality, it is a \emph{total order}.

In this paper, $\Mo$ will always denote a
non-trivial positive commutative
 monoid, if not otherwise specified. 
In addition to the general case, we consider the Boolean monoid $\mathbb{B}=(\{0,1\}, \lor,0)$, and the monoids of non-negative reals $\Rp=([0,\infty), +,0)$ and natural numbers $\N=(\N, +,0)$ with their usual addition.

A \emph{$\Mo$-relation} over an attribute set $X$
 is a function 
 $\KX \colon \tup(X) \to K$ of finite support, where the \emph{support} of $\KX$ is the set $\supp(\KX) \coloneqq \{ s \in \tup(X) \mid \KX(s) \neq 0\}$.
For a set of attributes $Y\subseteq X$, the $\Mo$-relation $\marg{\KX }{ Y}$ is the \emph{marginalisation} of the $\Mo$-relation $\KX$ to the attributes in $Y$.
This means that $\marg{\KX }{Y}$ is a $\Mo$-relation $\tup(Y) \to K$ s.t.
\[
    \marg{\KX }{ Y}(t) := \bigmplus_{\substack{t' \in \tup(X) \\ t = \marg{t'}{  Y}}} \KX(t'),
\]
where $\marg{t }{ Y}$ is the projection of the tuple $t$ to $Y$ and $\bigmplus$ is the monoid's aggregate sum.
We sometimes write $\marg{\KX }{A_1\dots,A_n}(a_1,\dots,a_n)$ to denote $\marg{\KX }{A_1\dots,A_n}(t)$, where $t$ is the tuple mapping $A_i$ to $a_i$, for each $i\leq n$.
A \emph{$\Mo$-database} $\mD$ over $\sch$ is a finite set of $\Mo$-relations, one $\Mo$-relation $\KX^\mD$ for each $R\in\sch$.
The \emph{support} of $\mD$ is the database $\supp(\mD) \coloneq \{\supp(\KX^\mD) \mid R \in \sch\}$ and the \emph{active domain} of $\mD$ is inherited from of the support: $\adom(\mD) \coloneqq \adom(\supp(\mD))$.
We call $\mD$ \emph{balanced} if $\KX^\mD[\emptyset] = \KY^\mD[\emptyset]$, for every $R,S\in\sch$.
When the database $\mD$ is clear from the context or irrelevant, we often write $\KX$ instead of $\KX^\mD$.

\begin{example}
    Continuing Example \ref{ex:budget}, we observe that a $\Mo$-database satisfying the constraints \ref{con:exp-bud}-\ref{con:grant-bud}
need not be balanced, since there is no constraint enforcing that all of the annual budget have to be used. On the other hand, its restriction to the $\Mo$-relations $\rel{Budget}$ and $\rel{Grant}$ is balanced.
\end{example}

For two $\Mo$-relations $R$ and $S$ over a shared set of attributes $X$, we write $\KX \mplus \KY$ for the $\Mo$-relation $\KZ\colon \tup(X)\to K$ such that $\KZ(t) \dfn \KX(t) \mplus \KY(t)$ for all $t \in \tup(X)$.
For two $\Mo$-databases $\mD$ and $\mD'$ over a shared schema $\sch$, we define $\mD\mplus \mD'\coloneqq \{\KX^\mD\mplus \KX^{\mD'} \mid R\in \sch\}$.

The following proposition is straightforward.
\begin{proposition}\label{prop:supp}
    Let $\Mo$ be a positive monoid, and $\mD$ and $\mD'$ be $\Mo$-databases  over a shared schema. Then $\supp(\mD\mplus \mD')=\supp(\mD)\cup \supp(\mD')$.
\end{proposition}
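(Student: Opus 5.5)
The claim reduces to a tuple-wise statement: for every $\Mo$-relation name $R\in\sch$ and every tuple $t\in\tup(\Att(R))$,
\[
(\KX^\mD\mplus\KX^{\mD'})(t)\neq 0 \iff \KX^\mD(t)\neq 0 \text{ or } \KX^{\mD'}(t)\neq 0 .
\]
Indeed, by definition $\supp(\mD\mplus\mD')$ consists of the relations $\supp(\KX^\mD\mplus\KX^{\mD'})$, one for each $R\in\sch$. The union of databases over the same schema is understood relation-wise, so it consists of the relations $\supp(\KX^\mD)\cup\supp(\KX^{\mD'})$. It therefore suffices to show $\supp(\KX\mplus\KY)=\supp(\KX)\cup\supp(\KY)$ for any two $\Mo$-relations $\KX,\KY$ over the same attribute set $X$. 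Since $(\KX\mplus\KY)(t)=\KX(t)\mplus\KY(t)$ by definition, the question is about single elements $a,b\in K$.

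For the inclusion $\subseteq$, I argue by contraposition. If $t\notin\supp(\KX)\cup\supp(\KY)$, then $\KX(t)=0=\KY(t)$. Hence $(\KX\mplus\KY)(t)=0\mplus 0=0$, because $0$ is the identity, and so $t\notin\supp(\KX\mplus\KY)$. This direction holds in any monoid.

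For the inclusion $\supseteq$, I again argue by contraposition. If $(\KX\mplus\KY)(t)=0$, then $\KX(t)\mplus\KY(t)=0$. Positivity of $\Mo$ then gives $\KX(t)=0=\KY(t)$, so $t$ lies in neither support. This is the only step that uses a hypothesis, namely positivity, and it is the crux of the argument; without positivity, nonzero inverse elements could cancel.

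Finally, the finite support of $\KX\mplus\KY$ follows from the equality just shown, so $\mD\mplus\mD'$ is a genuine $\Mo$-database and the statement is well-posed.
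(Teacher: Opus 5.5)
Your proof is correct and is the routine tuple-wise argument the paper has in mind; the paper omits the proof, calling it straightforward. You rightly read the union relation-wise, get $\subseteq$ from $0$ being the identity, and get $\supseteq$ from positivity.
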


\begin{definition}\label{def:monoidprops}
A monoid $\Mo=(\DMo, \mplus, 0)$ is \emph{weakly cancellative (WC)}, if it satisfies $a \mplus b=b \Rightarrow a=0$, for all $a,b\in \DMo$.
It is \emph{weakly absorptive (WA)}, if $a \mplus b=b$ for some non-zero $a,b\in \DMo$, and \emph{self absorptive (SA)} if $b \mplus b=b$ for some non-zero $b\in \DMo$.
It is \emph{$k$-absorptive ($kA$)}, for $k\geq 1$, if there exists $a_0,\dots,a_k \in \DMo$ such that $a_0\neq 0$ and $a_{i}\mplus a_{i + 1}=a_{i + 1}$, for all $i<k$. 
It is \emph{countably absorptive (CA)} if there are $a_0\neq 0$ and $(a_i)_{i\in \mathbb{N}_{>0}}$ such that $a_i \mplus a_{i+1}=a_{i+1}$, for all $i\in \mathbb{N}$.
\end{definition}
Note that each element $a_i$ in the sequence for $k$-absorptiveness or countably absorptiveness are non-zero.
Furthermore,  the notions of weakly absorptive and $1$-absorptive coincide, and every self absorptive monoid is countably absorptive, but not vice versa. Finally, a monoid is weakly cancellative if and only if it is not weakly absorptive. This is immediate, since $\forall a \forall b (a \mplus b=b\rightarrow a=0)$ defines weak cancellativity and $\neg \forall a \forall b (a \mplus b=b \rightarrow a=0)$ is equivalent with $\exists a \exists b (a \mplus b= b \land a\neq 0)$ which is the definition of weakly absorptive (note that, if $a \mplus b=b \land b= 0$ then $a=0$).
These relationships are summarised in Figure \ref{fig:relationships}.

\begin{figure}
\centering
$SA \subset CA \subset  kA \subset  (k-1)A \subseteq  1A = WA = \neg WC$
\caption{Relationships between monoid properties.}
\label{fig:relationships}
\end{figure}

\section{Inclusion Dependencies}

An \emph{inclusion dependency} (IND) is an expression of the form $\sigma=\inc{R}{A_1,\dots ,A_n}{S}{B_1,\dots ,B_n}$, where $R$ and $S$ are relation names, and $A_1,\dots ,A_n\in\Att(R)$ and $B_1,\dots ,B_n\in\Att(S)$ are pairwise distinct attribute sequences (i.e., $A_i=A_j\Rightarrow i=j$ and $B_i=B_j\Rightarrow i=j$). The \emph{arity} of this IND, denoted $\ar(\sigma)$, is $n$. 
Let $\Mo$ be a positive commutative monoid and $\mD$ be a $\Mo$-database.
We say that 
$\mD$ \emph{satisfies} the IND $\inc{R}{A_1,\dots ,A_n}{S}{B_1,\dots ,B_n}$, written 
\begin{multline*}   
\mD \models \inc{R}{A_1,\dots ,A_n}{S}{B_1,\dots ,B_n},\\
\text{if } \KX^\mD[A_1,\dots,A_n](\vec{a}) \leq \KY^\mD[B_1,\dots,B_n](\vec{a}),
\end{multline*}
for all $\vec{a}\in \Const^n$. 
For an ordinary (Boolean) database $D$, we write $D \models \sigma$ if
$D \models_{\mathbb{B}} \sigma$, that is, if $\sigma$ holds under the
classical set-inclusion semantics. We extend these satisfaction relations to sets of INDs $\Sigma$ in the obvious way and write $\mD\models \Sigma$ (or $D\models \Sigma$).

One can verify that, by positivity of $\Mo$, satisfaction of inclusion
dependencies is preserved when passing from a $\Mo$-database to its
support.
\begin{proposition}\label{prop:preserved}
Let $\Mo$ be a positive commutative monoid and $\mD$ a $\Mo$-database.
If $\mD \models \sigma$ then $\supp(\mD) \models \sigma$, for any IND $\sigma$.\looseness=-1
\end{proposition}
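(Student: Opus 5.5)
The plan is to unfold the definitions and reduce the claim to two elementary facts about positive commutative monoids. Assume $\mD \models \inc{R}{A_1,\dots,A_n}{S}{B_1,\dots,B_n}$ and fix $\vec{a}\in\Const^n$ such that some tuple $t\in\supp(\KX^\mD)$ satisfies $t(A_i)=a_i$ for all $i\in[n]$. Under the Boolean semantics on $\supp(\mD)$ it suffices to find $t'\in\supp(\KY^\mD)$ with $t'(B_i)=a_i$ for all $i$. Equivalently, viewing $\supp(\mD)$ as a $\mathbb{B}$-database, we must show that $\marg{\KY^\mD}{B_1,\dots,B_n}(\vec{a})\neq 0$ whenever $\marg{\KX^\mD}{A_1,\dots,A_n}(\vec{a})\neq 0$.

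The first fact: for a finite family $(k_j)_j$ in $\DMo$, we have $\bigmplus_j k_j = 0$ iff every $k_j=0$. The direction from right to left is trivial. The other direction follows from positivity by induction on the number of summands, splitting $\bigmplus_j k_j = k_1 \mplus \bigmplus_{j>1} k_j$. From this, the marginal $\marg{\KX^\mD}{A_1,\dots,A_n}(\vec{a})$ is non-zero exactly when some tuple of $\supp(\KX^\mD)$ projects to $\vec{a}$, and similarly for $S$. The sum is finite because $\KX^\mD$ has finite support, and zero summands may be discarded.

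The second fact: if $a\leq b$ and $b=0$, then $a=0$. Indeed, $a\mplus c = 0$ for some $c$, and positivity gives $a=0$.

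Putting these together: the chosen $t$ makes $\marg{\KX^\mD}{A_1,\dots,A_n}(\vec{a})\neq 0$ by the first fact. By the assumed IND this value is $\leq \marg{\KY^\mD}{B_1,\dots,B_n}(\vec{a})$, so the second fact makes the latter non-zero. Applying the first fact again yields the required $t'\in\supp(\KY^\mD)$ projecting to $\vec{a}$. There is no real obstacle; the only point needing care is the induction for finite sums in the first fact, which is where positivity is essential. Commutativity is used only so that $\bigmplus$ is well defined.
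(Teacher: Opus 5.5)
Your proof is correct and takes essentially the paper's approach. The paper gives no written proof beyond the remark that the claim follows from positivity of $\Mo$, and your two facts are exactly where positivity enters: a finite sum vanishes iff every summand does, and $a \leq 0$ forces $a = 0$.
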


As in the relational context, entailment between INDs in the more general monoidal context can be captured by a set of inference rules. Before introducing these rules, we briefly consider the notion of entailment in this setting.

Given a finite set of INDs $\Sigma \cup\{\tau\}$, we say that \emph{$\Sigma$ entails $\tau$ over $\Mo$-databases} (resp. \emph{over balanced $\Mo$-databases}), denoted  $\Sigma \momodels{\Mo} \tau$ (resp. $\Sigma \bmomodels{\Mo} \tau$), if for every $\Mo$-database $\mD$ (resp. every balanced $\Mo$-database $\mD$),
$\mD \models \Sigma$ implies $\mD \models \tau$. Furthermore, we write $\Sigma \models \tau$ as shorthand for
$\Sigma \models_{\mathbb{B}} \tau$, which coincides with implication of
$\tau$ by $\Sigma$ over ordinary relational databases.
As standard, we say that a provability relation $\vdash$ for INDs is \emph{sound} (\emph{complete}, resp.) with respect to a semantic entailment relation $\models$, if $\Sigma \vdash \tau$ implies $\Sigma \models \tau$ ($\Sigma \models \tau$ implies $\Sigma \vdash \tau$, resp.) for all finite sets of INDs $\Sigma \cup\{\tau\}$.

Now, in addition to the standard rules recalled in \Cref{tab:inc-rules}, the monoidal setting gives rise to further rules, shown in \Cref{tab:inc-axioms-extra}. These additional rules are not sound in general. In particular, weak symmetry is sound only when $\Mo$ is weakly cancellative, and soundness of both symmetry and balance additionally requires that we restrict to balanced $\Mo$-databases. The rest of this section is devoted to proving these claims.

\begin{table}[t]
\centering
\footnotesize
\fbox{%
\begin{minipage}{\dimexpr\columnwidth-2\fboxsep-2\fboxrule}
\textbf{Reflexivity:}\\
$\inc{R}{A_1,\dots,A_n}{R}{A_1,\dots,A_n}$, for pairwise 
distinct $A_1,\dots,A_n\in\Att(R)$.\\[0.4ex]

\vspace{-2mm}
\textbf{Transitivity:}\\
If $\inc{R}{\vec{A}}{S}{\vec{B}}$ and
$\inc{S}{\vec{B}}{T}{\vec{C}}$, then
$\inc{R}{\vec{A}}{T}{\vec{C}}$.\\[0.4ex]

\vspace{-2mm}
\textbf{Projection and permutation:}\\
If $\inc{R}{A_1,\dots,A_n}{S}{B_1,\dots,B_n}$, then
$\inc{R}{A_{i_1},\dots,A_{i_\ell}}{S}{B_{i_1},\dots,B_{i_\ell}}$,
for distinct $i_1,\dots,i_\ell \in [n]$.
\end{minipage}
}
\caption{Inference rules for inclusion dependencies.}
\label{tab:inc-rules}
\end{table}

\begin{table}[t]
\centering
\footnotesize
\setlength{\fboxsep}{4pt}
\setlength{\fboxrule}{0.4pt}
\fbox{%
\begin{minipage}{\dimexpr\columnwidth-2\fboxsep-2\fboxrule}
\textbf{Symmetry:}\\
If $\inc{R}{\vec{A}}{S}{\vec{B}}$, then $\inc{S}{\vec{B}}{R}{\vec{A}}$.\\[0.4ex]

\vspace{-2mm}
\textbf{Weak symmetry:}\\
If $\inc{R}{\vec{A}}{S}{\vec{B}}$ and $\inc{S}{\emptyset}{R}{\emptyset}$, then
$\inc{S}{\vec{B}}{R}{\vec{A}}$.\\[0.4ex]

\vspace{-2mm}
\textbf{Balance:}\\
$\inc{S}{\emptyset}{R}{\emptyset}$.
\end{minipage}}
\caption{Additional inference rules.}
\label{tab:inc-axioms-extra}
\end{table}

\begin{fact}
Symmetry is a theorem
 that follows from Weak Symmetry and Balance. 
\end{fact}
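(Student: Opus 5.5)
The derivation is a direct two-step proof in the system consisting of the rules of \Cref{tab:inc-rules} together with Weak Symmetry and Balance from \Cref{tab:inc-axioms-extra}. I will show that from any premise $\inc{R}{\vec{A}}{S}{\vec{B}}$ one can derive $\inc{S}{\vec{B}}{R}{\vec{A}}$. Since Symmetry has exactly this form, it is then admissible, indeed derivable, in that system. No semantic reasoning is needed: the argument is purely syntactic.

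\textbf{Step 1.} Assume $\inc{R}{\vec{A}}{S}{\vec{B}}$. Apply the Balance axiom to the pair $S,R$ to obtain $\inc{S}{\emptyset}{R}{\emptyset}$. Balance is stated for arbitrary relation names, so this instance is available regardless of which names occur in the premise.

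\textbf{Step 2.} We now hold both $\inc{R}{\vec{A}}{S}{\vec{B}}$ and $\inc{S}{\emptyset}{R}{\emptyset}$, which are precisely the two premises of Weak Symmetry. One application of Weak Symmetry yields $\inc{S}{\vec{B}}{R}{\vec{A}}$, the conclusion of Symmetry.

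The only point needing care is bookkeeping. The empty-sequence IND must be well formed, which it is: the pairwise-distinctness condition holds vacuously and the arity is $0$. Also, the instance of Balance must have its relation names in the orientation that Weak Symmetry requires, namely $S$ on the left and $R$ on the right. There is no genuine obstacle. The derivation is two steps long and uses each extra rule exactly once.
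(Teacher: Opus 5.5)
Your derivation is correct and matches the paper, which states this Fact without a written proof: its only justification is the accompanying remark that the instance $\inc{S}{\emptyset}{R}{\emptyset}$ of Balance supplies the second premise of Weak Symmetry. Combining that instance with $\inc{R}{\vec{A}}{S}{\vec{B}}$ in one application of Weak Symmetry, as you do, is exactly the intended argument.
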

Note that $\inc{S}{\emptyset}{R}{\emptyset}$ is valid with respect to the class of balanced databases, and hence w.r.t. that class Weak Symmetry is equivalent with Symmetry. Moreover,
$\inc{R}{\emptyset}{S}{\emptyset}$
and $\inc{S}{\emptyset}{R}{\emptyset}$ together express that $R$ and $S$ are balanced.

We write $\Sigma\vdash\tau$ if $\tau$ is derivable from $\Sigma$ using the standard  three rules from \Cref{tab:inc-rules}. Similarly,  $\Sigma \wsvdash \tau$ (resp. $\Sigma \svdash \tau$) means that $\tau$ is derivable from $\Sigma$ with the standard rules extended with weak symmetry (resp. symmetry and balance) from \Cref{tab:inc-axioms-extra}. Additionally, 
  $\ordcl{\Sigma}$,  $\cl{\Sigma}$, and $\scl{\Sigma}$ stand for the closure of $\Sigma$ under $\vdash$, $\wsvdash$, and $\svdash$, respectively.

The next result shows that the usual inference rules of inclusion dependencies are sound in the monoidal context. This holds even for commutative monoids that are not positive.
\begin{proposition}\label{prop:sound}
Let $\Mo$ be a commutative monoid.
The axioms for inclusion dependencies (\Cref{tab:inc-rules}) are sound over $\Mo$-databases.  The balance axiom (\Cref{tab:inc-axioms-extra}) is sound over balanced $\Mo$-databases.
\end{proposition}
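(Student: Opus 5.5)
We check each rule separately. For each rule, we assume the premises hold in a $\Mo$-database $\mD$ and show that the conclusion holds pointwise for every tuple of constants. Throughout we use only that the natural order $\leq$ is a preorder (reflexive and transitive). We also use that $\leq$ is compatible with $\mplus$: if $a\leq b$ and $c\leq d$, then $a\mplus c\leq b\mplus d$. This holds because from $a\mplus e=b$ and $c\mplus f=d$, commutativity and associativity give $(a\mplus c)\mplus(e\mplus f)=b\mplus d$. Neither positivity nor cancellativity is needed, which matches the statement's claim for arbitrary commutative monoids.

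\textbf{Reflexivity and transitivity.} Reflexivity is immediate: $\KX[\vec A](\vec a)\leq \KX[\vec A](\vec a)$ holds by reflexivity of $\leq$ (take $c=0$). For transitivity, the premises give $\KX[\vec A](\vec a)\leq \KY[\vec B](\vec a)$ and $\KY[\vec B](\vec a)\leq \KZ[\vec C](\vec a)$ for each $\vec a$. The conclusion then follows from transitivity of $\leq$: if $a\mplus c=b$ and $b\mplus d=e$, then $a\mplus(c\mplus d)=e$.

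\textbf{Projection and permutation.} This is the only step needing a small computation. First, permutation is harmless. The value $\KX[A_{i_1},\dots,A_{i_\ell}](\vec b)$ depends only on the tuple $t$ with $t(A_{i_j})=b_j$. Hence reordering the attributes of both sides simultaneously by the same permutation does not change the pointwise comparison. For projection, fix $\vec b\in\Const^\ell$ and let $P=\{i_1,\dots,i_\ell\}$. Marginalisation is transitive (summing in two stages), which holds by associativity and commutativity of the finite aggregate $\bigmplus$. Therefore
\[
\KX[A_{i_1},\dots,A_{i_\ell}](\vec b)=\bigmplus_{\vec a}\KX[A_1,\dots,A_n](\vec a),
\]
where $\vec a$ ranges over the tuples in $\Const^n$ with $a_{i_j}=b_j$ for all $j$. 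Only finitely many of these terms are nonzero, since $\KX$ has finite support. The analogous identity holds for $\KY[B_{i_1},\dots,B_{i_\ell}](\vec b)$, summing over the same index set of $\vec a$. The premise gives a termwise inequality $\KX[\vec A](\vec a)\leq\KY[\vec B](\vec a)$. We restrict the sums to the finite union of the two supports, where all other terms are $0$. Applying compatibility of $\leq$ with $\mplus$ repeatedly, by induction on the number of summands, the termwise inequalities add up to the desired inequality. The main care point here is this finite-support bookkeeping and the two-stage marginalisation identity. It is routine, but it is the only place where commutativity genuinely matters.

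\textbf{Balance.} On balanced databases, the balance axiom $\inc{S}{\emptyset}{R}{\emptyset}$ asks that $\KY[\emptyset](\langle\rangle)\leq \KX[\emptyset](\langle\rangle)$ for the unique empty tuple. By definition of balancedness these two values are equal, so the inequality holds by reflexivity of $\leq$.
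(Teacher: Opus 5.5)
Your proof is correct and follows the same rule-by-rule approach as the paper. The paper only asserts that reflexivity and transitivity follow from the natural order being a preorder, that projection and permutation follow from properties of the natural order plus permutation invariance, and that balance is immediate; your version spells out the facts underlying the projection case, namely the compatibility $a\leq b,\ c\leq d \Rightarrow a\mplus c\leq b\mplus d$ and the two-stage marginalisation identity with its finite-support bookkeeping.
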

\begin{proof}
The soundness of the reflexivity and transitivity rules rely on the fact that the semantics of inclusion dependence was defined from the natural order of the monoid. Likewise, projection and permutation follows from properties of the natural order and since the definition of inclusion dependence is clearly permutation invariant. The soundness of balance axioms over balanced databases is self-evident.
\end{proof}

The next result states that weak symmetry is sound if the underlying monoid is weakly cancellative.
Unlike Proposition \ref{prop:sound}, the argument relies also on positivity of the monoid.
\begin{proposition}\label{prop:sym-sound}
Let $\Mo$ be a positive weakly cancellative commutative monoid.
Weak symmetry (\Cref{tab:inc-axioms-extra}) is sound over $\Mo$-databases. 
\end{proposition}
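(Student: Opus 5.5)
The plan is to show directly that the two premises force $\KX[\vec{A}]$ and $\KY[\vec{B}]$ to coincide pointwise; the conclusion $\inc{S}{\vec{B}}{R}{\vec{A}}$ then follows from reflexivity of the natural order. Fix a $\Mo$-database $\mD$ with $\mD\models \inc{R}{\vec{A}}{S}{\vec{B}}$ and $\mD\models\inc{S}{\emptyset}{R}{\emptyset}$. Let $n=|\vec{A}|$, and write $x_{\vec a}\dfn \KX[\vec{A}](\vec a)$ and $y_{\vec a}\dfn \KY[\vec{B}](\vec a)$ for $\vec a\in\Const^n$.

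Let $F\subseteq \Const^n$ be the finite set of $\vec a$ for which $x_{\vec a}\neq 0$ or $y_{\vec a}\neq 0$. This set is finite because both marginals have finite support. For each $\vec a\in F$, the first premise gives some $c_{\vec a}$ with $y_{\vec a}=x_{\vec a}\mplus c_{\vec a}$. Marginalisation to $\emptyset$ can be computed in two stages, first to $\vec A$ and then to $\emptyset$; this uses associativity, commutativity and finiteness of supports. Hence
\[
\KX[\emptyset]=\bigmplus_{\vec a\in F} x_{\vec a}
\quad\text{and}\quad
\KY[\emptyset]=\bigmplus_{\vec a\in F} y_{\vec a}=\KX[\emptyset]\mplus C,
\]
where $C\dfn\bigmplus_{\vec a\in F}c_{\vec a}$. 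The second premise yields some $d$ with $\KX[\emptyset]=\KY[\emptyset]\mplus d$. Combining the two equations gives $\KX[\emptyset]=(C\mplus d)\mplus\KX[\emptyset]$.

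Weak cancellativity, applied with $a=C\mplus d$ and $b=\KX[\emptyset]$, now gives $C\mplus d=0$. Positivity then gives $C=0$, and, applied repeatedly to the finite sum, $c_{\vec a}=0$ for every $\vec a\in F$. Thus $y_{\vec a}=x_{\vec a}$ for $\vec a\in F$. Outside $F$ both values are $0$, so they agree everywhere. Since $\leq$ is reflexive, $\KY[\vec B](\vec a)\le \KX[\vec A](\vec a)$ for all $\vec a$, which is exactly $\mD\models\inc{S}{\vec{B}}{R}{\vec{A}}$.

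The only step needing some care is the identity expressing $\KX[\emptyset]$ as the finite sum of the values of $\KX[\vec A]$, which rests on rearranging finite sums in a commutative monoid. Beyond that, the argument is the single application of weak cancellativity followed by positivity.
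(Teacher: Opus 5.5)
Your proof is correct and follows the paper's argument: write $S[\vec{B}](\vec a)=R[\vec{A}](\vec a)\oplus c_{\vec a}$, sum over the finite support, then apply weak cancellativity followed by positivity to force every $c_{\vec a}=0$. The one difference is that you feed the witness $d$ of $S[\emptyset]\leq R[\emptyset]$ directly into the cancellation step. The paper instead first concludes $R[\emptyset]=S[\emptyset]$, which tacitly uses antisymmetry of the natural order, and antisymmetry is only established later in the appendix. Your version is therefore slightly more self-contained.
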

\begin{proof}
Let $R$ and $S$ be $\Mo$-relations from a $\Mo$-database such that $R[\vec{A}] \leq S[\vec{B}]$ and $S[\emptyset] \leq R[\emptyset]$. We will show that $S[\vec{B}] \leq R[\vec{A}]$. Let $k=\vert \vec{A}\rvert$. By assumption $R$ and $S$ are balanced ($R[\vec{A}] \leq S[\vec{B}]$ implies $R[\emptyset] \leq S[\emptyset]$), and hence
\[
\bigmplus_{\vec{a}\in \Const^k} R[\vec{A}](\vec{a}) = \bigmplus_{\vec{a}\in \Const^k} S[\vec{B}](\vec{a}).
\]
The above expressions are well-defined, since $R$ and $S$ have finite supports.
As $\leq$ is the natural order of the monoid, and $R[\vec{A}] \leq S[\vec{B}]$, there is a $c_{\vec{a}} \in K$ for each $\vec{a}\in\Const^k$  s.t.
\[
\bigmplus_{\vec{a}\in \Const^k} S[\vec{B}](\vec{a}) = \bigmplus_{\vec{a}\in \Const^k} ( R[\vec{A}](\vec{a}) \mplus c_{\vec{a}}).
\]
Since $\mplus $ is associative, we obtain that 
\[
\bigmplus_{\vec{a}\in \Const^k} ( R[\vec{A}](\vec{a}) \mplus c_{\vec{a}}) =  \bigmplus_{\vec{a}\in \Const^k}  R[\vec{A}](\vec{a}) \mplus \bigmplus_{\vec{a}\in \Const^k} c_{\vec{a}}.
\]
Putting these together yields that
\[
\bigmplus_{\vec{a}\in \Const^k} R[\vec{A}](\vec{a})  = \bigmplus_{\vec{a}\in \Const^k}  R[\vec{A}](\vec{a}) \mplus \bigmplus_{\vec{a}\in \Const^k} c_{\vec{a}}.
\]
It now follows from WC and positivity of the monoid that $c_{\vec{a}}=0$ for each $\vec{a}\in\Const^k$. Hence $S[\vec{B}] \leq R[\vec{A}]$.
\end{proof}

\begin{example}\label{ex:budget2}
Continuing the running example, we see that the rules \ref{con:exp-bud}-\ref{con:grant-bud} allow us to derive new INDs, much as in the relational context.
For instance, by projecting \ref{con:exp-bud} on $\attr{proj}$ and composing with \ref{con:bud-grant},
we obtain the derived constraint
\[
\rel{Expense}[\attr{proj}] \le \rel{Grant}[\attr{proj}],
\]
stating that total spending over the lifetime of a project cannot exceed the
grant. 

Beyond such standard derivations, monoidal reasoning also supports richer forms of inference. In the present example, $\Mo$ may be selected as the monoid of non-negative reals~$\mathbb{R}_{\geq 0}$ or natural numbers~$\mathbb{N}$. In both cases, $\Mo$ is weakly cancellative, and thus weak symmetry is sound. As a consequence, the constraint set $\{$\ref{con:exp-bud},\ref{con:bud-grant},\ref{con:grant-bud}$\}$ can be relaxed, for~\ref{con:grant-bud} can be replaced by the weaker constraint
\begin{enumerate}[
  label=(C\arabic*${}$),
  ref=C\arabic*${}$,
  start=4,
  leftmargin=*,
  labelsep=0.6em
]
\item \label{con:grant-bud-2}
$\rel{Grant}[\emptyset] \leq \rel{Budget}[\emptyset]$
\end{enumerate}
Indeed,~\ref{con:grant-bud} can be recovered from~\ref{con:grant-bud-2} and~\ref{con:bud-grant} by weak symmetry. Informally, if the total project budget does not exceed its associated grant, and the sum of all grants does not exceed the sum of all budgets across projects, then each project grant must also not exceed its corresponding budget.
\end{example}
The previous example highlights one of the advantages of inference rules: they can expose redundancies which can be computationally costly to enforce. In particular, the constraint~\ref{con:grant-bud-2} is simpler to enforce than~\ref{con:grant-bud}, from which it is obtained by projection. The former IND compares a single aggregate, while the latter involves comparisons across multiple aggregates.

The example also illustrates another benefit of the monoidal approach. INDs such as~\ref{con:exp-bud}--\ref{con:grant-bud} provide simple and natural specifications of arithmetic constraints involving aggregation. Such constraints are enforceable in the standard relational setting but not natively, as this typically requires explicit aggregation and grouping (e.g., \texttt{SUM} and \texttt{GROUP BY}) via comparatively complex queries.

Lastly, the example shows that monoidal reasoning can give rise to non-obvious and sometimes non-intuitive inferences, in contrast to the more immediate nature of classical inclusion dependency reasoning. In particular, the informal if--then statement at the end of the example conceals non-trivial algebraic reasoning that goes beyond simple projections and transitive closures.

Let us then consider the cases where weak symmetry is not sound.

\begin{proposition}
Let $\Mo$ be a positive weakly absorptive commutative monoid.
Then weak symmetry is not sound, even when restricted to the class of balanced databases.
\end{proposition}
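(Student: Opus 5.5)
Since $\Mo$ is weakly absorptive, fix non-zero $a,b\in\DMo$ with $a\mplus b=b$. The plan is to build a balanced $\Mo$-database over two relation names $R,S$ with a single attribute $A\in\Att(R)\cap\Att(S)$. It will satisfy $\inc{R}{A}{S}{A}$ and $\inc{S}{\emptyset}{R}{\emptyset}$ but violate $\inc{S}{A}{R}{A}$. This refutes weak symmetry on the class of balanced databases, and hence on all $\Mo$-databases.

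\textbf{Construction.} Pick distinct constants $c,d\in\Const$ and define
\[
R(c)\dfn b,\qquad S(c)\dfn b,\qquad S(d)\dfn a,
\]
with all other tuples mapped to $0$; both supports are finite. Then $R[\emptyset]=b$ and $S[\emptyset]=b\mplus a=b$ by commutativity and the choice of $a,b$. So the database is balanced, and in particular $\inc{S}{\emptyset}{R}{\emptyset}$ holds by reflexivity of $\leq$.

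\textbf{Verification.} For $\inc{R}{A}{S}{A}$ we check pointwise. At $c$ we have $b\leq b$. At $d$ we have $0\leq a$, since $0\mplus a=a$. Elsewhere both sides are $0$. For the failure of $\inc{S}{A}{R}{A}$, look at $d$: we would need $a\leq 0$, i.e.\ $a\mplus e=0$ for some $e$. Positivity of $\Mo$ then forces $a=0$, contradicting the choice of $a$.

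The only point requiring care is that $S[\emptyset]$ collapses to $b$, which uses $a\mplus b=b$ together with commutativity. Positivity is needed exactly at the final step, to rule out $a\leq 0$.
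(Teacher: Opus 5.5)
Your proposal is correct and is essentially the paper's own argument: your $R$ and $S$ are the paper's $\rel{Warehouse}$ and $\rel{Orders}$ tables, with potato/yam as your $c,d$ and the same annotations $b,0$ and $b,a$. You also spell out the two details the paper leaves implicit, namely that $S[\emptyset]=a\mplus b=b$ gives balance, and that positivity rules out $a\leq 0$.
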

\begin{proof}
Let $\Mo$ be WA monoid and $a,b \in K$ non-zero elements such that $a\mplus b = b$. \Cref{tbl:WA} depicts a balanced $\Mo$-database that is a counterexample for weak symmetry; $\rel{Warehouse}[\attr{product}] \leq \rel{Orders}[\attr{product}]$ and $\rel{Orders}[\emptyset] \leq \rel{Warehouse}[\emptyset]$ hold, but $\rel{Orders}[\attr{product}] \leq \rel{Warehouse}[\attr{product}]$ does not hold.
\end{proof}
\begin{table}
    \footnotesize
\centering
\begin{tabular}{@{}l: l@{}}
\multicolumn{2}{l}{$\rel{Warehouse}$}\\
\toprule
\attr{product} & \# \\
\midrule
potato& $b$ \\
yam & $0$ \\
\bottomrule
\end{tabular}
\quad
\begin{tabular}{@{}l: l@{}}
\multicolumn{2}{l}{$\rel{Orders}$}\\
\toprule
\attr{product} & \# \\
\midrule
potato& $b$ \\
yam & $a$ \\
\bottomrule
\end{tabular}
\caption{Example of a balanced $\Mo$-database, where $a\mplus b=b$.}
\label{tbl:WA}
\end{table}

Having analysed the soundness of our rules, we now turn to their completeness. To this end, we first consider an algorithm that is commonly used to reason about the implication problem for database dependencies: \emph{the chase}.

\section{$\mplus$-Chase}
Next, we define a variant of the chase that will be used in the completeness proofs and may be of independent interest.
Throughout the section, we assume that $\Mo$ is a non-trivial positive commutative monoid that is weakly cancellative and equipped with a {total natural order}. In particular, this means that
the natural order is antisymmetric.
Moreover, if $a \not\leq b$, there is a (unique\footnote{Assuming $b+c=a$ and $b+c'=a$, 
by totality and symmetry we may assume w.l.o.g. that $c \leq c'$. Then there is $d$ such that
\(
c + d = c',
\)
and hence
\(
a = b + c' = b + (c + d) = (b + c) + d = a + d.
\)
By weak cancellativity, $a + d = a$ implies $d = 0$, and thus $c = c'$.
}) element $c$ such that $b + c = a$; we denote this element  by $a - b$.
Later, we show how the results of this section can be utilised even when the underlying monoid is equipped with a non-total natural order.

Let $\mD$ be a $\Mo$-database over a schema $\sch$, and let  $\Sigma$ be a set of INDs. 
Let $*$ be a constant that is not in $\adom(\mD)$.
We consider the following chase rule, adapted from the relational setting of Casanova, Fagin, and Papadimitriou~\shortcite{CasanovaFP84}.

\vspace{2mm}
\noindent \textbf{$\oplus$-rule.} Let $\sigma=\inc{R}{\vec{A}}{S}{\vec{B}}\in \Sigma$, and 
suppose there exists a sequence of constants $\vec{a}$ such that $ \marg{R^{\mD}}{\vec{A}}(\vec{a}) \not\leq \marg{S^{\mD}}{\vec{B}}(\vec{a})$.
Then we say that the $\oplus$-rule is \emph{applicable to $\mD$ w.r.t. $\sigma,\vec{a}$}.
Let $t'\colon \Att(S)\to \Const$ be a tuple such that $t'(\vec{B})=\vec{a}$, and $t'(C)=*$ for each attribute $C\in \Att(S)$ that is not listed in $\vec{B}$. 
Let $\mD'$ be the database obtained from $\mD$ by leaving all $\Mo$-relations 
$U \neq S$ unchanged, and setting %
\begin{align*}
S^{{\mD}'}(t') &\dfn
  S^{\mD}(t') \mplus  (  \marg{R^{\mD}}{\vec{A}}(\vec{a}) -  \marg{S^{\mD}}{\vec{B}}(\vec{a})), \text{ and }\\
S^{\mD'}(t) &\dfn S^{\mD}(t), \text{ for all } t \neq t'.
\end{align*}
Then $\mD'$ is the \emph{result of applying the $\oplus$-rule to $\mD$ w.r.t. $\sigma,\vec{a}$}, and the corresponding chase step is denoted $\chasenext{\mD}{\mD'}{\sigma,\vec{a}}$.
Moreover, we say that this chase step \emph{increments $t'$}. We observe that 
the chase step is monotone, i.e., $U^{\mD}(t) \leq U^{\mD'}(t)$ for all $U\in \sch$ and $t\in \tup(U)$. Moreover, we obtain the following equality:
\begin{align}
 \marg{S^{\mD'}}{\vec{B}}(\vec{a}) &= \marg{S^{\mD}}{\vec{B}}(\vec{a}) \mplus (  \marg{R^{\mD}}{\vec{A}}(\vec{a}) -  \marg{S^{\mD}}{\vec{B}}(\vec{a}))\nonumber\\
 &=  \marg{R^{\mD}}{\vec{A}}(\vec{a}).\label{eq:equals}
\end{align}

For a tuple $t$ that appears in the context of the chase, the \emph{degree} of $t$ refers to the number attributes it maps to a value different from $*$, that is, 
\begin{equation}\label{eq:degree}
    \deg(t)\coloneqq |\{A\in \Dom(t)\mid t(A)\neq *\}|.
\end{equation}

\vspace{1mm}
\noindent \textbf{$\oplus$-chase sequence.}
A \emph{$\oplus$-chase sequence (from $\mD$ by $\Sigma$)} is any sequence of $\Mo$-databases $(\mD_i)_{i\geq 0}$, where $\mD_0=\mD$,
\begin{enumerate}
    \item\label{it:twoplus} for each $i\geq 0$, $\chasenext{\mD_i}{\mD_{i+1}}{\sigma_i,\vec{a}_i}$ for some $\sigma_i\in \Sigma$ and $\vec{a}\in \Const^{\ar(\sigma_i)}$,
    \item\label{it:three} for each $i\geq 0$, %
    $\sigma=\inc{R}{\vec{A}}{S}{\vec{B}}\in \Sigma$, and $\vec{a}\in \Const^{\ar(\sigma)}$,  there exists $j> i$, 
    $ \marg{R^{\mD_i}}{\vec{A}}(\vec{a}) \leq  \marg{S^{\mD_j}}{\vec{B}}(\vec{a})$. 
\end{enumerate}
Informally, the last condition states that every IND must eventually be
``closed.''  
It is not hard to see that one can always construct a chase sequence satisfying the above conditions. 
\begin{lemma}\label{lem:exists}
     Let $\Mo$ be a non-trivial positive commutative monoid that is weakly cancellative and equipped with a {total natural order}.
   Let $\mD$ be a $\Mo$-database and
   $\Sigma$ a finite set of INDs. %
   Then, there exists a $\oplus$-chase sequence from $\mD$ by $\Sigma$.
\end{lemma}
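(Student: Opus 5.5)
The plan is to build the sequence by fair scheduling: keep a FIFO queue of pending \emph{obligations} and discharge them one at a time. An obligation is a triple $(i,\sigma,\vec{a})$ with $\sigma=\inc{R}{\vec{A}}{S}{\vec{B}}\in\Sigma$. It records that we must eventually reach some $j>i$ with $\marg{R^{\mD_i}}{\vec{A}}(\vec{a})\leq \marg{S^{\mD_j}}{\vec{B}}(\vec{a})$.

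\textbf{Finitely many obligations per stage.} If $\marg{R^{\mD_i}}{\vec{A}}(\vec{a})=0$, condition~\ref{it:three} holds trivially, because $0$ is below every element in the natural order. So the only relevant $\vec{a}$ are those in the support of $\marg{R^{\mD_i}}{\vec{A}}$. This set is finite, since $\mD_i$ has finite support (the constant $*$ may occur in it, which is harmless). Since $\Sigma$ is finite, each stage $i$ contributes only finitely many obligations. Whenever a new database $\mD_i$ is produced, we append all its relevant obligations to the end of the queue.

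\textbf{Discharging an obligation.} Suppose the current database is $\mD_k$ and the head of the queue is $(i,\sigma,\vec{a})$ with $i\le k$. If $\marg{R^{\mD_i}}{\vec{A}}(\vec{a})\leq \marg{S^{\mD_k}}{\vec{B}}(\vec{a})$, we drop the obligation. Otherwise we apply the $\oplus$-rule w.r.t.\ $\sigma,\vec{a}$ to $\mD_k$, which requires checking that the rule is applicable.

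\begin{itemize}
\item Since chase steps are monotone and marginalisation is monotone in the natural order, $\marg{R^{\mD_i}}{\vec{A}}(\vec{a})\le \marg{R^{\mD_k}}{\vec{A}}(\vec{a})$.
\item Hence if $\marg{R^{\mD_k}}{\vec{A}}(\vec{a})\le \marg{S^{\mD_k}}{\vec{B}}(\vec{a})$ held, transitivity would discharge the obligation. So this inequality fails, and the rule is applicable.
\item By equation~\eqref{eq:equals}, $\marg{S^{\mD_{k+1}}}{\vec{B}}(\vec{a})=\marg{R^{\mD_k}}{\vec{A}}(\vec{a})\ge \marg{R^{\mD_i}}{\vec{A}}(\vec{a})$. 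So the obligation is met with $j=k+1>i$.
\end{itemize}

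Once an obligation is met at some $j$, monotonicity keeps it met at all later stages. Each dequeue costs at most one chase step, and only finitely many obligations precede any given one in the queue. Therefore every obligation is eventually discharged, which gives condition~\ref{it:three}. Condition~\ref{it:twoplus} holds because every step we perform is a genuine $\oplus$-chase step.

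\textbf{Main obstacle: the terminating case.} The queue may empty, or contain only already-discharged obligations, with no step performed. Then every relevant $(k,\sigma,\vec{a})$ is satisfied at the current stage, so $\mD_k\models\Sigma$ and no $\oplus$-rule is applicable. In this case the sequence ends finitely at $\mD_k$, and condition~\ref{it:three} is read for the last index with $j=k$. This is the natural convention, since $\mD_k$ already satisfies all of $\Sigma$. I expect pinning down this convention for finite sequences, together with the bookkeeping of the fair queue, to be the only delicate point. The rest follows directly from monotonicity and equation~\eqref{eq:equals}. Totality of the order and weak cancellativity are used only to make the difference $a-b$ in the rule well defined.
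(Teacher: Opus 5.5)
Your proof is correct and rests on the same idea as the paper's: a fair schedule of rule applications, where each repair is checked via monotonicity of the chase and \eqref{eq:equals}. The only difference is bookkeeping. The paper cycles round-robin through the fixed finite set of pairs $(\sigma,\vec a)$ with $\vec a\in(\adom(\mD)\cup\{*\})^{|\vec A|}$, which is finite because the $\oplus$-chase introduces no constant other than $*$. Your FIFO queue of stage-indexed obligations needs only per-stage finiteness of the support, and your explicit convention for finite sequences in \cref{it:three} pins down a point the paper's definition leaves implicit.
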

\begin{proof}
    Choose some ordering of pairs $(\sigma,\vec{a})$, where $\sigma$ is an IND of the form $\inc{R}{\vec{A}}{S}{\vec{B}}\in \Sigma$, and $\vec{a}\in (\adom(\mD)\cup\{*\})^{|\vec{A}|}$. During the chase, each chase step updates a variable $(\sigma,\vec{a})$. This variable is set to be the next one (assuming the last element is followed by the first element) with respect to which the $\oplus$-rule
is applicable to the current $\Mo$-database.
Using~\eqref{eq:equals}, it is straightforward to verify that
\cref{it:three} is satisfied by the resulting chase sequence.
\end{proof}

Given a finite $\oplus$-chase sequence
$(\mD_0,\dots,\mD_n)$, we call the final $\Mo$-database $\mD_n$ the
\emph{result of the $\oplus$-chase of $\mD$ by $\Sigma$}.
If every $\oplus$-chase sequence is finite, we say that the
\emph{$\oplus$-chase of $\mD$ by $\Sigma$ terminates}.
In this case, we write $\chaseplus{\mD}{\Sigma}$ to denote
{an arbitrary result} of the $\oplus$-chase of $\mD$ by $\Sigma$.
As the next example demonstrates, the $\mplus$-chase need not terminate, and even when it does,
it may fail to produce a unique result.
\begin{example}
Let $t$ be a tuple over $\{A,B,C\}$ such that $t(A,B,C)=(a,b,c)$.
Consider an $\mathbb{N}$-database $\mD=\{R\}$, where $R(t)=1$, and $R(t')=0$ for $t'\neq t$. 
Let $\Sigma=\{\inc{R}{B,C}{R}{A,B}\}$. Then the $\mplus$-chase sequence from $\mD$ by $\Sigma$ is infinite.
This is illustrated by the relation on the left-hand side of \Cref{tab:pluschase}, in which one row represents one chase step
which increments the multiplicity of this row by $1$. 

On the other hand, $\Sigma$ is equivalent with its symmetric closure $\Sigma'=\{\inc{R}{B,C}{R}{A,B},\inc{R}{A,B}{R}{B,C}\}$. Now, the $\mplus$-chase sequence from $\mD$ by $\Sigma'$ is finite, as illustrated by the relation on the right-hand side of \Cref{tab:pluschase}. Note that the colored tuple in this relation can either be kept or removed, resulting in two different $\mplus$-chase sequences. These two sequences illustrate that the result of the $\mplus$-chase, if it exists, is not necessarily unique.
    \begin{table}
    $$
    \footnotesize
\begin{array}{ccc}
\begin{array}[t]{ccc}
  A & B & C \\
\toprule
   a & b & c \\
   b & c & * \\
   c & * & * \\
   * & * & * \\
   * & * & * \\
   \vdots & \vdots & \vdots 
\end{array}
&
\quad
&
\begin{array}[t]{cccc}
 A & B & C \\
\toprule
   a & b & c \\
   b & c & * \\
   c & * & * \\
  \cellcolor{gray!20} * & \cellcolor{gray!20}* & \cellcolor{gray!20}* \\ %
   * & a & b \\
   * & c & * \\
   c & * & * \\
   * & * & c \\
   * & * & a 
\end{array}
\end{array}
$$
\caption{$\mplus$-chase example. Left: chasing the first row by $\inc{R}{B,C}{R}{A,B}$ does not terminate. Right: chasing the first row by $\inc{R}{B,C}{R}{A,B}$ and $\inc{R}{A,B}{R}{B,C}$ terminates. \label{tab:pluschase}}
\end{table}

\end{example}
The previous example demonstrates that the $\mplus$-chase terminates if we close
the dependency set by symmetry. Note that the
example is confined to the uni-relational setting, where symmetry is sound. In the multirelational setting, however, symmetry is not
 sound and therefore cannot be assumed freely. The following result shows that
termination can nevertheless be guaranteed by closing the set of inclusion
dependencies by weak symmetry, together with the standard axioms.
The proof, deferred to the supplementary material, adapts
\cite[Theorem~33]{HannulaV22} in three directions: it
moves from the uni-relational to the multirelational setting, from the
non-negative reals to an arbitrary monoidal context, and makes explicit
the exact monoid axioms and closure properties on which the argument depends.
Specifically, the results that follow require only totality of the natural order, while the underlying monoid need not be positive or weakly cancellative. 

\begin{restatable}{lemma}{terminates}
\label{lem:terminates}
    Let $\Mo$ be a non-trivial positive commutative monoid that is weakly cancellative and equipped with a {total natural order}.
   Let $\mD$ be a $\Mo$-database over a schema $\sch$ and
   $\Sigma$ a finite set of INDs that is closed under $\wsvdash$. Then the $\oplus$-chase of $\mD$ by $\Sigma$ terminates, i.e., $\chaseplus{D}{\cl{\Sigma}}$ exists.
\end{restatable}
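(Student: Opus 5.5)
The plan is to adapt the proof of \cite[Theorem~33]{HannulaV22} by first organising the relations of $\sch$ according to $\Sigma$, and then proving termination one component at a time. Let $\Sigma$ be closed under $\wsvdash$. Define a directed graph on $\sch$ with an edge $R\to S$ whenever $\inc{R}{\emptyset}{S}{\emptyset}\in\Sigma$. Every IND $\inc{R}{\vec{A}}{S}{\vec{B}}\in\Sigma$ yields such an edge by projection, and the edge relation is transitive by transitivity, so the graph is a preorder. Let $C_1,\dots,C_m$ be its strongly connected components, listed in a topological order.

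The first structural observation is that inside a component all INDs come in symmetric pairs. If $\inc{R}{\vec{A}}{S}{\vec{B}}\in\Sigma$ with $R,S$ in the same component, then $\inc{S}{\emptyset}{R}{\emptyset}\in\Sigma$, so weak symmetry gives $\inc{S}{\vec{B}}{R}{\vec{A}}\in\Sigma$. Every IND between distinct components points forward in the topological order. Since a $\oplus$-step w.r.t.\ $\inc{R}{\vec{A}}{S}{\vec{B}}$ modifies only $S$, relations in $C_1$ are changed only by steps internal to $C_1$. In general, the relations of $C_k$ are changed by internal steps and by steps whose source lies in $C_1\cup\dots\cup C_{k-1}$.

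I then argue by induction on $k$ that in any $\oplus$-chase sequence only finitely many steps modify a relation in $C_k$. For the external contribution, the induction hypothesis says the earlier components eventually freeze. Once they do, each incoming IND $\inc{R}{\vec{A}}{S}{\vec{B}}$ with $S\in C_k$ can fire at most once per tuple $\vec{a}$: after it fires, \eqref{eq:equals} holds, $R$ no longer changes, and $S$ only grows by monotonicity. Every constant in a chase tuple lies in $\adom(\mD)\cup\{*\}$, because the increments only copy values from $\vec{a}$ or use $*$. Hence there are only finitely many relevant $\vec{a}$, and after finitely many steps $C_k$ is affected only by internal, symmetric INDs.

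It remains to show that a component closed under symmetric INDs, receiving no external input, is chased finitely. This is the step I expect to be the main obstacle, because the monoid need not be Archimedean in any useful sense and increments can be arbitrarily small, so counting steps is not enough. My plan is to use the degree of tuples \eqref{eq:degree} together with a potential argument. First, for a fixed finite set of candidate tuples over $\adom(\mD)\cup\{*\}$, I would show that the total masses $R[\emptyset]$, for $R\in C_k$, are bounded by the maximum total mass $M$ present in the component when it becomes isolated. The intended reason is that a step on $\sigma$ raises $\marg{S}{\vec{B}}(\vec{a})$ exactly to $\marg{R}{\vec{A}}(\vec{a})$, and the symmetric dependency $\sigma^{-1}\in\Sigma$ prevents $S$ from overshooting $R$ in the aggregate. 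Second, I would process tuples by decreasing degree. The degree-$d$ marginals are determined by equalities of the form \eqref{eq:equals} among finitely many entries. Using totality and weak cancellativity, which make subtraction $a-b$ unique, I would show that once all higher-degree entries have stabilised, each degree-$d$ entry can be incremented only finitely often before all the symmetric inequalities become equalities. At that point no $\oplus$-rule is applicable. If the bound by $M$ fails to go through directly, I would instead try the potential $\sum_{\sigma,\vec{a}}$ of the deficiencies $\marg{R}{\vec{A}}(\vec{a})-\marg{S}{\vec{B}}(\vec{a})$, taken over the symmetric pairs, and show that it strictly decreases in the natural order. Combined with the stabilisation of the finitely many entries, this would give termination.
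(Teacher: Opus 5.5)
\textbf{Gap: termination of an isolated symmetric component is not proved.}

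Your reduction to strongly connected components is sound. Closure under $\wsvdash$ makes ``$\inc{R}{\emptyset}{S}{\emptyset}\in\Sigma$'' a preorder, and weak symmetry puts the inverse of every intra-component IND into $\Sigma$. Once the earlier components have frozen, each incoming pair $(\sigma,\vec a)$ fires at most once. But this only isolates the difficulty. Termination of a component in which every IND comes with its inverse is the entire content of the lemma, and your plan does not establish it. Both concrete mechanisms you offer fail:
\begin{itemize}
\item The mass bound is false. In a one-relation component every $\oplus$-step strictly increases $R[\emptyset]$ (the increment is non-zero, and weak cancellativity applies), so the bound is violated as soon as one step is taken. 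In the paper's own example ($R(a,b,c)=1$ with the closure of $\{\inc{R}{B,C}{R}{A,B},\inc{R}{A,B}{R}{B,C}\}$), any result must contain a second tuple with $(A,B)=(b,c)$.
\item The deficiency potential does not decrease. In the same example, firing $\inc{R}{B,C}{R}{A,B}$ at $(b,c)$ adds $(b,c,*)$. This closes the deficiency at $(b,c)$ but opens one of equal size at $(c,*)$, because the incremented tuple feeds other marginals.
\end{itemize}
Moreover, neither a bounded mass nor a strictly decreasing potential would yield termination over $\Rp$, whose natural order is not well-founded; this is the very issue you flagged yourself. Your remaining assertion, that each degree-$d$ entry is incremented only finitely often, is stated without a mechanism. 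Your plan also never invokes the fairness condition (item~2 in the definition of a $\oplus$-chase sequence), which is indispensable here.

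\textbf{Missing idea: once higher-degree marginals freeze, chase steps copy values exactly.} Suppose no tuple of degree $>d$ changes any more. Fairness then forces the frozen marginals with more than $d$ non-$*$ entries to satisfy every IND in $\Sigma$, hence to agree across each pair $\sigma,\inv{\sigma}$ (Claim~\ref{claim:two}). Consider a step on $\sigma=\inc{R}{\vec A\vec C}{S}{\vec B\vec D}$ at $\vec a\vec *$, with $\vec a$ free of $*$ and of length $d$. By the previous fact, this step makes $\marg{S}{\vec B}(\vec a)$ equal to the current value of $\marg{R}{\vec A}(\vec a)$ (Claim~\ref{claim:three}).

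The paper turns this into a contradiction. It takes $M$ to be the maximal degree of infinitely incremented tuples and traces the increments of such a tuple backwards to a cycle of INDs, each symmetric by weak symmetry. Around this cycle the marginal returns to its own value after non-zero increments, giving $a\mplus b_1\mplus\cdots\mplus b_k=a$, which contradicts weak cancellativity and positivity.

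The same fact would also close your downward induction on degree directly. Each degree-$d$ step overwrites one $*$-free degree-$d$ marginal with a value that is already present and, by weak cancellativity, strictly larger. So all these values stay within the finite set present at freezing time, and each marginal can change only finitely often.
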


We thus obtain the following results using Lemmas \ref{lem:exists} and \ref{lem:terminates}, and the $\mplus$-chase construction.
\begin{theorem}\label{thm:plusterminates}
        Let $\Mo$ be a non-trivial positive commutative monoid that is weakly cancellative and equipped with a {total natural order}.
   Let $\mD$ be a $\Mo$-database, and
   $\Sigma$ a finite set of INDs. Then $\chaseplus{\mD}{\cl{\Sigma}}$ is a $\Mo$-database satisfying $\Sigma$.
\end{theorem}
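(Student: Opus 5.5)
By Lemma~\ref{lem:exists}, applied to $\mD$ and the finite set $\cl{\Sigma}$, there is a $\oplus$-chase sequence from $\mD$ by $\cl{\Sigma}$. By Lemma~\ref{lem:terminates}, every such sequence is finite. The plan is to take such a finite sequence $(\mD_0,\dots,\mD_n)$ and show that its last database $\mD_n$ satisfies $\cl{\Sigma}$. Since $\Sigma\subseteq\cl{\Sigma}$, this gives $\mD_n\models\Sigma$.

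We first check that $\cl{\Sigma}$ is finite, so that both lemmas apply. Weak symmetry and the three standard rules never introduce new relation names or attributes. Every derived IND therefore mentions only relation names from $\sch$ and attribute sequences drawn from the finitely many attributes of these relations. Hence there are only finitely many candidate INDs, and $\cl{\Sigma}$ is finite. It is also closed under $\wsvdash$ by definition, which is the hypothesis of Lemma~\ref{lem:terminates}. Each $\mD_i$ is a $\Mo$-database with finite support, because every chase step changes the weight of a single tuple.

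The main step is to show that $\mD_n$ satisfies $\cl{\Sigma}$. Here we must be careful with condition~\ref{it:three}: read literally for a finite sequence, it requires some index $j>n$, which does not exist. The intended meaning is that a finite sequence ends only when the $\oplus$-rule is no longer applicable, and this is exactly how the sequence in the proof of Lemma~\ref{lem:exists} is built: it stops when no pair $(\sigma,\vec{a})$ admits an application. So we argue as follows. Suppose $\mD_n\not\models\sigma$ for some $\sigma=\inc{R}{\vec{A}}{S}{\vec{B}}\in\cl{\Sigma}$. Then there is a tuple $\vec{a}$ with $\marg{R^{\mD_n}}{\vec{A}}(\vec{a})\not\leq\marg{S^{\mD_n}}{\vec{B}}(\vec{a})$. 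In that case the $\oplus$-rule is applicable to $\mD_n$ w.r.t.\ $\sigma,\vec{a}$, so the sequence could be extended by one more step. The extended sequence is again a $\oplus$-chase sequence (the extension is continued via the construction of Lemma~\ref{lem:exists}), and this contradicts the claim of Lemma~\ref{lem:terminates} that all chase sequences are finite and maximal. We may restrict attention to $\vec{a}$ over $\adom(\mD_n)\cup\{*\}$: for any other $\vec{a}$, both marginals are $0$, and $0\leq 0$ holds.

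The hardest point is this formal gap in reading condition~\ref{it:three} for finite sequences. We resolve it by adopting the fairness interpretation: a finite chase sequence is one to which no further step applies. Under this reading, $\mD_n\models\cl{\Sigma}\supseteq\Sigma$ follows immediately, and combined with equation~\eqref{eq:equals} and monotonicity it is consistent with the closure condition along the sequence.
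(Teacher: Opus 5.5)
Your proposal is correct and follows the paper's own route. Lemma~\ref{lem:exists} gives a $\oplus$-chase sequence from $\mD$ by the finite, $\wsvdash$-closed set $\cl{\Sigma}$, Lemma~\ref{lem:terminates} makes it finite, and condition~\ref{it:three}, read for finite sequences as you do (the last database admits no further application of the $\oplus$-rule), gives $\mD_n\models\cl{\Sigma}\supseteq\Sigma$.

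One small correction: extending a finite sequence by one more step does not contradict Lemma~\ref{lem:terminates}, since the longer sequence is still finite. The conclusion therefore rests entirely on your reading of condition~\ref{it:three}, which is also what the paper tacitly relies on.
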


\begin{corollary}\label{cor:Nchase}
   Let $\mD$ be an $\mathbb{N}$-database and
   $\Sigma$ a finite set of INDs. $\chaseplus{\mD}{\cl{\Sigma}}$ is an $\mathbb{N}$-database satisfying $\Sigma$.\looseness=-1
\end{corollary}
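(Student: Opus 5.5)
This is an instance of \Cref{thm:plusterminates} with $\Mo=\N$. The plan is to verify that $\N=(\N,+,0)$ meets each hypothesis of that theorem, and then to check the one extra point that the corollary leaves implicit: the resulting object is again an $\N$-database. The latter holds because every chase step only adds values of the form $a-b\in\N$ to tuple weights.

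The hypotheses are checked one at a time.
\begin{itemize}
\item \emph{Non-triviality}: $\N$ is non-trivial, since $0\neq 1$.
\item \emph{Commutativity and positivity}: $\N$ is commutative, and it is positive because $a+b=0$ forces $a=b=0$ for natural numbers.
\item \emph{Weak cancellativity}: if $a+b=b$, then $a=0$ by ordinary cancellation in $\N$.
\item \emph{Totality of the natural order}: the natural order $a\leq^{\N} b \iff \exists c\,(a+c=b)$ coincides with the usual order on $\N$, which is total.
\end{itemize}
The partial subtraction $a-b$ used in the $\oplus$-rule is then ordinary subtraction, defined whenever $b<a$, and its value lies in $\N$.

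With the hypotheses in place, \Cref{lem:exists} gives a $\oplus$-chase sequence from $\mD$ by $\cl{\Sigma}$. Since $\cl{\Sigma}$ is finite (there are only finitely many INDs over the attributes of the finite schema) and closed under $\wsvdash$, \Cref{lem:terminates} makes this sequence finite. Its last element $\mD_n$ admits no applicable $\oplus$-rule, since condition~\ref{it:three}, applied at the final index, forces every IND in $\cl{\Sigma}$ to hold there. So $\mD_n\models\cl{\Sigma}\supseteq\Sigma$.

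Each step changes the weight of a single tuple by a natural number. Hence the result still has finite support and values in $\N$, so it is an $\N$-database. Alternatively, all of this is packaged in \Cref{thm:plusterminates}, and we simply invoke it.

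The only point needing any care is finiteness of $\cl{\Sigma}$, which is what allows \Cref{lem:terminates} to apply. It holds because the INDs over a fixed finite schema with pairwise distinct attribute sequences are finitely many.
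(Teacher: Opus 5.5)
Your proposal is correct and follows the paper's route. The paper obtains the corollary simply as \Cref{thm:plusterminates} (itself \Cref{lem:exists} combined with \Cref{lem:terminates}) instantiated at $\Mo=\N$. Your checks of the hypotheses, of the finiteness of $\cl{\Sigma}$ over the finite schema, and of the increments $a-b$ staying in $\N$ spell out what the paper leaves implicit.

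One small remark: condition~\ref{it:three} as literally written (with $j>i$) cannot be met at the last index of a finite sequence. Your step that the final database satisfies $\cl{\Sigma}$ therefore uses the intended reading that a finite $\oplus$-chase sequence stops exactly when no $\oplus$-rule is applicable, which is the reading the paper tacitly relies on as well.
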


\section{Weakly Cancellative Monoids}
We now turn to weakly cancellative monoids. Using the results established in the previous section, we show that the axioms of inclusion dependencies extended with weak symmetry are sound and complete over $\Mo$-databases, for any non-trivial positive commutative weakly cancellative monoid $\Mo$.
Additionally, we obtain that the $\mplus$-chase defined in the previous section characterises $\Mo$-implication in this setting.

We start by presenting two lemmas; the first is proven in the supplementary material.
A monoid homomorphism  $f\colon \Mo \to \Mo'$ is called an \emph{order embedding} if it is injective, and $a\leq b$ iff $f(a)\leq f(b)$ (where $\leq$ denotes the natural orders). If such an $f$ exists, we say that $\Mo$ \emph{order embeds into} $\Mo'$.
\begin{restatable}{lemma}{WCEmbed}\label{lem:wcembed}
    Let $\Mo$ be a non-trivial weakly cancellative positive commutative monoid. Then $\mathbb{N}$ order embeds into $\Mo$.
\end{restatable}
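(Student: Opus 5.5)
The plan is to pick a non-zero element $e\in K$, which exists by non-triviality. Define $f\colon\mathbb{N}\to K$ by $f(n)\dfn ne$, the $n$-fold sum $e\mplus\dots\mplus e$, with $f(0)=0$. By associativity $f(m+n)=f(m)\mplus f(n)$, so $f$ is a monoid homomorphism. The real content is to show that $f$ is injective and that it reflects and preserves the natural order.

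Preservation of order is immediate. If $m\leq n$ in $\mathbb{N}$, then $f(n)=f(m)\mplus f(n-m)$, so $f(m)\leq^\Mo f(n)$.

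For injectivity, I will first show that $ne\neq 0$ for $n\geq 1$. This follows from positivity: $ne=e\mplus (n-1)e=0$ would force $e=0$. Next, suppose $f(m)=f(n)$ with $m<n$, and write $n=m+k$ with $k\geq 1$. Then $ke\mplus me=me$, and weak cancellativity (the condition $a\mplus b=b\Rightarrow a=0$) gives $ke=0$. This contradicts the previous observation, so $f$ is injective.

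Reflection of order is the step I expect to need the most care, since $\mathbb{N}$ is totally ordered but $\Mo$ need not be. Suppose $f(m)\leq^\Mo f(n)$ but $m>n$, and write $m=n+k$ with $k\geq 1$. There is then some $c$ with $me\mplus c=ne$, so
\[
ne=(ne)\mplus (ke\mplus c).
\]
Weak cancellativity gives $ke\mplus c=0$, and positivity then gives $ke=0$, again a contradiction. Hence $f(m)\leq f(n)$ implies $m\leq n$, and $f$ is an order embedding. Commutativity is used only implicitly, to rearrange the sums.
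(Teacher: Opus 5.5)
Your proposal is correct and follows the paper's proof almost step for step: the same map $n\mapsto ne$, the same homomorphism and preservation argument, and the same injectivity argument via weak cancellativity followed by positivity. The only difference is in order reflection. The paper combines totality of $\mathbb{N}$, antisymmetry of $\leq^{\Mo}$ (which it proves separately from weak cancellativity and positivity) and injectivity. You instead inline that antisymmetry computation directly via $ne = ne \mplus (ke \mplus c)$, which is equally valid and slightly more self-contained.
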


\begin{lemma}\label{lem:Membed}
Let $\Mo$ and $\Mo'$ be monoids s.t.  $\Mo$ order embeds into $\Mo'$ and $\Sigma\cup\{\sigma\}$ a set of INDs. Then $\Sigma \models_{\Mo'} \sigma \Rightarrow  \Sigma \models_{\Mo} \sigma$.
\end{lemma}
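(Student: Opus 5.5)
We argue by contraposition. Assume $\Sigma \not\models_{\Mo} \sigma$ and fix a $\Mo$-database $\mD$ with $\mD \models \Sigma$ and $\mD \not\models \sigma$. Let $f\colon \Mo\to\Mo'$ be an order embedding. The plan is to transport $\mD$ along $f$: define $f(\mD)$ by setting $R^{f(\mD)}(t) \dfn f(R^{\mD}(t))$ for every $R\in\sch$ and tuple $t$. We then show that $f(\mD)\models\Sigma$ and $f(\mD)\not\models\sigma$, which witnesses $\Sigma\not\models_{\Mo'}\sigma$.

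First we check that $f(\mD)$ is a $\Mo'$-database. Since $f$ is a monoid homomorphism, $f(0)=0$, so the support of $R^{f(\mD)}$ is contained in the support of $R^{\mD}$ and is therefore finite. (Injectivity in fact gives equality of supports, but finiteness is all we need.)

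The key step is that marginalisation commutes with $f$. For attributes $\vec{A}$ and constants $\vec{a}$,
\[
\marg{R^{f(\mD)}}{\vec{A}}(\vec{a}) = \bigmplus_{t[\vec{A}]=\vec{a}} f(R^{\mD}(t)) = f\bigl(\marg{R^{\mD}}{\vec{A}}(\vec{a})\bigr).
\]
This holds because $f$ preserves the binary sum and $0$, so it preserves finite sums. The sums are effectively finite because only finitely many tuples have non-zero weight, and zero terms map to $0$.

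With this identity, satisfaction transfers in both directions. For any IND $\inc{R}{\vec{A}}{S}{\vec{B}}$ and any $\vec{a}$,
\[
\marg{R^{\mD}}{\vec{A}}(\vec{a}) \leq^{\Mo} \marg{S^{\mD}}{\vec{B}}(\vec{a}) \quad\text{iff}\quad f\bigl(\marg{R^{\mD}}{\vec{A}}(\vec{a})\bigr) \leq^{\Mo'} f\bigl(\marg{S^{\mD}}{\vec{B}}(\vec{a})\bigr),
\]
since $f$ is an order embedding. By the identity above, the right-hand side is exactly the satisfaction condition for $f(\mD)$. Hence $f(\mD)\models\tau$ iff $\mD\models\tau$ for every IND $\tau$. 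Applying this to each member of $\Sigma$ and to $\sigma$ gives $f(\mD)\models\Sigma$ and $f(\mD)\not\models\sigma$, as required.

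The only delicate point is the "only if" direction of the order condition, which is needed to preserve the failure of $\sigma$. The "if" direction alone would follow from $f$ being a homomorphism. The reverse direction is supplied precisely by the definition of order embedding, so no real obstacle remains.
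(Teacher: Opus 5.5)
Your proof is correct and follows the paper's approach: transport $\mD$ along the embedding $f$ and show that each IND holds in $\mD$ iff it holds in the transported database, supplying the details (finite support, marginals commuting with $f$) that the paper leaves as straightforward. One small terminological slip: in your displayed order condition, the direction $f(x)\leq f(y)\Rightarrow x\leq y$ is the ``if'' direction, not the ``only if'' one, though your mathematical use of it is right.
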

\begin{proof}
For a $\Mo$-database $\mD$, define an $\Mo'$-database $\mD'$ via $R^{\mD'}(t)\dfn f(R^{\mD}(t))$. By Lemma \ref{lem:wcembed} it is straightforward to check that $\mD \models \tau$ if and only if $\mD'\models \tau$, for any IND $\tau$. From this, the lemma statement follows.
\end{proof}

With these lemmas in place, we can now establish soundness and
completeness via a canonical $\mplus$-chase construction. To this end, we
associate an IND $\tau$ with a canonical starting database $\mD_\tau$.
Then, we show that semantic entailment of $\tau$ over $\Mo$, its syntactic
derivability, and its satisfaction in the $\mplus$-chase of $\mD_\tau$ are
equivalent, provided that the chase is performed over the derivable closure
of the assumption set. From now on, we assume that the natural numbers are included in the constants.

\vspace{2mm}
\noindent\textbf{Canonical $\mathbb{N}$-database $\mD_\tau$.}
For 
\begin{equation}\label{eq:chasetau}
    \tau=\inc{R}{A_1, \dots ,A_n}{S}{B_1, \dots ,B_n},
    \end{equation}
define $t_\tau\colon\Att(R)\to \Const$ such that $t_\tau(A_i)=i$, for $i\in [n]$, and otherwise  $t_\tau(C)=*$.
Then, define $\mD_\tau$ as the $\mathbb{N}$-database $\{R^D\}$ such that $R^D(t_\tau)=1$, and $R^D(t')=0$ for $t'\neq t$.

\begin{theorem}\label{prop:implicit}
    Let $\Mo$ be a non-trivial weakly cancellative positive commutative monoid.
    Let $\Sigma\cup \{\tau\}$ be a finite set of INDs. Then the following statements are equivalent:\\
    \begin{enumerate*}
        \item\label{it:second} $\Sigma\wsvdash \tau$;\hspace{5mm}
        \item\label{it:first} $\Sigma\momodels{\Mo} \tau$;\hspace{5mm}
        \item\label{it:third} $\chaseplus{\mD_\tau}{\cl{\Sigma}}\models \tau$.
    \end{enumerate*}
\end{theorem}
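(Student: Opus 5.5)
We establish the cycle of implications \ref{it:second} $\Rightarrow$ \ref{it:first} $\Rightarrow$ \ref{it:third} $\Rightarrow$ \ref{it:second}.

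\emph{From \ref{it:second} to \ref{it:first}.} This is soundness. The rules of \Cref{tab:inc-rules} are sound by Proposition~\ref{prop:sound}, and weak symmetry is sound by Proposition~\ref{prop:sym-sound}, since $\Mo$ is positive and weakly cancellative. An induction on the length of a $\wsvdash$-derivation then gives $\Sigma\momodels{\Mo}\tau$.

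\emph{From \ref{it:first} to \ref{it:third}.} By Lemma~\ref{lem:wcembed}, $\mathbb{N}$ order embeds into $\Mo$, so Lemma~\ref{lem:Membed} turns $\Sigma\momodels{\Mo}\tau$ into $\Sigma\models_{\mathbb{N}}\tau$. Now $\mathbb{N}$ is a non-trivial, positive, commutative, weakly cancellative monoid whose natural order is total, so Corollary~\ref{cor:Nchase} applies. It shows that $\chaseplus{\mD_\tau}{\cl{\Sigma}}$ is a well-defined $\mathbb{N}$-database satisfying $\Sigma$, and therefore it satisfies $\tau$. Here the chase is read over $\mathbb{N}$, which is the monoid of the canonical database $\mD_\tau$.

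\emph{From \ref{it:third} to \ref{it:second}.} This is the substantive direction. Write $\mD^*=\chaseplus{\mD_\tau}{\cl{\Sigma}}$. We prove, by induction along the chase sequence, the following invariant. For every relation $U$ and every tuple $t$ with $U^{\mD_i}(t)\neq 0$, let $C_1,\dots,C_m$ be the attributes of $U$ on which $t$ is not $*$. Then $t(C_j)\in[n]$, and
\[
\inc{R}{A_{t(C_1)},\dots,A_{t(C_m)}}{U}{C_1,\dots,C_m}\in\cl{\Sigma}.
\]
The base case is the single tuple $t_\tau$, which is handled by reflexivity. For the step, suppose the chase applies $\sigma=\inc{R'}{\vec{A}'}{S'}{\vec{B}}$ at $\vec{a}$ with $\vec{a}$ containing no $*$ (the case with $*$ entries is discussed below). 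Then some support tuple $s$ of $R'$ projects to $\vec{a}$ on $\vec{A}'$. Projecting the IND attached to $s$ by the invariant and composing with $\sigma$ by transitivity yields the required IND for the new tuple $t'$. The entries of $t'$ are the values $\vec{a}$ placed on $\vec{B}$, and these lie in $[n]\cup\{*\}$ with distinct non-$*$ values on distinct attributes, so the permutation rule applies cleanly.

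Finally, since $\mD^*\models\tau$ and $R^{\mD^*}[\vec{A}](1,\dots,n)\geq 1$, monotonicity and positivity give $S^{\mD^*}[\vec{B}](1,\dots,n)\neq 0$. Hence some support tuple of $S$ carries $(1,\dots,n)$ on $\vec{B}$. Projecting its invariant IND onto $\vec{B}$ gives exactly $\inc{R}{A_1,\dots,A_n}{S}{\vec{B}}=\tau$, so $\tau\in\cl{\Sigma}$ and $\Sigma\wsvdash\tau$.

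\emph{Main obstacle.} The delicate point is chase steps in which $\vec{a}$ contains $*$ entries. These steps can only introduce tuples that are even more $*$-filled. I would handle this by formulating the invariant only over the non-$*$ attributes, exactly as above. The invariant is unaffected if, in a step, some entries of $\vec{a}$ are $*$: those positions are simply dropped by projection. The bookkeeping for this case is where care is needed. One must also make sure that the same value is never placed on two different attributes of one tuple, and this holds because the entries of $t_\tau$ are distinct and each chase step copies values attribute-for-attribute.
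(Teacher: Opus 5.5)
Your proposal is correct and follows essentially the same route as the paper. The paper also proves soundness via Propositions~\ref{prop:sound} and~\ref{prop:sym-sound}, transfers entailment to $\mathbb{N}$ via Lemmas~\ref{lem:wcembed} and~\ref{lem:Membed} together with Corollary~\ref{cor:Nchase}, and uses a Casanova--Fagin--Papadimitriou style induction along the $\oplus$-chase showing that every support tuple witnesses an IND derivable from $\cl{\Sigma}$. Your invariant, stated over all non-$*$ attributes and treating $*$ entries and distinctness of values explicitly, is a slightly more careful version of the paper's sketched induction, whose form follows from yours by projection and permutation.
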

\begin{proof}
    (\ref{it:second} $\Rightarrow$ \ref{it:first}) Follows by Propositions \ref{prop:sound} and \ref{prop:sym-sound}.
    (\ref{it:first} $\Rightarrow$ \ref{it:third}) By Corollary \ref{cor:Nchase}, $\chaseplus{\mD_\tau}{\cl{\Sigma}}$ is a $\mathbb{N}$-database satisfying $\tau$. Since by Lemmas \ref{lem:wcembed} and \ref{lem:Membed}, we have $\Sigma\momodels{\mathbb{N}} \tau$, it follows that 
     $\chaseplus{\mD_\tau}{\cl{\Sigma}}\models \tau$.

    (\ref{it:third} $\Rightarrow$ \ref{it:second}) We only sketch this direction, as it is a simple adaptation from \cite[Theorem 3.1]{CasanovaFP84}. Since $\Sigma \wsvdash \cl{\Sigma}$, it suffices to show that $\chaseplus{\mD_\tau}{\Sigma'}\models \tau'$ implies $\Sigma'\wsvdash \tau'$ for any set of INDs $\Sigma'\cup\{\tau'\}$. For this, assuming $\tau$ takes the form \eqref{eq:chasetau}, it suffices to prove inductively on $i\geq 0$ that $\Sigma' \wsvdash\inc{R}{A_{i_1},\dots ,A_{i_k}}{S}{C_1,\dots ,C_k}$ whenever we find a $T$-tuple $t$, for some relation name $T$, and attributes $C_1, \dots ,C_n \in \Att(T)$, s.t. $T^{\mD_i}(t)\neq 0$ and $t(C_1, \dots ,C_k)=(i_1,\dots ,i_k)\in [n]^k$. In fact, we can show $\Sigma' \vdash\inc{R}{A_{i_1},\dots ,A_{i_k}}{S}{C_1,\dots ,C_k}$, as only the standard inference rules are needed in the induction. Then, given $n\in \mathbb{N}$ s.t. $D_n=\chaseplus{\mD_\tau}{\Sigma'}$, by assumption we find an $S$-tuple $t$ s.t. $S^{\mD_n}(t)\neq 0$ and $t(B_1, \dots ,B_n)=(1, \dots ,n)$. This entails \eqref{it:second} by the induction argument.\looseness=-1
\end{proof}

As a result we obtain the following theorem.
\begin{theorem}\label{thm:wc}
    Let $\Mo$ be a weakly cancellative positive commutative monoid.
    The axioms of inclusion dependencies extended with weak symmetry are sound and complete over $\Mo$-databases.
\end{theorem}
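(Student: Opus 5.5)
Theorem~\ref{thm:wc} is essentially a repackaging of Theorem~\ref{prop:implicit}, so I will derive it from there. The only obstacle is that Theorem~\ref{prop:implicit} assumes the monoid is non-trivial, whereas Theorem~\ref{thm:wc} does not. I therefore split into two cases according to whether $\Mo$ is trivial.

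\emph{Non-trivial case.} Soundness means $\Sigma\wsvdash\tau$ implies $\Sigma\momodels{\Mo}\tau$. Completeness means $\Sigma\momodels{\Mo}\tau$ implies $\Sigma\wsvdash\tau$. Both are exactly the equivalence of items~\ref{it:second} and~\ref{it:first} in Theorem~\ref{prop:implicit}. For soundness one can also argue rule by rule. The standard rules are sound by Proposition~\ref{prop:sound}, and weak symmetry is sound by Proposition~\ref{prop:sym-sound}, which uses positivity and weak cancellativity. A straightforward induction on derivations then propagates soundness from single rules to arbitrary derivations. For completeness the argument runs through the $\mplus$-chase. By Lemma~\ref{lem:wcembed}, $\mathbb{N}$ order embeds into $\Mo$, so by Lemma~\ref{lem:Membed}, $\Sigma\momodels{\Mo}\tau$ implies $\Sigma\momodels{\mathbb{N}}\tau$. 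The terminating chase $\chaseplus{\mD_\tau}{\cl{\Sigma}}$ exists by Lemma~\ref{lem:terminates} and Corollary~\ref{cor:Nchase}, and it satisfies $\Sigma$, hence also $\tau$. Finally, the Casanova--Fagin--Papadimitriou style induction over the chase steps extracts a derivation of $\tau$.

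\emph{Trivial case.} Here $K=\{0\}$, so every $\Mo$-relation is identically $0$ and every $\Mo$-database satisfies every IND. Hence $\Sigma\momodels{\Mo}\tau$ holds for all $\Sigma$ and $\tau$, and soundness is immediate. Completeness, however, would require every IND to be derivable from every $\Sigma$, including from $\Sigma=\emptyset$. This fails: for example, $\inc{R}{A}{S}{B}$ with $R\neq S$ is not derivable from $\emptyset$, since it already fails over Boolean databases and every derivable IND is sound over $\B$. So the statement must be read under the paper's standing convention that $\Mo$ always denotes a non-trivial positive commutative monoid. I will state explicitly that non-triviality is assumed, as fixed in the Preliminaries, and that under this assumption the theorem follows directly from Theorem~\ref{prop:implicit}.
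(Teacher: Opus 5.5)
Your proposal is correct and follows the paper, which reads Theorem~\ref{thm:wc} off directly from the equivalence of items~\ref{it:second} and~\ref{it:first} in Theorem~\ref{prop:implicit} under the standing non-triviality convention, just as you do. In your trivial-case aside, the appeal to $\B$ does not justify the claim as stated, because weak symmetry is not sound over $\B$ (take $a=b=1$ in Table~\ref{tbl:WA}); argue instead via $\N$, over which all rules of $\wsvdash$ are sound by Propositions~\ref{prop:sound} and~\ref{prop:sym-sound}, while $\inc{R}{A}{S}{B}$ fails over $\N$.
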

For balanced $\Mo$-databases, similar axiomatic characterization follows as a straightforward consequence.
\begin{corollary}\label{cor:wc}
    Let $\Mo$ be a weakly cancellative positive commutative monoid.
    The axioms of inclusion dependencies extended with balance axioms and weak symmetry are sound and complete over balanced $\Mo$-databases.
\end{corollary}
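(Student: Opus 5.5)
Soundness splits into two parts. The standard rules and balance are sound over balanced $\Mo$-databases by Proposition~\ref{prop:sound}. Weak symmetry is sound over all $\Mo$-databases by Proposition~\ref{prop:sym-sound}, hence in particular over balanced ones. Completeness is the real content.

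For completeness, assume $\Sigma \bmomodels{\Mo} \tau$. Let $B$ be the set of all balance instances $\inc{S}{\emptyset}{R}{\emptyset}$ for $R,S$ in the (finite) schema mentioned by $\Sigma\cup\{\tau\}$. I aim to show $\Sigma\cup B \momodels{\Mo} \tau$. Theorem~\ref{prop:implicit} then gives $\Sigma\cup B\wsvdash\tau$. Since every member of $B$ is an axiom in the balanced system, this yields a derivation of $\tau$ from $\Sigma$ using the standard rules, weak symmetry and balance.

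To prove $\Sigma\cup B \momodels{\Mo} \tau$, take an arbitrary $\Mo$-database $\mD$ satisfying $\Sigma\cup B$. Satisfying $B$ means $R^\mD[\emptyset]\le S^\mD[\emptyset]$ and $S^\mD[\emptyset]\le R^\mD[\emptyset]$ for all $R,S$ in the schema. The key step is to conclude that these totals are equal, i.e.\ that $\mD$ is balanced. In a weakly cancellative positive monoid the natural order is antisymmetric. Indeed, if $a\mplus c=b$ and $b\mplus d=a$, then $a\mplus (c\mplus d)=a$, so weak cancellativity gives $c\mplus d=0$. Positivity then gives $c=d=0$, hence $a=b$. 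So $\mD$ is balanced, and the assumption $\Sigma \bmomodels{\Mo} \tau$ gives $\mD\models\tau$.

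One subtlety concerns relation names that occur in $\Sigma\cup\{\tau\}$ but not in the schema of $\mD$, or conversely. I would fix the schema to consist exactly of the relation names mentioned in $\Sigma\cup\{\tau\}$; entailment is insensitive to extra, unconstrained relations.

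The main obstacle is small: checking antisymmetry without assuming totality, as done above via positivity and WC. The completeness machinery for the weakly cancellative case has already been established in Theorem~\ref{prop:implicit}, so it can be reused verbatim.
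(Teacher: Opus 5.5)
Your proposal is correct and follows the paper's proof. Soundness comes from Propositions~\ref{prop:sound} and~\ref{prop:sym-sound}, and completeness comes from adding all balance instances to $\Sigma$ and invoking the weakly cancellative completeness result (Theorem~\ref{thm:wc}, equivalently Theorem~\ref{prop:implicit}); the one step the paper leaves implicit, that a $\Mo$-database satisfying both $\inc{R}{\emptyset}{S}{\emptyset}$ and $\inc{S}{\emptyset}{R}{\emptyset}$ is balanced, you justify via antisymmetry of the natural order, which is exactly the paper's Proposition~\ref{prop:antisymmetric}.
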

\begin{proof}
Soundness follows from Propositions \ref{prop:sound} and \ref{prop:sym-sound}.
For completeness, suppose $\Sigma\bmomodels{\Mo}\sigma$. 
Then $\Sigma^* \models_{\Mo}\sigma$, where $\Sigma^*$ extends $\Sigma$ by all balance axioms $\inc{S}{\emptyset}{R}{\emptyset}$,  where $R$ and $S$ are relation names appearing in $\Sigma$ or $\sigma$. %
It follows by \Cref{thm:wc} that $\sigma$ is derivable from $\Sigma^*$ using the axioms of inclusion dependencies extended with weak symmetry. From this the statement of the corollary follows.
\end{proof}

\section{Weakly Absorptive Monoids}
Next, we show that the standard axioms of INDs are sound and complete over $\Mo$-databases, for weakly absorptive positive commutative monoids. 
We distinguish two cases depending on whether $\Mo$ is countably absorptive.

\subsection{Countably Absorptive Monoids}
As a warm-up, let us consider self-absorptive monoids as a special case of countably absorptive ones. In this case, soundness and completeness follow readily from earlier results.
\begin{theorem}\label{thm:sa}
    Let $\Mo$ be a self-absorptive positive and commutative monoid. The axioms of inclusion dependencies are sound and complete for the implication problem over $\Mo$.
\end{theorem}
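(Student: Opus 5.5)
Soundness is immediate from Proposition~\ref{prop:sound}. For completeness, I would reduce to the classical result of Casanova, Fagin and Papadimitriou~\shortcite{CasanovaFP84}: the three rules of \Cref{tab:inc-rules} are complete for $\models$, i.e.\ over $\B$-databases. It therefore suffices to show that $\Sigma \momodels{\Mo} \tau$ implies $\Sigma \models \tau$. I will prove the contrapositive: from an ordinary database $D$ with $D\models\Sigma$ and $D\not\models\tau$, build a $\Mo$-database $\mD$ with the same behaviour.

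Fix a non-zero $b\in K$ with $b\mplus b=b$. Let $\mD$ be the $\Mo$-database with $R^{\mD}(t)=b$ if $t\in R^D$ and $R^{\mD}(t)=0$ otherwise. The key observation is that every marginal takes values in $\{0,b\}$. Indeed, $\marg{R^{\mD}}{\vec{A}}(\vec{a})$ is a finite sum of copies of $b$ and of $0$. By idempotence of $b$ this sum equals $b$ if some tuple $t\in R^D$ projects to $\vec{a}$, and equals $0$ otherwise. In other words, $\marg{R^{\mD}}{\vec{A}}(\vec{a})=b$ iff $\vec{a}\in R^D[\vec{A}]$. Formally, the map $h\colon\B\to\Mo$ with $0\mapsto 0$ and $1\mapsto b$ is a monoid homomorphism, and it commutes with marginalisation.

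Next I compare the natural orders on $\{0,b\}$. Trivially $0\leq b$, $0\leq 0$ and $b\leq b$. The step needing care is $b\not\leq 0$: if $b\mplus c=0$, then positivity gives $b=0$, a contradiction. So the natural order restricted to $\{0,b\}$ coincides with the Boolean order, i.e.\ $h$ is an order embedding of $\B$ into $\Mo$. Lemma~\ref{lem:Membed} then gives $\Sigma\momodels{\Mo}\tau \Rightarrow \Sigma\models_{\B}\tau$ directly. Equivalently, $\mD\models\sigma$ iff $D\models\sigma$ for every IND $\sigma$, so $\mD\models\Sigma$ and $\mD\not\models\tau$.

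Combining this with the Casanova et al.\ completeness theorem yields $\Sigma\vdash\tau$. I expect no real obstacle. The only points needing attention are that idempotence extends to arbitrary finite sums (by induction on the number of summands) and that positivity rules out $b\leq 0$; this is where positivity of $\Mo$ is used.
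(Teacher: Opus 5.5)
Your proposal is correct and follows the paper's proof essentially verbatim. The paper likewise uses the map $0_\B\mapsto 0_\Mo$, $1_\B\mapsto b$ as an order embedding (justified by positivity), applies Lemma~\ref{lem:Membed} to obtain $\Sigma\models_\B\tau$, and concludes via the completeness result of Casanova et al.; you just make explicit the details the paper leaves implicit, namely idempotence over finite sums and $b\not\leq 0$.
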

\begin{proof}
Soundness follows by Prop.~\ref{prop:sound}. For completeness, assume that $\Sigma \models_{\Mo} \sigma$. Let $b$ some non-zero self absorptive element of $\Mo$. By positivity of $\Mo$ we obtain that the Boolean monoid $\mathbb{B}$ order embeds into $\Mo$ via the mapping $0_\B \mapsto 0_\Mo$ and $1_\B \mapsto b$.
By Lemma \ref{lem:Membed}, $\Sigma \models_\B\sigma$. Since, the axioms are complete in the case of the Boolean monoid \cite{CasanovaFP84}, we get that $\Sigma \vdash\sigma$.
\end{proof}

For countably absorptive monoids in general, more work is required. In what follows, we use the ordinary chase procedure for relational inclusion dependencies to construct a stratified counterexample along the countable sequence of absorptive elements. To this end, we briefly recall the notion of the chase in this setting.

\vspace{2mm}
\noindent \textbf{Chase for INDs.}
Let $D$ be a database and $\Sigma$ a set of INDs. Again, we assume $*$ is a constant which does not belong to the active domain of $D$. 
Consider the following chase rule from
\cite{CasanovaFP84}:

\noindent
\textbf{Rule (*).} Let $t\in R^D$, and let $\inc{R}{\vec{A}}{S}{\vec{B}}\in \Sigma$. Then extend $S^D$ with the tuple $t'$ (if $t'$ is not already in $S^D$) such that $t'(Y_i)=t(X_i)$, for $i\in [n]$, and $t'(Y)=*$ otherwise. 

Given a database $D$ and a set of INDs $\Sigma$, denote by $\chase{D}{\Sigma}$ the closure of $D$ under $\Sigma$ by using the above rule. Note that 
$\chase{D}{\Sigma}$ is finite, uniquely determined, and, by construction, satisfies $\Sigma$.
Furthermore, given an IND $\tau=\inc{R}{X_1, \dots ,X_n}{S}{Y_1, \dots ,Y_n}$, denote by $D_\tau\dfn\{R^D\}$ the canonical starting database, where $R^D$ consists of one $R$-tuple $t$ such that $t(X_i)=i$, for $i\in [n]$, and otherwise  $t(X)=*$.
The following result is implicit in \cite[Proof of Theorem 3.1]{CasanovaFP84}; alternatively, see \cite[Chapter 11]{ABLMP21}.
\begin{theorem}\label{thm:known}
    Let $\Sigma$ be a finite set of INDs, and let $\tau$ be an IND. Then the following are equivalent:\\
    \begin{enumerate*}
        \item $\Sigma\vdash \tau$;\hspace{5mm}
        \item $\Sigma\models \tau$;\hspace{5mm}
        \item $\chase{D_\tau}{\Sigma}\models \tau$.
    \end{enumerate*}

\end{theorem}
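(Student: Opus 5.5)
We prove the cycle $1\Rightarrow 2\Rightarrow 3\Rightarrow 1$, following Casanova, Fagin and Papadimitriou.

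The implication $1\Rightarrow 2$ is soundness of the three rules of \Cref{tab:inc-rules} over ordinary databases. This is Proposition~\ref{prop:sound} instantiated with $\Mo=\B$.

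For $2\Rightarrow 3$ we note three facts. The chase result $\chase{D_\tau}{\Sigma}$ is finite, because every tuple it creates uses only constants from $[n]\cup\{*\}$. It satisfies $\Sigma$ by construction. Hence $\Sigma\models\tau$ gives $\chase{D_\tau}{\Sigma}\models\tau$ directly.

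The implication $3\Rightarrow 1$ carries the main work. Write $\tau=\inc{R}{X_1,\dots,X_n}{S}{Y_1,\dots,Y_n}$ and let $t_0$ be the seed tuple, with $t_0(X_i)=i$. We prove the following invariant by induction on the number of applications of Rule~(*):
\begin{quote}
for every tuple $t\in T^D$ in the chase and all distinct attributes $C_1,\dots,C_k\in\Att(T)$ with $t(C_j)=i_j\in[n]$, we have $\Sigma\vdash\inc{R}{X_{i_1},\dots,X_{i_k}}{T}{C_1,\dots,C_k}$.
\end{quote}

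In the base case the only tuple is $t_0$. Since the values $t_0(X_i)=i$ are distinct, $t_0(C_j)=i_j$ forces $C_j=X_{i_j}$. The required IND $\inc{R}{X_{i_1},\dots,X_{i_k}}{R}{X_{i_1},\dots,X_{i_k}}$ is then an instance of reflexivity.

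In the inductive step a new tuple $t'\in U^D$ arises from some $t\in T^D$ and some $\inc{T}{\vec{A}}{U}{\vec{B}}\in\Sigma$. Suppose $t'(C_j)=i_j\in[n]$ for $j\le k$. Each such $C_j$ is non-$*$, so it equals some $B_{m_j}$, and $t(A_{m_j})=i_j$.
\begin{itemize}
\item Projection and permutation of the IND in $\Sigma$ give $\inc{T}{A_{m_1},\dots,A_{m_k}}{U}{C_1,\dots,C_k}$.
\item The induction hypothesis applied to $t$ gives $\inc{R}{X_{i_1},\dots,X_{i_k}}{T}{A_{m_1},\dots,A_{m_k}}$.
\item Transitivity combines the two into the required IND.
\end{itemize}
The attributes $A_{m_j}$ are distinct because the $B_{m_j}=C_j$ are distinct, so all INDs involved are well formed.

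Finally, $\chase{D_\tau}{\Sigma}\models\tau$ applied to $t_0$ yields an $S$-tuple $t$ with $t(Y_1,\dots,Y_n)=(1,\dots,n)$. The invariant then gives $\Sigma\vdash\tau$.

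The only delicate point is the bookkeeping in the inductive step. It needs the values $1,\dots,n$ to be distinct and distinct from $*$, so that positions are recovered uniquely, and it needs the chosen attributes to stay pairwise distinct. Once the invariant is stated for arbitrary sub-sequences $C_1,\dots,C_k$, the remaining steps are routine.
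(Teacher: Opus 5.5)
Your proof is correct and follows the argument the paper relies on: the paper cites Casanova, Fagin and Papadimitriou for this result, and in the $(3\Rightarrow 1)$ direction of Theorem~\ref{prop:implicit} it sketches your induction, namely that every chase tuple $t$ with $t(C_1,\dots,C_k)=(i_1,\dots,i_k)\in[n]^k$ yields a derivable IND $\inc{R}{X_{i_1},\dots,X_{i_k}}{T}{C_1,\dots,C_k}$ using only reflexivity, projection/permutation and transitivity. That IND is well formed only if $i_1,\dots,i_k$ are pairwise distinct, and your induction carries this along automatically (equivalently, non-$*$ values in any chase tuple are pairwise distinct), so there is no gap.
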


 The previous characterization can now be used to obtain soundness and completeness over countably absorptive positive commutative monoids
\begin{theorem}\label{prop:cwa}
    Let $\Mo$ be a countably absorptive positive commutative monoid. The axioms of inclusion dependencies are sound and complete for the implication problem over $\Mo$.
\end{theorem}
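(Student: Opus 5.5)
Soundness is immediate from Proposition~\ref{prop:sound}. For completeness I argue by contraposition: assume $\Sigma \not\vdash \tau$, and build a $\Mo$-database that satisfies $\Sigma$ but violates $\tau$. By Theorem~\ref{thm:known}, the relational chase $D \dfn \chase{D_\tau}{\Sigma}$ satisfies $\Sigma$ but not $\tau$. The plan is to turn $D$ into a $\Mo$-database $\mD$ with support $D$, choosing the weights from the countable absorptive sequence so that every IND in $\Sigma$ survives.

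First I record two algebraic facts about a sequence with $a_0 \neq 0$ and $a_i \mplus a_{i+1} = a_{i+1}$.
\begin{enumerate*}
\item[(i)] $a_i \mplus a_j = a_j$ whenever $i<j$. This follows by induction on $j$, since $a_i \mplus a_j = (a_i \mplus a_{j-1}) \mplus a_j = a_{j-1} \mplus a_j$.
\item[(ii)] Any finite sum $X$ of elements $a_i$ with all $i<j$ (repetitions allowed) satisfies $X \mplus a_j = a_j$, by repeated use of (i). In particular $X \leq a_j$.
\end{enumerate*}
Also, every $a_i$ is non-zero, so by positivity every non-empty finite sum of them is non-zero.

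Let $m$ be the maximal arity of a relation name occurring in $\Sigma\cup\{\tau\}$. For each tuple $t$ of $D$, set $\mD(t) \dfn a_{m-\deg(t)}$, with $\deg$ as in \eqref{eq:degree}; tuples outside $D$ get $0$. Then $\supp(\mD)=D$. Countable absorptivity supplies $a_0,\dots,a_m$ for every $m$, which is why it is needed here.

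The key step is to check $\mD \models \sigma$ for each $\sigma=\inc{R}{\vec A}{S}{\vec B}\in\Sigma$ and each $\vec a$. If no $R$-tuple of $D$ projects to $\vec a$, the left-hand side is $0$ and there is nothing to prove. Otherwise let $d_0$ be the number of non-$*$ entries of $\vec a$. Because $D$ is closed under rule (*), $S^D$ contains the tuple $t'$ with $t'(\vec B)=\vec a$ and $*$ elsewhere, and $\deg(t')=d_0$. Every other $S$-tuple $u$ with $u(\vec B)=\vec a$ has $\deg(u)>d_0$. Likewise every $R$-tuple projecting to $\vec a$ has degree $\ge d_0$, and at most one of them (the one with $*$ outside $\vec A$) has degree exactly $d_0$. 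Hence, by (ii):
\begin{itemize}
\item the $S$-marginal at $\vec a$ equals $a_{m-d_0}$;
\item the $R$-marginal at $\vec a$ is either $a_{m-d_0}$ plus lower-index terms, which equals $a_{m-d_0}$, or a sum of lower-index terms only, which is $\le a_{m-d_0}$.
\end{itemize}
Either way the inequality for $\sigma$ holds.

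Finally, $\tau$ fails in $D$, so some $\vec a$ is hit by an $R$-tuple but by no $S$-tuple. At that $\vec a$ the $R$-marginal is non-zero and the $S$-marginal is $0$. By positivity, a non-zero element is not $\leq 0$, so $\mD\not\models\tau$ and $\Sigma\not\models_{\Mo}\tau$.

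The main obstacle is choosing the weights. A naive choice fails in two ways: sums of equal elements $a_i\mplus a_i$ need not collapse, and the chase may contain cycles, so no strict creation order is available. Grading by degree avoids both problems. The unique minimal-degree witness $t'$ carries the strictly largest weight on the $S$-side, and every competitor on the $R$-side either is the unique tuple of that same degree or is absorbed by (ii).
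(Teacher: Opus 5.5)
Your proof is correct and follows the paper's route: take the relational chase $\chase{D_\tau}{\Sigma}$ and weight each tuple $t$ by $a_{k-\deg(t)}$ from the countably absorptive sequence (the paper takes $k=\ar(\tau)$, you take the maximal relation arity; either choice bounds all degrees). Your explicit use of the absorption facts (i)--(ii), together with the uniqueness of the degree-$d_0$ witness, is more careful than the paper's argument: the paper asserts that every nonzero marginal $\marg{T^{\mD}}{\vec C}(\vec c)$ equals a single $a_i$, but several tuples of the same degree can project onto the same $\vec c$, and $a_i \mplus a_i$ need not equal any $a_j$.
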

\begin{proof}
    As soundness follows by Proposition \ref{prop:sound}, we focus on completeness. 
    Suppose $\Sigma \not\vdash \tau$. To show $\Sigma \not\models_{\Mo} \tau$, we create a counterexample model witnessing this.
    By \Cref{thm:known},  $D\dfn \chase{D_\tau}{\Sigma}$ is a relational database satisfying $\Sigma$ and violating $\tau$. In what follows,
    we turn $D$ into a $\Mo$-database.

    Recall the definition of the {degree} of a tuple $t$ from \eqref{eq:degree}. %
    Suppose $\tau$ is of the form $\inc{R}{A_1, \dots ,A_n}{S}{B_1,\dots ,B_n}$. Let $(a_i)_{i\geq 0}$ be a sequence of non-zero values from $\Mo$ witnessing the countably absorptive property. 
    We now define a $\Mo$-database $\mD$ from $D$ as follows. For each relation $T^D$ of $D$, we set a $\Mo$-relation $T^{\mD}$ into $\mD$ such that
    for all $t \in \tup(T)$,
    \begin{equation}\label{eq:Tdef}
    T^{\mD}(t) \dfn
    \begin{cases}
        a_i &\text{ if $t\in T^D$ and $i = n - \deg(t)$,}\\
        0_{\Mo} &\text{ otherwise.}
    \end{cases}
    \end{equation}
    In particular, the initial tuple of the chase has degree $n$ and is consequently mapped to $a_0$. 

    Considering $\tau$, note that $R^D$ includes a tuple $t$ (i.e., the initial one) such that $t(A_1, \dots ,A_n)=(1, \dots ,n)$, but $S^D$ does not include any tuple $t'$ such that $t'(B_1, \dots ,B_n)=(1, \dots ,n)$. Hence 
    $\marg{R^{\mD}}{A_1,\dots ,A_n}(1, \dots ,n)= a_i$, for some $i\geq 0$, while $\marg{S^{\mD}}{B_1,\dots ,B_n}(1, \dots ,n)= 0_{\Mo}$. This entails $\mD \not\models \tau$ by positivity of $\Mo$.

    Then, let $\sigma=\inc{T}{\vec{C}}{U}{\vec{D}}\in \Sigma$, and let $\vec{c}\in \Const^{|\vec{C}|}$ be such that
    $\marg{T^{\mD}}{\vec{C}}(\vec{c})\neq 0$. By construction, we find $0\leq i\leq n$ such that 
    $\marg{T^{\mD}}{\vec{C}}(\vec{c})= a_i$. Hence $T^D$ contains a tuple $t$ with $\deg(t)=n-i$
    and $t(\vec{C})=\vec{c}$. 
    Since $D\models \Sigma$, $U^D$ contains a tuple $t'$ with $t'(\vec{D})=\vec{c}$ and
    \begin{equation}\label{eq:chase}
    \deg(t')=\deg(\marg{t'}{\vec{D}})= \deg(\marg{t}{\vec{C}})\leq n-i.
    \end{equation}
    Namely, $t'$ is the result of applying the chase rule to $\sigma$ and $t$. Then, in particular, the first equality in \eqref{eq:chase} follows as $t'$ maps attributes that are not listed in $\vec{D}$ to $*$. 
    
    We now obtain by \eqref{eq:Tdef} and \eqref{eq:chase} that $U^{\mD}(t')= a_j$ for $j\geq i$. Hence 
    \[
    \marg{U^{\mD}}{\vec{D}}(\vec{c})\geq U^{\mD}(t') =a_j \geq a_i,
    \]
    showing that $\mD\models \sigma$. Since $\sigma$ was chosen arbitrarily from $\Sigma$, we have $\mD \models \Sigma$.
    This concludes the proof of $\Sigma\not\models_{\Mo}\tau$, and thus completeness follows.
\end{proof}

\subsection{Not Countably Absorptive Monoids}
Let us then turn to the case of weakly absorptive monoids that are not countably absorptive.
Given a monoid $\Mo$ and its non-zero element $b$, we write $\Mo_b$ for the submonoid 
of $\Mo$ generated by $b$. That is,  $\Mo_b$ consist of elements of the form
\[
kb \defeq \overbrace{b \mplus \cdots \mplus b}^{k\text{ times}},
\]
where $k \in \mathbb{N}$, with the convention that $0b$ denotes the neutral element $0$.
We say that $\Mo_b$ is \emph{eventually periodic} if there exists $m,\ell\in\N_{>0}$ such that $mb=(m+\ell)b$.
Note that, in a positive monoid, this equality cannot hold with $m=0$. The following lemma classifies weakly but not countably absorptive monoids into two categories: those with a total natural order and those that are eventually periodic.
The proof, placed in the supplementary material, constructs increasing chains of elements illustrated in
\Cref{fig:chain}: the top diagram corresponds to \cref{it:two}, and the
bottom diagram to \cref{it:two-half}. Intuitively, the first case behaves like an initial segment of the natural
numbers under $\max$-addition, followed by the positive natural numbers
under ordinary addition, whereas the second case eventually closes this additive tail into a cycle,
in which addition resembles that of a finite field.
\begin{restatable}{lemma}{WeakAbsorption}\label{lem:wa}
	Let $\Mo$ be a positive commutative monoid that is weakly absorptive but not countably absorptive. Then there exist $a,b\in\Mo \setminus \{0\}$ such that
	\begin{enumerate}
        \item\label{it:one} %
        $a \mplus b = b$;
        \item\label{it:one-half}  
        $b \mplus c \neq c$ for all $c\in \Mo$; 
        \item\label{it:ntwcpc} $\Mo_b$ is non-trivial, positive, commutative; 
		\item\label{it:two} %
      if the natural order of $\Mo_b$ is antisymmetric, then $\Mo_b$ is weakly cancellative and has a total natural order; 
      		\item\label{it:two-half} %
      if the natural order of $\Mo_b$ is not antisymmetric, then $\Mo_b$ is eventually periodic.
	\end{enumerate}
\end{restatable}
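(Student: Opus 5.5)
Since $\Mo$ is weakly absorptive, we start from some pair of non-zero elements $a_0,a_1$ with $a_0\mplus a_1=a_1$, and we try to extend it to an absorbing chain $a_0,a_1,a_2,\dots$ with $a_i\mplus a_{i+1}=a_{i+1}$. Because $\Mo$ is not countably absorptive, every such attempt must stop after finitely many steps. Among all finite absorbing chains starting with a non-zero element, we fix one $a_0,\dots,a_k$ that cannot be extended. This means there is no non-zero $c$ with $a_k\mplus c=c$.

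We then set $a\dfn a_{k-1}$ and $b\dfn a_k$, with $k\geq 1$. Item~\ref{it:one} holds by construction. Item~\ref{it:one-half} is exactly the non-extendability, except for the case $c=0$. That case is harmless: $b\mplus 0=b\neq 0$. So we read the statement as ranging over all $c$, and the inequality holds trivially at $c=0$.

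Item~\ref{it:ntwcpc} is routine. $\Mo_b$ is a submonoid of a commutative monoid, so it is commutative. It is positive because positivity is inherited by submonoids. It is non-trivial because it contains $0\neq b$.

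For items~\ref{it:two} and~\ref{it:two-half} we work inside $\Mo_b$, whose elements are the multiples $kb$ and whose natural order is $kb\leq mb$ iff $mb=kb\mplus jb$ for some $j$.

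\emph{Totality.} If $k\leq m$, then $mb=kb\mplus(m-k)b$, so $kb\leq mb$. Hence the natural order of $\Mo_b$ is always total.

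\emph{Item~\ref{it:two}.} Assume the order is antisymmetric and suppose $xb\mplus yb=yb$ with $x\geq 1$. If $yb=0$, positivity gives $xb=0$. Otherwise, repeatedly adding $xb$ yields $yb=(y+x)b=(y+2x)b=\cdots$. In particular $yb=(y+xy)b$.

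We then want to show that $b\mplus yb=yb$, which would contradict item~\ref{it:one-half} with $c=yb$. From $y\leq y+1\leq y+xy$ we get $yb\leq(y+1)b$ and $(y+1)b\leq(y+xy)b=yb$. Antisymmetry then gives $(y+1)b=yb$, i.e.\ $b\mplus yb=yb$, the desired contradiction. Hence $xb=0$, so $\Mo_b$ is weakly cancellative, and the order is total as shown above.

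\emph{Item~\ref{it:two-half}.} Assume the order is not antisymmetric. Then there are $k<m$ with $kb\leq mb\leq kb$ and $kb\neq mb$. The relation $mb\leq kb$ means $kb=mb\mplus jb=(m+j)b$ for some $j$. Since $m+j>k$ (because $m>k$), we obtain $kb=(k+\ell)b$ with $\ell=m+j-k>0$.

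If $k>0$, this is eventual periodicity with $m\dfn k$. If $k=0$, then $0=\ell b$, and positivity gives $b=0$, a contradiction. So $k>0$ and $\Mo_b$ is eventually periodic.

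\textbf{Main obstacle.} The delicate step is the choice of the maximal chain: we must ensure a non-extendable finite chain exists, which is where the negation of countable absorptivity enters. We also need the weak cancellativity argument in item~\ref{it:two} to reduce cleanly to a violation of item~\ref{it:one-half}.
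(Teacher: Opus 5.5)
Your proof is correct and follows essentially the same route as the paper. The greedy extension of an absorbing chain, which must stop because $\Mo$ is not countably absorptive, yields $a,b$, and items 4 and 5 are obtained from antisymmetry (or its failure) inside $\Mo_b$ exactly as the paper does. Two small remarks. Your detour through $(y+xy)b$ is unnecessary: $(y+x)b=yb$ with $x\geq 1$ already gives $yb\leq (y+1)b\leq (y+x)b=yb$. Also, choose the non-extendable chain as an extension of the initial weakly absorptive pair, so that $k\geq 1$ is guaranteed.
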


Before proceeding to the main result of this section, we consider the following helping lemma, proven in the supplementary material.

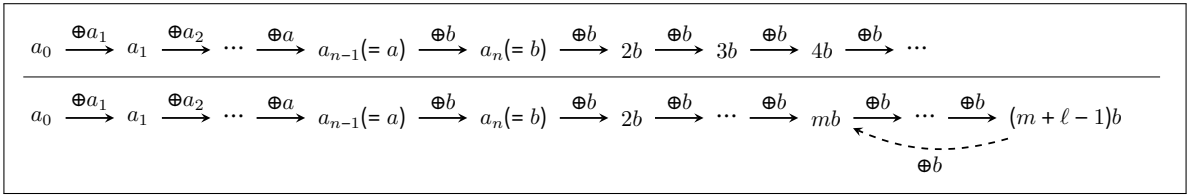
\begin{figure*}[t]
\centering
\setlength{\fboxsep}{5pt}
\setlength{\fboxrule}{0.4pt}

\fbox{%
\scalebox{0.85}{
\begin{minipage}{\textwidth}
\raggedright

\begin{tikzpicture}[
  >=stealth,
  thick,
  node distance=.9cm,
  shorten >=2pt,
  shorten <=2pt
]
\node (a0) {$a_0$};
\node (a1) [right=of a0] {$a_1$};
\node (a2) [right=of a1] {$\cdots$};
\node (an1) [right=of a2] {$a_{n-1}(=a)$};
\node (an) [right=of an1] {$a_n(=b)$};

\node (b2) [right=of an] {$2b$};
\node (b3) [right=of b2] {$3b$};
\node (b4) [right=of b3] {$4b$};
\node (dots) [right=of b4] {$\cdots$};

\node (bd)   [right=of dots] {$\phantom{\cdots}$};
\node (bm)   [right=of bd]   {$\phantom{mb}$};
\node (bd2)  [right=of bm]   {$\phantom{\cdots}$};
\node (bml)  [right=of bd2]  {$\phantom{(m+\ell-1)b}$};

\draw[->] (a0) -- node[above] {$\mplus a_1$} (a1);
\draw[->] (a1) -- node[above] {$\mplus a_2$} (a2);
\draw[->] (a2) -- node[above] {$\mplus a$} (an1);
\draw[->] (an1) -- node[above] {$\mplus b$} (an);

\draw[->] (an) -- node[above] {$\mplus b$} (b2);
\draw[->] (b2) -- node[above] {$\mplus b$} (b3);
\draw[->] (b3) -- node[above] {$\mplus b$} (b4);
\draw[->] (b4) -- node[above] {$\mplus b$} (dots);
\end{tikzpicture}

\vspace{0.8ex}
\hrule
\vspace{0.8ex}

\begin{tikzpicture}[
  >=stealth,
  thick,
  node distance=.9cm,
  shorten >=2pt,
  shorten <=2pt
]
\node (a0) {$a_0$};
\node (a1) [right=of a0] {$a_1$};
\node (a2) [right=of a1] {$\cdots$};
\node (an1) [right=of a2] {$a_{n-1}(=a)$};
\node (an) [right=of an1] {$a_n(=b)$};

\node (b2) [right=of an] {$2b$};
\node (bd) [right=of b2] {$\cdots$};
\node (bm) [right=of bd] {$mb$};
\node (bd2) [right=of bm] {$\cdots$};
\node (bml) [right=of bd2] {$(m+\ell-1)b$};

\draw[->] (a0) -- node[above] {$\mplus a_1$} (a1);
\draw[->] (a1) -- node[above] {$\mplus a_2$} (a2);
\draw[->] (a2) -- node[above] {$\mplus a$} (an1);
\draw[->] (an1) -- node[above] {$\mplus b$} (an);

\draw[->] (an) -- node[above] {$\mplus b$} (b2);
\draw[->] (b2) -- node[above] {$\mplus b$} (bd);
\draw[->] (bd) -- node[above] {$\mplus b$} (bm);
\draw[->] (bm) -- node[above] {$\mplus b$} (bd2);
\draw[->] (bd2) -- node[above] {$\mplus b$} (bml);

\draw[->,  dashed, bend left=20] (bml) to node[below] {$\mplus b$} (bm);
\end{tikzpicture}

\end{minipage}
}
}

\caption{Chains generated in WA\&$\neg$CA monoids. 
Top: an unbounded $b$-tail (when natural order is antisymmetric over multiplicities of $b$). 
Bottom: an eventually periodic $b$-tail, with $(m+\ell)b=mb$ (when natural order is not antisymmetric over multiplicities of $b$).}
\label{fig:chain}
\end{figure*}

\begin{restatable}{lemma}{SuppPreserved}\label{lem:supppreserved}
         Let $\Mo$ be a commutative monoid with a total natural order, $\Sigma$ a set of INDs that is closed under $\wsvdash$, and $\mD$ a $\Mo$-database satisfying $\Sigma$. If $\chase{\supp(\mD)}{\Sigma})=\supp(\mD)$, then $\supp(\chaseplus{\mD}{\Sigma})=\supp(\mD)$.
\end{restatable}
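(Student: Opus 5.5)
Observe first that every $\oplus$-chase step only ever adds weight, so by monotonicity the support can only grow. This gives $\supp(\mD)\subseteq\supp(\chaseplus{\mD}{\Sigma})$ at once; note that even without positivity the incremented tuple has nonzero weight after the step. The whole task is therefore the reverse inclusion. My plan is to prove, by induction along an arbitrary (finite, by Lemma~\ref{lem:terminates}) $\oplus$-chase sequence $\mD=\mD_0,\mD_1,\dots,\mD_N$, the invariant $\supp(\mD_i)=\supp(\mD)$ for every $i\le N$. Concretely, I will show that every tuple $t'$ incremented in a step $\chasenext{\mD_i}{\mD_{i+1}}{\sigma,\vec a}$ already lies in $\supp(\mD)$.

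For the inductive step, fix $\sigma=\inc{R}{\vec A}{S}{\vec B}\in\Sigma$ and $\vec a$ with $\marg{R^{\mD_i}}{\vec A}(\vec a)\not\le\marg{S^{\mD_i}}{\vec B}(\vec a)$. The incremented tuple $t'$ has $t'(\vec B)=\vec a$ and $*$ elsewhere. I will use two facts.

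\begin{itemize}
\item First, $\marg{R^{\mD_i}}{\vec A}(\vec a)\neq 0$ (otherwise $0\le$ anything), so some $t\in\supp(R^{\mD_i})=\supp(R^{\mD})$ has $t(\vec A)=\vec a$.
\item Second, rule (*) applied to $t$ and $\sigma$ produces exactly $t'$, where $*$ is chosen as the same fresh constant for both chases. Since $\chase{\supp(\mD)}{\Sigma}=\supp(\mD)$, this gives $t'\in\supp(S^{\mD})$.
\end{itemize}

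Hence the step only changes the weight of an already-supported tuple, and the invariant is preserved. One subtlety is that $*$ might occur in $\adom(\mD)$. I would handle this by fixing one $*$ for both chases; the hypothesis $\chase{\supp(\mD)}{\Sigma}=\supp(\mD)$ is read with this $*$, which is really what makes it meaningful.

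The main obstacle is that nothing above uses that $\Sigma$ is closed under $\wsvdash$, or that $\mD\models\Sigma$. These hypotheses are needed to know that the chase terminates, so that $\chaseplus{\mD}{\Sigma}$ is defined at all, via Lemma~\ref{lem:terminates}. I also need the subtraction $a-b$ to be well defined under the total natural order. Here I would remark that only totality is needed to define some $c$ with $b\oplus c=a$ when $a\not\le b$, and that the support argument is insensitive to which $c$ is chosen. If positivity fails, a zero-sum could make $\marg{R}{\vec A}(\vec a)$ vanish while tuples are supported, but that only helps. What matters is the direction "nonzero marginal $\Rightarrow$ some supported witness", which holds in any monoid. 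Finally, I would note that $\mD\models\Sigma$ makes the chase sequence empty anyway if all inequalities hold, so the interesting use of the lemma is for the subsequent counterexample construction, and the proof itself is the short induction above.
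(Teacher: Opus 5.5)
Your proof is correct and is essentially the paper's own argument. Both run an induction along the $\oplus$-chase sequence with the invariant $\supp(\mD_i)=\supp(\mD)$: a nonzero marginal $\marg{R^{\mD_i}}{\vec A}(\vec a)$ yields a supported witness $t$, and closure of $\supp(\mD)$ under rule (*) puts the incremented tuple $t'$ already in $\supp(S^{\mD})$. Your side remarks are accurate and make explicit what the paper leaves tacit: reading both chases with the same $*$, the incremented tuple staying nonzero even without positivity (otherwise the rule would not have been applicable), and the hypothesis $\mD\models\Sigma$ trivialising the statement.
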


We can now establish soundness and completeness for weakly absorptive monoids that are
not countably absorptive. The completeness proof splits into two
cases, corresponding to the two possible $b$-tails identified in Lemma \ref{lem:wa}.
\begin{theorem}\label{prop:fwa}
    Let $\Mo$ be a positive commutative monoid that is weakly absorptive but not countably absorptive. Then the axioms of inclusion dependencies are sound and complete for the implication problem over $\Mo$.
\end{theorem}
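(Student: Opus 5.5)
Soundness is immediate from Proposition~\ref{prop:sound}, so the work is completeness. Suppose $\Sigma\not\vdash\tau$ with $\tau=\inc{R}{A_1,\dots,A_n}{S}{B_1,\dots,B_n}$. By \Cref{thm:known}, $D\dfn\chase{D_\tau}{\Sigma}$ is a relational database with $D\models\Sigma$ and $D\not\models\tau$. As in the proof of \Cref{prop:cwa}, I will turn $D$ into a $\Mo$-database that still violates $\tau$. The difference is that the countable absorptive chain is no longer available. Instead, I will use the elements $a,b$ from Lemma~\ref{lem:wa}. The tuples of $D$ are again stratified by degree. The initial tuple $t_0$ (degree $n$) receives the ``small'' weight $a$, and every other tuple of $D$ receives some element of $\Mo_b$. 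The relation $a\mplus b=b$ then guarantees that the contribution of $t_0$ is absorbed wherever it is aggregated together with $b$-weighted tuples. The violation of $\tau$ is witnessed exactly as before: $\marg{R^{\mD}}{\vec{A}}(1,\dots,n)$ is non-zero, namely $a$ or a multiple of $b$, while $\marg{S^{\mD}}{\vec{B}}(1,\dots,n)=0$ by positivity.

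Set $\Sigma'\dfn\cl{\Sigma}$. Note that a database satisfies $\Sigma$ iff it satisfies $\Sigma'$, provided the extra INDs hold. The natural plan is to place weights in $\Mo_b$ on $D$ and then repair with the $\mplus$-chase inside $\Mo_b$ via Lemma~\ref{lem:supppreserved}. I split into the two cases of Lemma~\ref{lem:wa}.

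\textbf{Case (\ref{it:two}).} Here $\Mo_b$ is weakly cancellative with a total natural order, so Lemma~\ref{lem:terminates} and Theorem~\ref{thm:plusterminates} apply within $\Mo_b$. Let $\mD_0$ put $b$ on every tuple of $D$ and $0$ elsewhere. Then $\supp(\mD_0)=D$ is closed under the relational chase, but only by $\Sigma$, not by $\Sigma'$. I therefore first take $D'\dfn\chase{D_\tau}{\Sigma'}$ and check that $D'\not\models\tau$ still holds when $\Sigma\not\vdash\tau$. This uses the third-implies-first direction of Theorem~\ref{prop:implicit}, whose induction only uses the standard rules on tuples containing indices; that argument should show that a tuple witnessing $\tau$ would yield $\Sigma\vdash\tau$, modulo weak-symmetry steps I must rule out. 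Next I $\mplus$-chase the $\Mo_b$-weighted $D'$ by $\Sigma'$. By Lemma~\ref{lem:supppreserved} the support stays $D'$, so the result satisfies $\Sigma$ with support violating $\tau$. Finally I replace the weight of $t_0$ by $a$ (adding $a$ to a positive multiple of $b$ changes nothing). Alternatively, I simply keep all weights in $\Mo_b$: a non-zero weight on $t_0$ already gives the violation, so the $a$ is not even needed in this case.

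\textbf{Case (\ref{it:two-half}).} Here $mb=(m+\ell)b$. I assign every tuple of $D$ the weight $Nb$ with $N\geq m$ a multiple of $\ell$ large enough. Then for any $\sigma\in\Sigma$ and any $\vec c$, the left marginal is $kNb$ for some $k\ge1$, and the right marginal is $k'Nb$ with $k'\ge1$ because $D\models\sigma$. All elements $jb$ with $j\ge m$ lie in a cyclic group $\{mb,\dots,(m+\ell-1)b\}$, and $kNb=k'Nb=Nb$ when $N$ is a multiple of $\ell$ and at least $m$. So both sides are equal and $\sigma$ holds, while $\tau$ fails by positivity. This case is thus direct and mirrors the proof of \Cref{thm:sa}.

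The main obstacle is Case~(\ref{it:two}). There the inequalities must genuinely hold in an $\N$-like additive tail, and the plain relational chase by $\Sigma$ need not be closed under weak symmetry. The delicate step is showing that passing to $\Sigma'=\cl{\Sigma}$, or otherwise rebalancing the weights, preserves the absence of a $\tau$-witness. If that fails, I would fall back on the absorptive element $a$. I would weight tuples of degree $n$ by $a$ and all others by $b$, and use Lemma~\ref{lem:wa}(\ref{it:one-half}) to control where $a$ can be absorbed, proceeding as in \Cref{prop:cwa} with the two-step chain $a\le b$ together with the additive $b$-tail.
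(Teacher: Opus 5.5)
\textbf{There is a genuine gap in Case~(\ref{it:two}), which is the only case that can actually occur.} Your Case~(\ref{it:two-half}) argument is correct. In fact $Nb$ is a non-zero idempotent there, so $\Mo$ would be self-absorptive and hence countably absorptive; that case is vacuous under the hypothesis.

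None of the three routes you sketch for Case~(\ref{it:two}) works.

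\begin{itemize}
\item Your main route needs $\chase{D_\tau}{\cl{\Sigma}}\not\models\tau$. But the ``weak-symmetry steps'' you hope to rule out are exactly what separates $\vdash$ from $\wsvdash$, so they cannot be ruled out. Take $\Sigma=\{\inc{R}{A}{S}{B},\ \inc{S}{\emptyset}{R}{\emptyset}\}$ and $\tau=\inc{S}{B}{R}{A}$. Then $\Sigma\not\vdash\tau$, yet $\tau\in\cl{\Sigma}$, so that chase satisfies $\tau$.
\item The alternative ``keep all weights in $\Mo_b$'' fails for the same reason. In this case $\Mo_b$ is weakly cancellative, so by Proposition~\ref{prop:sym-sound} any $\Mo_b$-database satisfying $\Sigma$ in $\Mo_b$'s order (which is what the $\mplus$-chase inside $\Mo_b$ produces) also satisfies this $\tau$. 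The absorbed element $a$ is the only thing that can break weak symmetry, so it is indispensable, not optional.
\item Your fallback (degree-$n$ tuples get $a$, all others get $b$, ``as in Theorem~\ref{prop:cwa}'') has the right shape but misses the key step. With a single absorption step $a\to b$, the lower-degree levels cannot absorb one another, because the $b$-tail is $\N$-like. So the $b$-part must satisfy $\Sigma$ exactly on its own, and constant weights $b$ do not achieve this. Take $\tau=\inc{R}{A,B}{S}{A,B}$ and $\Sigma=\{\inc{R}{A}{T}{X},\ \inc{R}{\emptyset}{T}{\emptyset},\ \inc{T}{Y}{U}{Z}\}$. The chase gives $T=\{(1,*),(*,*)\}$ and $U=\{(*)\}$, so $\marg{T}{Y}(*)=2b$ while $\marg{U}{Z}(*)=b$. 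In the monoid with elements $0$, $ia$, $jb$ ($i,j\geq 1$), where $ia\mplus jb=jb$ and multiples of $a$ and of $b$ add as in $\N$, we have $2b\not\leq b$. This monoid is weakly but not countably absorptive.
\end{itemize}

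\textbf{The missing idea} is to impose weak-symmetry closure only on the part of the database that can never produce a $\tau$-witness, and to rebalance that part with the $\mplus$-chase. The paper proceeds as follows.
\begin{itemize}
\item Take $D=\chase{D_\tau}{\ordcl{\Sigma}}$ and split it into the degree-$n$ tuples $D_0$, weighted by $a$, and the degree-$<n$ tuples $D_1$.
\item Relationally chase $D_1$ by $\cl{\Sigma}$. This is harmless, since tuples of degree $<n$ never witness $\tau$.
\item Weight the result by $b$ and $\mplus$-chase it by $\cl{\Sigma}$ inside $\Mo_b$. This terminates by Lemma~\ref{lem:terminates}, keeps its support by Lemma~\ref{lem:supppreserved}, and satisfies $\Sigma$ exactly.
\item Add back the $a$-weighted part. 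At degree-$n$ marginals only single $a$-tuples occur, and they are matched via $D\models\ordcl{\Sigma}$. At lower-degree marginals, reflexivity in $\ordcl{\Sigma}$ guarantees a $b$-weighted tuple, which absorbs all the $a$'s.
\end{itemize}
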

\begin{proof}
As soundness follows by Proposition \ref{prop:sound}, we consider only completeness. 
Let $\Sigma \cup\{\tau\}$ be a finite set of INDs. 
Assuming $\Sigma\not\vdash \tau$, we prove $\Sigma\not\models_{\Mo} \tau$.
Consider the database $D\defeq \chase{D_\tau}{\ordcl{\Sigma}}$, obtained via the ordinary chase from the canonical starting database $D_\tau$, with respect to the  closure $\ordcl{\Sigma}$ of $\Sigma$ under the axioms of INDs.
By Theorem \ref{thm:known}, we have
\begin{equation}\label{eq:counter}
    D\models \ordcl{\Sigma} \text{ and }D \not\models \tau.
\end{equation}
By construction, for any tuple $t\in R^{D}$, we have $\deg(t)\leq \ar(\sigma)$.
We partition $D$ into two databases $D_0$ and $D_1$, where
$R^{D_0} \dfn \{t\in R^D\mid \deg(t)=\ar(\tau)\}$ and
$R^{D_1} \dfn \{t\in R^D\mid \deg(t)<\ar(\tau)\}$, 
for each $R \in \sch$ where $\sch$ is the schema of $D$.
Furthermore, we set $D'_1 \coloneqq \chase{D_1}{\cl{\Sigma}}$.
     
By Lemma \ref{lem:wa}, there exist
$a,b\in\Mo \setminus \{0\}$ s.t. $a\mplus b = b$, and $b \mplus c\neq c$ for all $c\in \Mo$.
We divide the proof into two cases.

\vspace{2mm}
\noindent\textbf{Case 1: the natural order of $\Mo_b$ is antisymmetric.}
Define $\Mo$-databases $\mD_0$ and $\mD_1$ by setting, for each $R \in \sch$,
$R^{\mD_0}(t)=a$ for all $t\in R^{D_0}$ and $R^{\mD_1}(t)=b$ for all
$t\in R^{D'_1}$, and mapping all other tuples to $0$.

By Lemma \ref{lem:wa}, the submonoid $\Mo_b$ is non-trivial, positive, commutative, weakly cancellative, and associated with a total natural order.
Define
\[
\mD \defeq \mD_0\mplus \chaseplus{\mD_1}{\cl{\Sigma}}.
\]
 By Theorem \ref{thm:plusterminates}, this $\Mo$-database is well-defined.
We claim that $\mD$ satisfies $\Sigma$ but violates $\tau$.

    Since $D'_1=\supp(\mD_1)$ and $\chase{D'_1}{\cl{\Sigma}}=D'_1$ (by construction),
     Proposition \ref{prop:supp}, and Lemma \ref{lem:supppreserved} imply that
    \begin{align*}
        &\supp(\mD_0\mplus \chaseplus{\mD_1}{\cl{\Sigma}}) \\
        = &\supp(\mD_0)\cup \supp(\chaseplus{\mD_1}{\cl{\Sigma}}) \\
        = &\supp(\mD_0)\cup \supp(\mD_1)=\supp(\mD_0\mplus \mD_1)  \\
        = & D.
    \end{align*}
    The last equality follows by positivity of $\Mo_b$.
    Consequently, by \eqref{eq:counter} and Proposition \ref{prop:preserved} we obtain $\mD \not\models \tau$,

    To show that $\mD \models \Sigma$, consider an arbitrary inclusion dependency $\inc{R}{\vec{A}}{S}{\vec{B}} \in \Sigma$. We need to show that, given $\vec{a}\in \Const^{|\vec{A}|}$, 
    \begin{equation}\label{eq:ineqind}
        \marg{R^{\mD}}{\vec{A}}(\vec{a})\leq \marg{S^{\mD}}{\vec{B}}(\vec{a}).
        \end{equation}
        
    Suppose first that $\deg(\vec{A}\mapsto \vec{a})=\ar(\tau)$. Since at most one tuple created by the chase can project to $(\vec{A}\mapsto \vec{a})$, we obtain 
    \[
    \marg{R^{\mD}}{\vec{A}}(\vec{a})= \marg{R^{\mD_0}}{\vec{A}}(\vec{a}) \in \{0,a\}.
    \]
    If $\marg{R^{\mD_0}}{\vec{A}}(\vec{a})=0$, \eqref{eq:ineqind} readily follows.
    If $\marg{R^{\mD_0}}{\vec{A}}(\vec{a})=a$, then $(\vec{A}\mapsto \vec{a})\in \marg{R^{D_0}}{\vec{A}}$. Clearly, by construction of $D_0$ it follows then that $(\vec{B}\mapsto \vec{a})\in \marg{S^{D_0}}{\vec{B}}$ and further
    that $\marg{S^{\mD}}{\vec{B}}(\vec{a})=\marg{S^{\mD_0}}{\vec{B}}(\vec{a})=a$.

    Suppose then that $\deg(\vec{A}\mapsto \vec{a})<\ar(\tau)$. We may assume that $\marg{R^{\mD}}{\vec{A}}(\vec{a})\neq 0$, as otherwise \eqref{eq:ineqind} readily follows. Then, positivity of $\Mo$ entails $(\vec{A}\mapsto \vec{a})\in \marg{R^{D}}{\vec{A}}$. Note that $\ordcl{\Sigma}$ includes all reflexivity axioms, including $\inc{R}{\vec{A}}{R}{\vec{A}}$. Therefore, there exists $t\in R^D$ such that $t(\vec{A})=\vec{a}$, and $t(C)=*$ otherwise.
    Consequently, writing $\mD^*_1\defeq \chaseplus{\mD_1}{\cl{\Sigma}}$ and using monotonicity of the $\mplus$-chase, we obtain 
    \[
    b \leq \marg{R^{\mD_1}}{\vec{A}}(\vec{a}) \leq \marg{R^{\mD^*_1}}{\vec{A}}(\vec{a}).
    \]
    Furthermore, since $\marg{R^{\mD_0}}{\vec{A}}(\vec{a})$ is of the form $ a \mplus \dots \mplus a$, it is absorpted by $b$, leading us to infer by Theorem \ref{thm:plusterminates} that
        \begin{align*} 
    &\marg{R^{\mD}}{\vec{A}}(\vec{a})= \marg{R^{\mD_0}}{\vec{A}}(\vec{a}) \mplus \marg{R^{\mD^*_1}}{\vec{A}}(\vec{a}) \leq \marg{R^{\mD^*_1}}{\vec{A}}(\vec{a})\\
    \leq &\marg{S^{\mD^*_1}}{\vec{B}}(\vec{a}) \leq \marg{S^{\mD_0}}{\vec{B}}(\vec{a}) \mplus \marg{S^{\mD^*_1}}{\vec{B}}(\vec{a}) = \marg{S^{\mD}}{\vec{B}}(\vec{a})
        \end{align*}
Hence, we have shown \eqref{eq:ineqind}. Thus $\mD \models \Sigma$, and hence Case 1 is concluded.

\vspace{2mm}
\noindent\textbf{Case 2: the natural order of $\Mo_b$ is not antisymmetric.}
Then by Lemma \ref{lem:wa}, $\Mo_b$ is eventually periodic. Let $d = mb$ be the element witnessing this periodicity as in \Cref{fig:chain}. It is clear that $d$ is greater than  any element preceding it in the chain. By transitivity of the natural order, it is also greater than any element in the eventually periodic part. We therefore obtain that $c \leq d$ for all $c \in \Mo_b$. Now define $\Mo$-databases $\mD_0$ and $\mD_1$ by setting, for each $R \in \sch$, $R^{\mD_0}(t)=b$ for all $t\in R^{D_0}$ and $R^{\mD_1}(t)=d$ for all $t\in R^{D_1}$, and mapping all other tuples to $0$.
Define
\[
\mD \defeq \mD_0 \mplus \mD_1.
\]
By positivity of $\Mo_b$, $D = \supp(\mD_0\mplus \mD_1)$. We then obtain that $\supp(\mD_0\mplus \mD_1) \not\models \tau$ by \eqref{eq:counter} and $\mD \not\models \tau$ by Proposition~\ref{prop:preserved}.

As for $\mD \models \Sigma$, consider an arbitrary inclusion dependency $\inc{R}{\vec{A}}{S}{\vec{B}} \in \Sigma$. By the same reasoning as in Case 1, \eqref{eq:ineqind} holds for the case $\deg(\vec{A}\mapsto \vec{a})=\ar(\tau)$, now with $\marg{R^{\mD_0}}{\vec{A}}(\vec{a}) \in \{0,b\}$. For the case $\deg(\vec{A}\mapsto \vec{a}) < \ar(\tau)$, since $\marg{R^{\mD_0}}{\vec{A}}(\vec{a})$ is of the form $b \mplus \dots \mplus b$ and $\marg{R^{\mD_1}}{\vec{A}}(\vec{a})$ is of the form $d \mplus \dots \mplus d$, we have that
    \begin{align*} 
        &\marg{R^{\mD}}{\vec{A}}(\vec{a}) = \marg{R^{\mD_0}}{\vec{A}}(\vec{a}) \mplus \marg{R^{\mD_1}}{\vec{A}}(\vec{a}) \leq d \\
        &\leq \marg{S^{\mD_0}}{\vec{B}}(\vec{a}) \mplus \marg{S^{\mD_1}}{\vec{B}}(\vec{a}) = \marg{S^{\mD}}{\vec{B}}(\vec{a})
    \end{align*}
Thus, $\mD \models \Sigma$ and hence, $\Sigma \not\models_{\Mo} \tau$.
\end{proof}

\subsection{Main Theorem: Weakly Absorptive Monoids}
We are now in a position to state the main theorem of this section, which unifies the two cases considered above.
The result follows directly from Theorems~\ref{prop:cwa} and~\ref{prop:fwa}.
\begin{theorem}\label{thm:wa}
    Let $\Mo$ be a weakly absorptive positive commutative monoid.
    The axioms of inclusion dependencies are sound and complete over $\Mo$-databases.
\end{theorem}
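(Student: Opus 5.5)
The plan is to split on whether $\Mo$ is countably absorptive and apply the two results just proven. Soundness holds for every commutative monoid by Proposition~\ref{prop:sound}, so only completeness needs an argument. Every countably absorptive monoid is weakly absorptive (Figure~\ref{fig:relationships}), so the weakly absorptive monoids are exactly those that are either countably absorptive, or weakly absorptive but not countably absorptive.

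In the first case, Theorem~\ref{prop:cwa} applies directly. Given $\Sigma\not\vdash\tau$, it builds a counterexample $\Mo$-database by assigning the absorptive chain element $a_{n-\deg(t)}$ to each tuple $t$ of $\chase{D_\tau}{\Sigma}$. In the second case, Theorem~\ref{prop:fwa} applies. There Lemma~\ref{lem:wa} supplies elements $a,b$ with $a\mplus b=b$ and $b$ non-absorbing. The counterexample then uses either the $\mplus$-chase over the submonoid $\Mo_b$ (when its order is antisymmetric) or the periodic top element $d=mb$ (when it is not).

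Both theorems have the same hypotheses as the present statement apart from the case condition, namely that $\Mo$ is a positive commutative monoid. So for finite $\Sigma\cup\{\tau\}$ with $\Sigma\models_{\Mo}\tau$, the relevant theorem yields $\Sigma\vdash\tau$. Nothing further needs verifying, and no step poses a real obstacle. The only point to check is that the case split is exhaustive and that the two hypotheses match, which follows from the definitions in Definition~\ref{def:monoidprops}.
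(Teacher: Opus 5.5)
Your proposal is correct and is exactly the paper's argument. Soundness comes from Proposition~\ref{prop:sound}, and completeness follows by splitting on whether $\Mo$ is countably absorptive and invoking Theorem~\ref{prop:cwa} or Theorem~\ref{prop:fwa}, whose hypotheses together cover every weakly absorptive positive commutative monoid.
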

The proof of the following corollary is almost identical to that of Corollary \ref{cor:wc}.
\begin{corollary}\label{cor:wab}
    Let $\Mo$ be a weakly absorptive positive commutative monoid.
    The axioms of inclusion dependencies extended with balance axioms are sound and complete over balanced $\Mo$-databases.
\end{corollary}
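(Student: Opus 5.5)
Following the proof of Corollary~\ref{cor:wc}, I would reduce the statement to Theorem~\ref{thm:wa} by adding balance axioms to the premises.

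\emph{Soundness.} The standard rules are sound over all $\Mo$-databases, hence over balanced ones. Balance is sound over balanced $\Mo$-databases by Proposition~\ref{prop:sound}. Since each rule preserves validity on the class of balanced databases, every derivation using the standard rules together with balance is sound with respect to $\bmomodels{\Mo}$.

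\emph{Completeness.} Suppose $\Sigma\bmomodels{\Mo}\sigma$, and fix the finite set $\sch$ of relation names occurring in $\Sigma\cup\{\sigma\}$. Let $\Sigma^*$ be $\Sigma$ extended with every balance axiom $\inc{S}{\emptyset}{R}{\emptyset}$ for $R,S\in\sch$. I would show $\Sigma^*\momodels{\Mo}\sigma$ and then apply Theorem~\ref{thm:wa}. The completeness direction gives $\Sigma^*\vdash\sigma$, so $\sigma$ is derivable from $\Sigma$ by the standard rules plus balance.

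To see $\Sigma^*\momodels{\Mo}\sigma$, let $\mD$ satisfy $\Sigma^*$. The restriction of $\mD$ to $\sch$ satisfies $\inc{R}{\emptyset}{S}{\emptyset}$ and $\inc{S}{\emptyset}{R}{\emptyset}$, that is, $R[\emptyset]\leq S[\emptyset]$ and $S[\emptyset]\leq R[\emptyset]$.

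\emph{Main obstacle.} The natural order is only a preorder, so mutual inequality need not yield equality. The restriction is therefore not literally balanced, and the hypothesis $\Sigma\bmomodels{\Mo}\sigma$ cannot be applied to it directly. I see two ways around this.

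\begin{enumerate}
\item Use the counterexample databases built in Theorems~\ref{prop:cwa} and~\ref{prop:fwa} directly. Starting the chase from $\Sigma^*$ forces a tuple into every relation name of $\sch$. Then check that their marginals to $\emptyset$ actually coincide. In the constructions, each relation's total absorbs into the same top element: $b$ or $d$ when a lower-degree tuple is present, and $a_j$ along the absorptive chain. So these counterexamples would be genuinely balanced.
\item Alternatively, adapt those constructions so that a single uniform maximal annotation, such as $d$ or a sufficiently absorptive element, is placed in every relation. This yields a balanced counterexample to $\Sigma\bmomodels{\Mo}\sigma$ whenever $\Sigma^*\not\vdash\sigma$.
\end{enumerate}

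I expect this verification of exact balance, rather than mere mutual $\leq$, to be the delicate step. Everything else is bookkeeping.
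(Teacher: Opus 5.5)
Your reduction is the paper's: extend $\Sigma$ to $\Sigma^*$ with all balance axioms over $\sch$, then conclude via Theorem~\ref{thm:wa}. The paper, however, simply asserts $\Sigma^*\momodels{\Mo}\sigma$, as in Corollary~\ref{cor:wc}. Your objection to that step is legitimate. For WC monoids it is justified by Proposition~\ref{prop:antisymmetric}. A positive WA monoid, though, can have a non-antisymmetric order: in $\{0,x,y\}$ with $x\mplus x=y$, $x\mplus y=x$, $y\mplus y=y$ one has $x\le y\le x$. So you really must exhibit a \emph{balanced} counterexample whenever $\Sigma^*\not\vdash\sigma$. For the CA construction of Theorem~\ref{prop:cwa} your check goes through:
\begin{itemize}
\item every name in $\sch$ receives the all-$*$ tuple with weight $a_n$, where $n=\ar(\sigma)\ge 1$ (arity-$0$ targets already lie in $\Sigma^*$);
\item every other tuple carries some $a_i$ with $i<n$;
\item $a_i\mplus a_j=a_j$ for $i<j$ follows by induction from $a_i\mplus a_{i+1}=a_{i+1}$.
\end{itemize}
Hence every total is exactly $a_n$.

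The gap is the non-CA case, where the reason you give is false. Item~2 of Lemma~\ref{lem:wa} gives $b\mplus c\neq c$ for all $c$, in particular $b\mplus b\neq b$ and $b\mplus d\neq d$. So $b$ absorbs $a$ but not itself. A relation holding $p$ tuples of weight $b$ has total $pb$, and in Case~1 these values are pairwise distinct by weak cancellativity of $\Mo_b$.

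The missing idea is that Case~1 is balanced because of the $\mplus$-chase, not because of absorption. $\cl{\Sigma^*}$ contains both $\inc{R}{\emptyset}{S}{\emptyset}$ and $\inc{S}{\emptyset}{R}{\emptyset}$. Hence the totals of $\chaseplus{\mD_1}{\cl{\Sigma^*}}$ are mutually $\leq$ in $\Mo_b$, whose order is antisymmetric in Case~1. They therefore coincide at some $mb$ with $m\ge 1$, since the all-$*$ tuples lie in $D_1$. The contribution $k_R a$ of $\mD_0$ then vanishes, because $a\mplus b=b$ gives $k_Ra\mplus mb=mb$.

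In Case~2, nothing forces the totals $(k_R+j_Rm)b$ to be independent of $R$. Here $k_R$ and $j_R$ count the tuples of $R$ with weight $b$ and $d=mb$, respectively. What saves you is that Case~2 is vacuous. If $mb=(m+\ell)b$, then $jb\mplus jb=jb\neq 0$ for any $j\ge m$ divisible by $\ell$. That makes $\Mo$ self-absorptive and hence CA, contradicting the hypothesis.

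The same fact rules out your second option. A non-CA monoid has no non-zero idempotent, so there is no ``uniform maximal annotation'' that can be repeated across tuples without changing the totals.
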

\begin{proof}
Soundness follows from Prop. \ref{prop:sound}.
For completeness, suppose $\Sigma\momodels{\Mo,\rm b}\sigma$. 
Then $\Sigma^* \models_{\Mo}\sigma$, where $\Sigma^*$ extends $\Sigma$ by all balance axioms $\inc{S}{\emptyset}{R}{\emptyset}$, where $R$ and $S$ are relation names appearing in $\Sigma$ or $\sigma$.
It follows by \Cref{thm:wa} that $\sigma$ is derivable from $\Sigma^*$ using the axioms of inclusion dependencies. This entails $\sigma$ is derivable from $\Sigma$ using the axioms of inclusion dependencies extended with balance axioms.\looseness=-1
\end{proof}

\section{Conclusion}
We studied the implication problem for inclusion dependencies over $\Mo$-databases,
where tuple annotations range over a positive commutative monoid $\Mo$.
Our main result establishes a complete classification of the axiomatisability
of this problem in terms of simple algebraic properties of $\Mo$.
In particular, the standard axioms for inclusion dependencies are sound and
complete precisely when $\Mo$ is weakly absorptive, whereas for weakly
cancellative monoids one additionally needs the weak symmetry axiom.
When restricting to balanced databases, a further balance axiom is required,
yielding symmetry as a derived rule.

Technically, the proofs rely on a novel variant of the chase procedure,
the $\oplus$-chase, which generalises the classical relational chase to
monoids with a total natural order and may be of independent interest.
This chase variant terminates whenever weak symmetry is imposed, and
it yields a canonical construction for completeness in the weakly
cancellative case.

\section*{Acknowledgments}
The first author was partially supported by the ERC grant 101020762.
The third author was partially supported by the London Mathematical Society's Computer Science Small Grant.

\bibliographystyle{kr}
\bibliography{biblio}

\newpage
\appendix

\section{Proof of Lemma~\ref{lem:terminates}}
For an IND $\sigma=\inc{R}{\vec{A}}{S}{\vec{B}}$, the \emph{inverse} of $\sigma$, denoted $\inv{\sigma}$, is defined as $\inc{S}{\vec{B}}{R}{\vec{A}}$.

\setcounter{theorem}{11}
\terminates*
\setcounter{theorem}{27}

\begin{proof} 
Consider an arbitrary $\oplus$-chase sequence $(\mD_i)_{i\geq 0}$ from $\mD$ by $\Sigma$.  
Assume toward contradiction that this sequence is infinite.
Let us write $T^*=\bigcup_{i\geq 0} \{t \in R \mid R \in \supp(\mD_i)\}$ for the set of tuples  appearing in the supports of the $\Mo$-databases in the chase sequence.
Furthermore, let $T^{\infty} \subseteq T^*$ consist of those tuples that are incremented by infinitely many chase steps, and let $T^{<\infty}\coloneqq  T^{\infty}\setminus T^*$.
Then, 
fix $N \in \mathbb{N}$ such that
\[
\forall i \ge N:\quad
\chasenext{\mD_i}{\mD_{i+1}}{\sigma_i,\vec{a}_i}
\text{ increments a tuple in } T^{\infty}.
\]

Since the $\oplus$-chase does not introduce any other fresh values than $*$,  $T^*$ is finite. 
Let 
\begin{equation}\label{eq:M}
M \coloneq \max\{\deg(t)\mid t\in T^\infty\}.
\end{equation}
Next, we prove three helping claims.

\begin{claim}\label{claim:one}
    Let $i\geq N$, and let $R\in \sch$. Let $\vec{A}$ be a sequence of attributes from $\Att(R)$, and let $\vec{a}=a_1, \dots ,a_n$, $n=|\vec{A}|$, be a sequence of constants such that $|\{\ell\in [n]\mid a_\ell\neq *\}|>M$. Then $\marg{R^{\mD_i}}{\vec{A}}(\vec{a})=\marg{R^{\mD_N}}{\vec{A}}(\vec{a})$.
\end{claim}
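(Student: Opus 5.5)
The plan is to show that, after step $N$, no chase step can change the value $\marg{R^{\mD_i}}{\vec{A}}(\vec{a})$ for a pattern $\vec{a}$ that has more than $M$ non-$*$ entries. The argument is an induction on $i\geq N$, with the case $i=N$ trivial.

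\emph{How a single step acts.} Each step $\chasenext{\mD_i}{\mD_{i+1}}{\sigma_i,\vec{a}_i}$ changes exactly one tuple $t'$ of exactly one relation $S$. It replaces $S^{\mD_i}(t')$ by $S^{\mD_i}(t')\mplus c$ for some $c$ and leaves every other annotation untouched. From the definition of marginalisation we then get:
\begin{itemize}
\item $\marg{U^{\mD_{i+1}}}{\vec{A}}(\vec{a})=\marg{U^{\mD_i}}{\vec{A}}(\vec{a})$ for every $U\neq S$;
\item the same equality holds for $U=S$ whenever $\marg{t'}{\vec{A}}\neq(\vec{A}\mapsto\vec{a})$, since the sum defining the marginal then ranges over unchanged terms.
\end{itemize}
So it is enough to show that, for $i\geq N$, the incremented tuple $t'$ never projects onto $(\vec{A}\mapsto\vec{a})$ when $\vec{a}$ has more than $M$ non-$*$ entries.

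\emph{The degree argument.} By the choice of $N$, every step with $i\geq N$ increments a tuple $t'\in T^{\infty}$. By \eqref{eq:M}, such a tuple satisfies $\deg(t')\leq M$. Suppose $\marg{t'}{\vec{A}}=(\vec{A}\mapsto\vec{a})$. Then $t'(A_\ell)=a_\ell$ for each $\ell\in[n]$. Because the attributes $A_\ell$ are pairwise distinct, this gives
\[
\deg(t')\geq |\{\ell\in[n]\mid a_\ell\neq *\}|>M,
\]
which is a contradiction. Hence every step after $N$ leaves the marginal unchanged, and induction on $i$ yields $\marg{R^{\mD_i}}{\vec{A}}(\vec{a})=\marg{R^{\mD_N}}{\vec{A}}(\vec{a})$.

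\emph{Main point to check.} The only delicate step is confirming that the choice of $N$ is legitimate. This requires $T^*$ to be finite: all constants come from $\adom(\mD)\cup\{*\}$, so there are finitely many tuples. Consequently, the tuples outside $T^{\infty}$ are incremented only finitely often in total, and all of their increments happen before some index. I would also note that the paper's definition of $T^{<\infty}$ has its set difference reversed; it should read $T^*\setminus T^{\infty}$. Apart from this, the claim is pure bookkeeping, and no monoid property beyond commutativity and associativity of the aggregate sum is used.
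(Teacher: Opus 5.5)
Your proof is correct and is the paper's argument: after step $N$ every chase step increments a tuple of $T^{\infty}$, whose degree is at most $M$, so no such step touches a tuple projecting onto $\vec{a}$, and the marginal stays unchanged. You only spell out the degree comparison that the paper leaves implicit, and you are right that $T^{<\infty}$ should be defined as $T^*\setminus T^{\infty}$.
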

\begin{proof}
    Chase steps of the form $\chasenext{\mD_j}{\mD_{j+1}}{\sigma_i,\vec{a}_i}$, for $j\geq N$, increment tuples from $T^{\infty}$. In particular, they do not increment $R$-tuples $t$ such that $t(\vec{A})=\vec{a}$. Thus the claim follows.
\end{proof}

\begin{claim}\label{claim:two}
    Let $\sigma=\inc{R}{\vec{A}}{S}{\vec{B}}\in \Sigma$ be such that $\inv{\sigma}\in \Sigma$. Consider a sequence $\vec{a}=a_1, \dots ,a_n$, $n=|\vec{A}|$,  such that $|\{ \ell\in [n]\mid  a_\ell\neq *\}|>M$.
    Then  
    \begin{equation}\label{eq:impliedeq}
            \marg{R^{\mD_N}}{\vec{A}}(\vec{a}) = \marg{S^{\mD_N}}{\vec{B}}(\vec{a}).
    \end{equation}
\end{claim}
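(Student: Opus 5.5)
The plan is to use the fairness condition (\cref{it:three}) of a $\oplus$-chase sequence once for $\sigma$ and once for $\inv{\sigma}$. \Cref{claim:one} freezes the relevant marginals from stage $N$ onwards, and antisymmetry of the natural order then gives the equality.

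First, I fix $\vec{a}$ with more than $M$ entries different from $*$. \Cref{claim:one} applies both to $R$ with the attribute sequence $\vec{A}$ and to $S$ with the attribute sequence $\vec{B}$, because its hypothesis concerns only $\vec{a}$ and not the attributes. Hence, for every $j\geq N$,
\[
\marg{R^{\mD_j}}{\vec{A}}(\vec{a})=\marg{R^{\mD_N}}{\vec{A}}(\vec{a})
\quad\text{and}\quad
\marg{S^{\mD_j}}{\vec{B}}(\vec{a})=\marg{S^{\mD_N}}{\vec{B}}(\vec{a}).
\]

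Second, I apply \cref{it:three} of the definition of a $\oplus$-chase sequence with $i=N$ and $\sigma$. This gives some $j>N$ with $\marg{R^{\mD_N}}{\vec{A}}(\vec{a})\leq \marg{S^{\mD_j}}{\vec{B}}(\vec{a})$. By the first step the right-hand side equals $\marg{S^{\mD_N}}{\vec{B}}(\vec{a})$. Symmetrically, since $\inv{\sigma}=\inc{S}{\vec{B}}{R}{\vec{A}}\in\Sigma$, the same condition with $i=N$ gives some $j'>N$ with $\marg{S^{\mD_N}}{\vec{B}}(\vec{a})\leq \marg{R^{\mD_{j'}}}{\vec{A}}(\vec{a})=\marg{R^{\mD_N}}{\vec{A}}(\vec{a})$.

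Finally, I conclude \eqref{eq:impliedeq} from these two inequalities by antisymmetry of the natural order. Antisymmetry is available under the standing assumptions of this section, as noted at the start of the $\oplus$-chase section: $\Mo$ is weakly cancellative with a total natural order. Indeed, if $x\mplus c=y$ and $y\mplus c'=x$, then $x\mplus(c\mplus c')=x$, so weak cancellativity gives $c\mplus c'=0$, and positivity gives $c=0$, hence $x=y$.

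The only delicate point is making sure \Cref{claim:one} can be invoked on both sides simultaneously. This works because the threshold in that claim depends only on the value sequence $\vec{a}$, which is shared by $\sigma$ and $\inv{\sigma}$. Beyond that, the argument is routine.
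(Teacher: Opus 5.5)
Your proof is correct and is essentially the paper's argument. The paper argues by contradiction: it uses antisymmetry and the presence of $\inv{\sigma}$ to assume without loss of generality that $\marg{R^{\mD_N}}{\vec{A}}(\vec{a}) \not\leq \marg{S^{\mD_N}}{\vec{B}}(\vec{a})$, then notes that fairness would force a tuple of degree greater than $M$ to be incremented after stage $N$. You instead cite Claim~1 for this freezing and apply fairness directly to both $\sigma$ and $\inv{\sigma}$, which uses the same ingredients (fairness, no high-degree increments after $N$, and antisymmetry) more cleanly.
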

\begin{proof}
    Assume toward contradiction that \eqref{eq:impliedeq} is false. %
    By antisymmetry of the natural order, and symmetry, we may assume that 
    \begin{equation}\label{eq:disequal}
            \marg{R^{\mD_N}}{\vec{A}}(\vec{a}) \not\leq \marg{S^{\mD_N}}{\vec{B}}(\vec{a}).
    \end{equation}
    By \cref{it:three} of the $\oplus$-chase sequence, we find $j> N$ such that 
    \[
    \marg{R^{\mD_N}}{\vec{A}}(\vec{a}) \leq %
    \marg{S^{\mD_j}}{\vec{B}}(\vec{a}).\]
    This is only possible when some chase step
    $\chasenext{\mD_i}{\mD_{i+1}}{\sigma_i,\vec{a}_i}$ with $i\geq N$ increments a tuple in $T^{<\infty}$. This creates  a contradiction with the definition of $N$.
    Hence \eqref{eq:impliedeq} is true.
\end{proof}

Hereafter, when writing INDs in a two-partite form $\inc{R}{\vec{A}\vec{C}}{S}{\vec{B}\vec{D}}$,
we tacitly assume that $|\vec{A}| = |\vec{B}|=M$ for $M$ fixed in \eqref{eq:M}, and
$|\vec{C}| = |\vec{D}|$.

\begin{claim}\label{claim:three}
Let an IND $\sigma = \inc{R}{\vec{A}\vec{C}}{S}{\vec{B}\vec{D}}\in \Sigma$ be such that $\inv{\sigma}\in \Sigma$.
Consider a sequence $\vec{a}\in (\Const\setminus \{*\})^M$. Let $i\geq N$. 
Suppose the chase sequence contains a step $\chasenext{\mD_i}{\mD_{i+1}}{\sigma,\vec{a}\vec{*}}$. 
Then,
    \begin{equation}\label{eq:equal}
            \marg{R^{\mD_i}}{\vec{A}}(\vec{a}) = \marg{S^{\mD_{i+1}}}{\vec{B}}(\vec{a}).
    \end{equation}
\end{claim}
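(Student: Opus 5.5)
The plan is to decompose both marginals in \eqref{eq:equal} according to the values taken on the remaining attributes $\vec{C}$ and $\vec{D}$. We then compare the pieces one by one: the single piece where these values are all $*$ is handled by \eqref{eq:equals}, and every other piece is handled by Claims~\ref{claim:one} and~\ref{claim:two}.

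\emph{Step 1 (splitting the marginals).} Let $k=|\vec{C}|=|\vec{D}|$. By associativity and commutativity of $\mplus$, marginalising in stages gives
\[
\marg{R^{\mD_i}}{\vec{A}}(\vec{a}) = \marg{R^{\mD_i}}{\vec{A}\vec{C}}(\vec{a}\vec{*}) \mplus \bigmplus_{\vec{c}\neq \vec{*}} \marg{R^{\mD_i}}{\vec{A}\vec{C}}(\vec{a}\vec{c}),
\]
where $\vec{c}$ ranges over $\Const^k$ and $\vec{c}\neq\vec{*}$ means that some entry of $\vec{c}$ differs from $*$. This sum is finite because supports are finite. Analogously,
\[
\marg{S^{\mD_{i+1}}}{\vec{B}}(\vec{a}) = \marg{S^{\mD_{i+1}}}{\vec{B}\vec{D}}(\vec{a}\vec{*}) \mplus \bigmplus_{\vec{c}\neq\vec{*}} \marg{S^{\mD_{i+1}}}{\vec{B}\vec{D}}(\vec{a}\vec{c}).
\]

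\emph{Step 2 (the all-$*$ term).} The step $\chasenext{\mD_i}{\mD_{i+1}}{\sigma,\vec{a}\vec{*}}$ applies the $\oplus$-rule to $\sigma$ with the sequence $\vec{a}\vec{*}$. Hence \eqref{eq:equals} gives directly
\[
\marg{S^{\mD_{i+1}}}{\vec{B}\vec{D}}(\vec{a}\vec{*})=\marg{R^{\mD_i}}{\vec{A}\vec{C}}(\vec{a}\vec{*}).
\]

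\emph{Step 3 (the remaining terms).} Since $\vec{a}\in(\Const\setminus\{*\})^M$, every sequence $\vec{a}\vec{c}$ with $\vec{c}\neq\vec{*}$ has more than $M$ entries different from $*$. Because $i,i+1\geq N$, Claim~\ref{claim:one} freezes both sides: $\marg{R^{\mD_i}}{\vec{A}\vec{C}}(\vec{a}\vec{c})=\marg{R^{\mD_N}}{\vec{A}\vec{C}}(\vec{a}\vec{c})$ and $\marg{S^{\mD_{i+1}}}{\vec{B}\vec{D}}(\vec{a}\vec{c})=\marg{S^{\mD_N}}{\vec{B}\vec{D}}(\vec{a}\vec{c})$. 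Since $\sigma,\inv{\sigma}\in\Sigma$, Claim~\ref{claim:two} applied to $\sigma$ and the sequence $\vec{a}\vec{c}$ shows that these two frozen values coincide.

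\emph{Step 4 (conclusion).} Substituting the equalities from Steps 2 and 3 termwise into the decompositions of Step 1 yields \eqref{eq:equal}.

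The only delicate point is the counting in Step 3. We must check that every $\vec{c}\neq\vec{*}$, including those with only some non-$*$ entries, pushes the number of non-$*$ values strictly above $M$, so that Claims~\ref{claim:one} and~\ref{claim:two} apply. This holds precisely because all $M$ entries of $\vec{a}$ are distinct from $*$. No cancellation is needed anywhere, since all comparisons are exact equalities of summands.
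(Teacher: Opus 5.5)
Your proof is correct and takes the same approach as the paper. You split off the all-$*$ summand and handle it with \eqref{eq:equals}; every other summand has more than $M$ non-$*$ entries, so Claim~\ref{claim:one} freezes it at $\mD_N$ and Claim~\ref{claim:two} matches the $R$-side with the $S$-side (you also correctly index the $S$-side all-$*$ term by $\vec{B}\vec{D}$, where the paper's display has the typo $\vec{A}\vec{C}$).
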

\begin{proof}
We obtain \eqref{eq:equal} via the following chain of equalities and inequalities:
\begin{align}
    &\marg{R^{\mD_i}}{\vec{A}}(\vec{a}) \nonumber\\%
    &= \marg{R^{\mD_i}}{\vec{A}\vec{C}}(\vec{a}\vec{*})  \mplus \bigmplus_{\vec{b}\neq \vec{*}} \marg{R^{\mD_i}}{\vec{A}\vec{C}}(\vec{a}\vec{b})  \nonumber\\
&= \marg{R^{\mD_i}}{\vec{A}\vec{C}}(\vec{a}\vec{*}) \mplus \bigmplus_{\vec{b}\neq \vec{*}} \marg{R^{\mD_N}}{\vec{A}\vec{C}}(\vec{a}\vec{b})\label{eq:m}\\
&= \marg{S^{\mD_{i+1}}}{\vec{A}\vec{C}}(\vec{a}\vec{*}) \mplus \bigmplus_{\vec{b}\neq \vec{*}} \marg{S^{\mD_N}}{\vec{B}\vec{D}}(\vec{a}\vec{b})\label{eq:n}\\
&= \dots = \marg{S^{\mD_{i+1}}}{\vec{B}}(\vec{a}).\label{eq:o}
\end{align}
Note that \eqref{eq:m} is by Claim \ref{claim:one}. Furthermore, \eqref{eq:n} follows by 
 \eqref{eq:equals} and Claim \ref{claim:two}. %
The remaining equalities leading to \eqref{eq:o} are derived symmetrically.
\end{proof}

Let us now create a sequence $(i_j)_{j\geq 0}$, $i_0=N$, such that it satisfies the following property:
\begin{quote}
(\emph{Closure property})
If the $\oplus$-rule is applicable to $\mD_{i_j}$ with respect to
$\sigma = \inc{R}{\vec{A}}{S}{\vec{B}} \in \Sigma$ and $\vec{a}$, then
\[
\marg{R^{\mD_{i_j}}}{\vec{A}}(\vec{a})
\;\leq\;
\marg{S^{\mD_{i_{j+1}}}}{\vec{B}}(\vec{a}).
\]
\end{quote}
 Informally, any IND that is violated at level $i_j$ is ``closed'' at level $i_{j+1}$. The existence of $(i_j)_{j\geq 0}$ with the closure property is guaranteed by \cref{it:three} of the $\oplus$-chase sequence.

Hereafter, by slight abuse of notation
$\chasenext{\mD}{\mD'}{\sigma,\vec{a}}$
can denote any chase step $\chasenext{\mD}{\mD'}{\sigma',\vec{a}'}$ where $\sigma'$ and $\vec{a}'$ are obtain from 
$\sigma$ and $\vec{a}$, respectively, by the same permutation.
Then, since some tuple $t$ with degree $M$ (defined in \eqref{eq:degree}) is incremented infinitely many times during the chase, for any $j\in \mathbb{N}$ we find a chase step of the form $\chasenext{\mD_{\ell-1}}{\mD_{\ell}}{\sigma,\vec{a}\vec{*}}$, where $\sigma=\inc{R}{\vec{A}\vec{C}}{S}{\vec{B}\vec{D}}$, $\vec{a} \in (\Const \setminus \{*\})^M$, and $\ell>  i_j$. Provided that $j>0$, we can show that the multiplicity $\marg{R^{\mD_{\ell-1}}}{\vec{A}}(\vec{a})$ was  incremented previously at some level $\ell'>  i_{j-1}$.
\begin{claim}\label{claim:four}
   Let $j\in \mathbb{N}\setminus \{0\}$ and $\ell> i_j$. Suppose $\chasenext{\mD_{\ell-1}}{\mD_{\ell}}{\sigma,\vec{a}\vec{*}}$, where $\sigma=\inc{R}{\vec{A}\vec{C}}{S}{\vec{B}\vec{D}}$ and $\vec{a}\in (\Const \setminus \{*\})^M$. %
   Then, $\chasenext{\mD_{\ell'-1}}{\mD_{\ell'}}{\sigma',\vec{a}\vec{*}}$, for some
   $\ell'<\ell$ and $\sigma'=\inc{U}{\vec{E}\vec{F}}{R}{\vec{A}\vec{C}}$
   such that
   \begin{itemize}
        \item $ \ell' > i_{j-1}$ and
       \item $\marg{R^{\mD_{\ell'}}}{\vec{A}}(\vec{a})=\marg{R^{\mD_{\ell-1}}}{\vec{A}}(\vec{a})$, and
   \end{itemize}
\end{claim}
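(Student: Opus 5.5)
The plan is to trace back the step at level $\ell$ to the most recent step that changed $\marg{R}{\vec{A}}(\vec{a})$. Since the step $\chasenext{\mD_{\ell-1}}{\mD_{\ell}}{\sigma,\vec{a}\vec{*}}$ is applicable at level $\ell-1$, we have $\marg{R^{\mD_{\ell-1}}}{\vec{A}\vec{C}}(\vec{a}\vec{*}) \not\leq \marg{S^{\mD_{\ell-1}}}{\vec{B}\vec{D}}(\vec{a}\vec{*})$. Because $\ell-1 \geq i_j$, Claim~\ref{claim:one} and the structure of the tail of the chase allow us to reduce this to the $\vec{A}$-marginal: every tuple contributing to $\marg{R}{\vec{A}}(\vec{a})$ with a non-$*$ value outside $\vec{A}$ has degree $>M$, so its weight is frozen after $N$. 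We thus get $\marg{R^{\mD_{\ell-1}}}{\vec{A}}(\vec{a}) \not\leq \marg{S^{\mD_{\ell-1}}}{\vec{B}}(\vec{a})$, modulo the frozen parts matching, which is handled as in Claim~\ref{claim:three}.

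First, I will argue that the value $\marg{R}{\vec{A}}(\vec{a})$ must have changed at some level in $(i_{j-1},\ell-1]$. Suppose not. Then $\marg{R^{\mD_{i_{j-1}}}}{\vec{A}}(\vec{a}) = \marg{R^{\mD_{\ell-1}}}{\vec{A}}(\vec{a})$. Since $\Sigma$ is closed under $\wsvdash$ (and $\sigma$ holds together with $R[\emptyset]\le S[\emptyset]$ type conditions coming from the closure), $\inv{\sigma}\in\Sigma$. By the closure property of $(i_j)$, $\sigma$ was already closed at level $i_j$ for $\vec{a}\vec{*}$, i.e.\ $\marg{R^{\mD_{i_{j-1}}}}{\vec{A}\vec{C}}(\vec{a}\vec{*})\leq \marg{S^{\mD_{i_j}}}{\vec{B}\vec{D}}(\vec{a}\vec{*})$. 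Monotonicity then gives $\marg{R^{\mD_{\ell-1}}}{\vec{A}\vec{C}}(\vec{a}\vec{*}) \leq \marg{S^{\mD_{\ell-1}}}{\vec{B}\vec{D}}(\vec{a}\vec{*})$, contradicting applicability at $\ell-1$. For this to work, $R$-weight on $\vec{a}\vec{*}$ must be unchanged as well. Since only degree-$\le M$ tuples in $T^\infty$ are incremented after $N$, any increment to $\marg{R}{\vec{A}}(\vec{a})$ after $N$ increments exactly the tuple with $\vec{A}\mapsto\vec{a}$ and $*$ elsewhere. Hence "no change of $\marg{R}{\vec{A}}(\vec{a})$" is the same as "no change of $\marg{R}{\vec{A}\vec{C}}(\vec{a}\vec{*})$".

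Second, I will let $\ell'$ be the last level in $(i_{j-1},\ell-1]$ at which $\marg{R}{\vec{A}}(\vec{a})$ changes. By the observation above, the step $\mD_{\ell'-1}\to\mD_{\ell'}$ increments the $R$-tuple $t$ with $t(\vec{A})=\vec{a}$ and $*$ elsewhere. By the definition of the $\oplus$-rule, this step is governed by some $\sigma''=\inc{U}{\vec{G}}{R}{\vec{H}}$ whose target sequence $\vec{H}$ lists exactly the non-$*$ positions of $t$ (possibly with $*$-valued positions too). Permuting $\vec{H}$ so that it reads $\vec{A}\vec{C}$ gives the form $\sigma'=\inc{U}{\vec{E}\vec{F}}{R}{\vec{A}\vec{C}}$ with argument $\vec{a}\vec{*}$, which the abuse of notation permits. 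The case where $\vec{H}$ omits some of $\vec{C}$ is handled by projection closure of $\Sigma$: the step is also a step for the extended IND, since the incremented tuple is identical. By the maximality of $\ell'$, $\marg{R^{\mD_{\ell'}}}{\vec{A}}(\vec{a})=\marg{R^{\mD_{\ell-1}}}{\vec{A}}(\vec{a})$.

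The main obstacle is the first step, namely transporting the closure at $i_j$ to the $\vec{a}\vec{*}$-level inequality at $\ell-1$ while $S$-values only grow. The key points to verify are the monotonicity of the chase and that the frozen degree-$>M$ contributions coincide. For the latter I will invoke Claim~\ref{claim:two} as in Claim~\ref{claim:three}.
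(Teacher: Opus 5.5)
Your core argument (your second and third paragraphs) is correct and is essentially the paper's proof. The paper takes $\ell'$ least with $\marg{R^{\mD_{\ell'}}}{\vec{A}}(\vec{a})=\marg{R^{\mD_{\ell-1}}}{\vec{A}}(\vec{a})$, which is the same as your ``last change''. It rules out $\ell'\le i_{j-1}$ by exactly your contradiction: closure property, then monotonicity of $S$, against applicability at $\ell-1$. It then reads $\sigma'$ off the step at $\ell'$. Your degree observation justifies the passage between the $\vec{A}$-marginal and the $\vec{A}\vec{C}$-marginal at $\vec{a}\vec{*}$ more explicitly than the paper's ``Consequently''. The observation is that after $N$ the only $R$-tuple with $t(\vec{A})=\vec{a}$ that can be incremented is the one with $*$ elsewhere, and every increment is non-zero and hence changes the value by weak cancellativity.

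Some side claims in your write-up are wrong and should be deleted; neither is needed.

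\textbf{The inverse of $\sigma$.} Closure under $\wsvdash$ does not give $\inv{\sigma}\in\Sigma$ for an arbitrary $\sigma=\inc{R}{\vec{A}\vec{C}}{S}{\vec{B}\vec{D}}$. Weak symmetry requires $\inc{S}{\emptyset}{R}{\emptyset}$, but projecting $\sigma$ only yields $\inc{R}{\emptyset}{S}{\emptyset}$. In the paper, inverses become available only later, for the INDs lying on a cycle of references. Hence Claim~\ref{claim:two} cannot be invoked here. Your first paragraph's reduction to $\vec{A}$-marginals and your closing appeal to Claim~\ref{claim:two} therefore have no basis. They are also never used: your contradiction is derived directly at the level of $\vec{a}\vec{*}$, and the degree observation already shows that the $R$-side is unchanged on $[i_{j-1},\ell-1]$.

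\textbf{The ``projection closure'' patch.} This patch is backwards. Projection can only delete attributes from an IND, not add them. Moreover, a chase step taken with respect to $\sigma''$ is not a chase step with respect to any other IND, because the increment amount depends on the IND. You do not need the patch. Read the $\vec{C}$ in $\sigma'$ as whatever target attributes of $\sigma''$ remain besides $\vec{A}$, all of which receive $*$; this is the paper's two-partite convention. Only $R$, $\vec{A}$ and the argument $\vec{a}\vec{*}$ are used when the claim is applied in the final cycle argument.
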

\begin{proof} 
We define $\ell'< \ell$ as the least integer with $\marg{R^{\mD_{\ell'}}}{\vec{A}}(\vec{a})=\marg{R^{\mD_{\ell-1}}}{\vec{A}}(\vec{a})$. Consequently, $\marg{R^{\mD_{\ell'}}}{\vec{A}\vec{C}}(\vec{a}\vec{*}) = \marg{R^{\mD_{\ell-1}}}{\vec{A}\vec{C}}(\vec{a}\vec{*})$. We claim that $\ell' > i_{j-1}$. Assume toward contradiction that $\ell' \leq i_{j-1}$. 
By the closure property of $(i_j)_{j\geq 0}$, 
we obtain
\begin{align*} 
\marg{R^{\mD_{\ell-1}}}{\vec{A}\vec{C}}(\vec{a}\vec{*})=& \marg{R^{\mD_{\ell'}}}{\vec{A}\vec{C}}(\vec{a}\vec{*})
\leq \marg{R^{\mD_{i_{j-1}}}}{\vec{A}\vec{C}}(\vec{a}\vec{*})\\
\leq &\marg{S^{\mD_{i_{j}}}}{\vec{B}\vec{D}}(\vec{a}\vec{*})   
 \leq \marg{S^{\mD_{\ell-1}}}{\vec{B}\vec{D}}(\vec{a}\vec{*}).
\end{align*}
This entails $\mplus$-rule is not applicable to $\mD_{\ell-1}$ w.r.t. $\sigma,\alpha$, contradicting the assumption that $\chasenext{\mD_{\ell-1}}{\mD_{\ell}}{\sigma,\vec{a}\vec{*}}$. Hence the claim that $\ell' > i_{j-1}$ holds.

It follows that $\ell'> N$. Moreover, by construction we have $\marg{R^{\mD_{{\ell'}-1}}}{\vec{A}}(\vec{a})\neq \marg{R^{\mD_{\ell'}}}{\vec{A}}(\vec{a})$. Hence there exists a chase step of the form $\chasenext{\mD_{{\ell'}-1}}{\mD_{\ell'}}{\tau,\vec{a}\vec{*}}$, for some $\sigma'=\inc{U}{\vec{E}\vec{F}}{R}{\vec{A}\vec{C}}$.  This proves the claim.
\end{proof}

We can now finish the proof of the lemma.
First, by repeated application of Claim \ref{claim:four} (starting from $\ell_n>i_j$ for large enough $j$), we find integers $N<\ell_1< \dots <\ell_n$, relation names $R_1, \dots ,R_n \in \Rel$, and INDs $\sigma_{0}, \dots ,\sigma_{n}\in \Sigma$ (where, for each $i\in [2,n]$, $\sigma_{i}$ is of the form $\inc{R_{i-1}}{\vec{A}_{i-1}\vec{C}_{i-1}}{R_{i}}{\vec{A}_{i}\vec{C}_{i}}$) such that $\sigma_1=\sigma_n$ and
   \begin{enumerate}
   \item\label{it:eka} $\chasenext{\mD_{\ell_{1}-1}}{\mD_{\ell_{1}}}{\sigma_{{1}},\vec{a}\vec{*}}$,
   \item $\marg{R_1^{\mD_{\ell_{1}}}}{\vec{A}_1}(\vec{a})=\marg{R_1^{\mD_{\ell_{2}-1}}}{\vec{A}_1}(\vec{a})$,
   \item\label{it:kolmas} $\chasenext{\mD_{\ell_{2}-1}}{\mD_{\ell_{2}}}{\sigma_{{2}},\vec{a}\vec{*}}$,
   \item $\dots$,
       \item\label{it:viides} $\marg{R_{n-1}^{\mD_{\ell_{n-1}}}}{\vec{A}_n}(\vec{a})=\marg{R_{n-1}^{\mD_{\ell_{n}-1}}}{\vec{A}_n}(\vec{a})$, and
       \item\label{it:vika} $\chasenext{\mD_{\ell_{n}-1}}{\mD_{\ell_{n}}}{\sigma_{n},\vec{a}\vec{*}}$.
   \end{enumerate}
In particular, the $\oplus$-rule is applied w.r.t. the same pair $\sigma,\vec{a}\vec{*}$ on \Cref{it:eka} and \Cref{it:vika}.  
Note that the INDs $\sigma_2,\dots ,\sigma_n$ form a cycle of references.
Recall also $\Sigma$ is closed under $\wsvdash$.
Hence by projection and transitivity we obtain $\inc{R_{i}}{\emptyset}{R_{i-1}}{\emptyset}\in \Sigma$, for $i\in [2,n]$. Consequently,  by weak symmetry $\inv{(\sigma_i)}\in \Sigma$, for $i\in [2,n]$. It then follows by Claim \ref{claim:three} that
\[
\marg{R_{i-1}^{\mD_{\ell_{i}-1}}}{\vec{A}_{i-1}}(\vec{a}) = \marg{R_{i}^{\mD_{\ell_{i}}}}{\vec{A}_{i}}(\vec{a}),
\]
for $i\in [2,n]$.
For example, in the previous list \cref{it:kolmas,it:vika} give respectively rise to inequalities
\begin{align*}
&\marg{R_{1}^{\mD_{\ell_{2}-1}}}{\vec{A}_{1}}(\vec{a}) = \marg{R_{2}^{\mD_{\ell_{2}}}}{\vec{A}_{2}}(\vec{a}),
\text{ and}\\
&\marg{R_{n-1}^{\mD_{\ell_{n}-1}}}{\vec{A}_{n-1}}(\vec{a}) = \marg{R_n^{\mD_{\ell_{n}}}}{\vec{A}_n}(\vec{a})
\end{align*}
Note that we can write $R=R_1=R_n$ and $\vec{A}=\vec{A}_1=\vec{A}_n$ since $\sigma_1=\sigma_n$. Hence we conclude that
\[
\marg{R^{\mD_{\ell_{1}}}}{\vec{A}}(\vec{a}) = \marg{R^{\mD_{\ell_{n}}}}{\vec{A}}(\vec{a})
\]
In particular, $\ell_1 < \ell_n$, and the $\ell_n$th application of the chase increments an $R$-tuple $t$ with $t(\vec{A})=(\vec{a})$. Hence, if we denote $a=\marg{R^{\mD_{\ell_{1}}}}{\vec{A}}(\vec{a})$, and let $b_1, \dots ,b_k\neq 0$ correspond to the incrementations of tuples $t$ with $t(\vec{A})=(\vec{a})$ in the chase steps leading from $\ell_1$ to $\ell_n$, we obtain $k\geq 1$ and  $a \mplus b_1\mplus \dots \mplus b_k = a$.  
Then weak cancellativity entails $b_1 \mplus \dots \mplus b_k=0$, and positivity entails $b_1=\dots =b_k=0$. This creates a contradiction, by which we obtain that the $\oplus$-chase sequence is finite. This concludes the proof.
\end{proof}

\section{Proof of Lemma \ref{lem:wcembed}}

We first prove the following proposition.
\begin{proposition}\label{prop:antisymmetric}
The natural order of a positive weakly cancellative commutative monoid is antisymmetric, i.e., it is a partial order.
\end{proposition}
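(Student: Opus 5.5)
Reflexivity and transitivity of the natural order $\leq$ hold in any monoid, as noted in the preliminaries. So the only property left to establish is antisymmetry. I will argue directly from the definitions, using associativity and commutativity to combine the two witnesses of the order into one equation. Weak cancellativity then kills the combined witness, and positivity splits it into its parts.

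Suppose $a \leq b$ and $b \leq a$. By the definition of the natural order there are $c,d\in \DMo$ with $a \mplus c = b$ and $b \mplus d = a$. Substituting the first equation into the second and using associativity gives
\[
a \mplus (c \mplus d) = (a \mplus c)\mplus d = b \mplus d = a .
\]
Commutativity lets me rewrite this as $(c\mplus d)\mplus a = a$. This is exactly the premise of weak cancellativity (Definition~\ref{def:monoidprops}), so I conclude $c \mplus d = 0$.

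By positivity, $c\mplus d=0$ forces $c = 0$ and $d = 0$. Hence $b = a\mplus c = a \mplus 0 = a$, which is antisymmetry. Together with reflexivity and transitivity, this shows that $\leq$ is a partial order.

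I do not expect a genuine obstacle here. The only point needing care is the order in which the two hypotheses are used: weak cancellativity alone gives only $c\mplus d=0$, and positivity is what then yields $c=0$. This should be stated explicitly, since the later use of the proposition (the order embedding of $\mathbb{N}$ in Lemma~\ref{lem:wcembed}) relies on both properties.
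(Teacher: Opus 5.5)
Your proof is correct and follows the paper's argument essentially step for step: you combine the two witnesses into $a \mplus (c \mplus d) = a$, apply weak cancellativity to get $c \mplus d = 0$, and then use positivity to conclude $c = d = 0$ and hence $a = b$. Your explicit use of commutativity to match the form $x \mplus y = y$ in the definition of weak cancellativity is a small point of extra care; the paper leaves it implicit.
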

\begin{proof}
    Let $a,b$ be elements of the monoid such that $a\leq b$ and $b\leq a$.
    We need to show that $a=b$. By the defininition of a natural order there exist some elements $c,d$ such that 
    \[
    a\mplus c=b \text{ and }b\mplus d = a.
    \]
    By commutativity,
    \(
    a = (a \mplus c) \mplus d = a \mplus (c \mplus d).
    \)
    Now, weak cancellativity entails $c\mplus d=0$, and positivity entails $c=d=0$, leading to $a=b$.
\end{proof}

\setcounter{theorem}{14}
\WCEmbed*
\setcounter{theorem}{32}
\begin{proof}
    Let us write $\Mo=(K,\mplus,0)$.   
    Since $\Mo$ is non-trivial, there exists $b \in K \setminus \{0\}$. We define $f\colon \mathbb{N} \to \Mo$ recursively by setting $f(0) \dfn 0_{\Mo}$ and $f(n + 1) \dfn f(n) \mplus b$ for all $n \geq 0$. %

    We show first that $f$ is a monoid homomorphism. Note that, by construction, $f(0)=0_\Mo$. We now show that $f(m + n) = f(m) \mplus f(n)$ for all $m, n \in \N$, which we prove by induction on $n$ (with $m$ arbitrary).

	\noindent\textbf{Base step.}\,
    For $n = 0$, the following chain of equalities is immediate:
	\[
    f(m + 0) = f(m) = f(m) \mplus 0_\Mo = f(m) \mplus f(0).
    \]

    \noindent\textbf{Induction step.}\,
	Assume the claim holds for $n=k$. Then,
	\begin{align*}
	    &\,f(m + (k + 1)) = f((m + k) + 1) = f(m + k) \mplus b\\
        =&\,(f(m) \mplus f(k)) \mplus b=(f(m) \mplus f(k)) \mplus b \\
        =&\, f(m) \mplus (f(k) \mplus b) = f(m) \mplus f(k + 1)
        \end{align*}
        by the induction hypothesis, associativity, and the construction of $f$. This concludes the induction proof.

    Next, we show that $f$ is injective. 
	Assume that $f(m) = f(n)$ for some $m, n \in \N$. By symmetry, we may assume that $m \geq n$. Let $k = m - n \geq 0$. Since $f$ is a homomorphism, $f(m) = f(n + k) = f(n) \mplus f(k)$. Hence $f(n) = f(n) \mplus f(k)$, and furthermore $f(k) = 0_\Mo$ by weak cancellativity. If $k\neq 0$, then $0_\Mo = f(k)=f((k-1) + 1) =  f(k-1)\mplus b$ by the construction of $f$. By positivity $b=0$, which is a contradiction. Hence $k=0$ and $m=n$, proving that $f$ is an injection.

      Finally, we establish that $a\leq b$ if and only if $f(a)\leq f(b)$. Suppose first $a\leq b$. Then there exists $c\in \mathbb{N}$ such that $a+c=b$. Since $f$ is a monoid homomorphism, we obtain $f(a)\mplus f(c)= f(a+c)=f(b)$ implying $f(a)\leq f(b)$. Suppose then $a \not\leq b$. 
      Since the natural order of the natural numbers is total, we have $b \leq a$, and hence there exists $c$ such that $b+c=a$. 
      Since $f$ is a monoid homomorphism, we get $f(b)\mplus f(c)=f(b+a)=f(a)$ implying $f(b) \leq f(a)$. Now, assume toward a contradiction that $f(a)\leq f(b)$. Since the natural order of $\Mo$ is antisymmetric by Proposition \ref{prop:antisymmetric}, we obtain $f(a)=f(b)$. By injectivity of $f$, this entails $a=b$, contradicting the assumption. Thus $f(a)\not \leq f(b)$. 
\end{proof}

\section{Proofs for Lemmas \ref{lem:wa} and \ref{lem:supppreserved}}

\setcounter{theorem}{22}
\WeakAbsorption*
\setcounter{theorem}{32}
\begin{proof}
    (\Cref{it:one,it:one-half})
    Since $\Mo$ is WA, choose some non-zero $a_0$ and $b_0$ such that $a_0 \mplus b_0 = b_0$. Check $b_0$: If for all $c$, $b_0 \mplus c \neq c$, we are done; take $a = a_0$, $b = b_0$. If there exists $c$ with $b_0 \mplus c = c$, then $(b_0, c)$ is another WA pair. Rename $a_1 = b_0$, $b_1 = c$.
		
		Now check $b_1$: If for all $c$, $b_1 \mplus c \neq c$, done: $a = a_1, b = b_1$. Else, there is $d$ with $b_1 \mplus d = d$, so $(b_1, d)$ is a WA pair. Rename $a_2 = b_1$, $b_2 = d$. Continue.  
		
		If this process never stops, we produce $a_0 \neq 0, a_1, a_2, \dots$ with $a_i \mplus a_{i + 1} = a_{a + 1}$ for all $i \geq 0$, which is the CA property. Since $\Mo$ is not CA, this cannot happen. Thus, the process must stop at some $k$ and we find $a_k, b_k \neq 0$ with $a_k \mplus b_k = b_k$ and no $c$ satisfies $b_k \mplus c = c$. Then, $a = a_k$ and $b = b_k$ are the desired elements.

     (\Cref{it:ntwcpc}) The properties listed here are inherited from $\Mo$.   
     
    (\Cref{it:two})
        Since the natural order readily satisfies the totality axiom, it is total once it satisfies antisymmetry.
        Assuming $\Mo_b$ is equipped with antisymmetric natural order, we prove it is weakly cancellative. 
        Suppose $nb \mplus mb = mb$ for some $n,m\in \N$.  We need to show that $n=0$.
        Assume toward contradiction that $n\geq 1$.
        Note that by the definition of the natural order, $kb\leq (k+1)b$ for all $k\in \N$.
        We obtain
        \[mb \leq (m+1)b\leq \dots \leq (m+n)b = nb \mplus mb \leq mb,\]
        where the last inequality is due to reflexivity of the natural order. By transitivity and antisymmetry of the natural order, this entails
        $ b\mplus mb=(m+1)b =mb$, contradicting item \ref{it:one}. By contradiction, we obtain $n=0$, concluding the proof.

    (\Cref{it:two-half})     %
    Assuming the natural order of $\Mo_b$ is not antisymmetric, we obtain that there exist distinct $kb$ and $nb$ such that $kb \leq nb$ and $nb \leq kb$ for some $k, n \in \N$. Observe that by positivity of $\Mo_b$, both $k$ and $n$ are non-zero. By definition of the natural order $(k + r)b = nb$ and $(n + s)b = kb$ for some $r, s \geq 1$. Then we have that $(k + r + s)b = kb$, i.e. $kb = (k + t)b$ where $t = r + s \geq 2$.
    This concludes the proof.
\end{proof}

\setcounter{theorem}{23}
\SuppPreserved*
\setcounter{theorem}{32}
\begin{proof}
Assume $\chase{\supp(\mD)}{\Sigma})=\supp(\mD)$. 
By Theorem \ref{thm:plusterminates}, the $\chaseplus{\mD}{\Sigma}$ is a $\Mo$-database satisfying $\Sigma$. Let $(\mD_0,\dots ,\mD_n$ the $\mplus$-chase sequence of $\mD$ by $\Sigma$
such that $\mD_n=\chaseplus{\mD}{\Sigma}$. It suffices to prove by induction on $i$ that $\supp(\mD_i)=\supp(\mD)$.

The base step is immediate. For the induction step, consider the induction hypothesis that $\supp(\mD_i)=\supp(\mD)$. Now, $\chasenext{\mD_i}{\mD_{i+1}}{\sigma,\vec{a}}$ for some $\sigma\in \Sigma$ and $\vec{a}\in \Const^{\ar(\sigma)}$.
Now, if $\sigma$ is of the form $\inc{R}{\vec{A}}{S}{\vec{B}}\in \Sigma$ and $\vec{a}\in \Const^{\ar(\sigma)}$, we have $\marg{R^{\mD_i}}{\vec{A}}(\vec{a}) \neq 0$. Then $R^{\supp(\mD_i)}$ contains a tuple $t$ such that $t(\vec{A})=\vec{a}$.
By the induction hypothesis and the assumption, $S^{\supp(\mD_i)}$ contains a tuple $t'$ such that $t'(\vec{B})=\vec{a}$, and $t'(C) = *$ otherwise. By definition, $\supp(\mD_{i+1})$ is otherwise as $\supp(\mD_{i})$, except that $S^{\supp(\mD_{i+1})}=S^{\supp(\mD_{i})}\cup\{t'\}$. Hence, we conclude that $\supp(\mD_{i+1})=\supp(\mD_{i})$, which proves the induction step.
\end{proof}

\end{document}